\newcommand{\indep}{\perp\!\!\perp}
\newtheorem{lemma}{Lemma}
\newtheorem{theorem}{Theorem}
\newtheorem{remark}{Remark}
\newtheorem{corollary}{Corollary}
\newtheorem{definition}{Definition}
\def\bbsmatrix#1{\begin{bsmallmatrix}#1\end{bsmallmatrix}}
\newcommand{\algrule}[1][.2pt]{\par\vskip.5\baselineskip\hrule height #1\par\vskip.5\baselineskip}
\newlength{\leftstackrelawd}
\newlength{\leftstackrelbwd}
\def\leftstackrel#1#2{\settowidth{\leftstackrelawd}%
{${{}^{#1}}$}\settowidth{\leftstackrelbwd}{$#2$}%
\addtolength{\leftstackrelawd}{-\leftstackrelbwd}%
\leavevmode\ifthenelse{\lengthtest{\leftstackrelawd>0pt}}%
{\kern-.5\leftstackrelawd}{}\mathrel{\mathop{#2}\limits^{#1}}}
\DeclareMathOperator*{\rank}{rank}
\DeclareMathOperator*{\minrk}{minrk}
\DeclareMathOperator*{\tri}{tri}
\DeclareMathOperator*{\diag}{diag}
\DeclareMathOperator*{\NS}{{NS}}
\DeclareMathOperator*{\semidet}{{semi-det}}
\title{Can Non-Signaling Assistance Increase the Degrees of Freedom of a Wireless Network?}
\author{Yuhang Yao, Syed A. Jafar\\
{\small Center for Pervasive Communications and Computing (CPCC)}\\
{\small University of California Irvine, Irvine, CA 92697}\\
{\small \it Email: \{yuhangy5, syed\}@uci.edu}
}
\date{}
\begin{document}
\maketitle

\begin{abstract}
An open question recently posed by Fawzi and Ferme [IEEE Transactions on Information Theory 2024], asks whether non-signaling (NS) assistance can increase the capacity of a broadcast channel (BC). We answer this question in the affirmative, by showing that for a certain $K$-receiver BC model, called Coordinated Multipoint broadcast (CoMP BC) that arises naturally in wireless networks, NS-assistance provides multiplicative gains in both capacity and degrees of freedom (DoF), even achieving $K$-fold improvements in extremal cases. Somewhat surprisingly, this is shown to be true even for $2$-receiver broadcast channels that are semi-deterministic and/or degraded. In a CoMP BC, $B$ single-antenna transmitters, supported by a backhaul that allows them to share data, act as one $B$-antenna transmitter, to send independent messages to $K$ receivers, each equipped with a single receive antenna. A fixed and globally known connectivity matrix  specifies  for each transmit antenna, the subset of receivers that are connected to (have a non-zero channel coefficient to) that antenna. 
Besides the connectivity, there is no channel state information at the transmitter. The receivers have perfect channel knowledge. We show that NS-assistance has no DoF advantage in a fully connected CoMP BC. The DoF region is fully characterized for a class of connectivity patterns associated with tree graphs, for which the classical sum-DoF value is shown to be the number of leaf nodes, while the NS-assisted sum-DoF value is the total number of all (non-root) nodes. 
For arbitrary connectivity patterns, the sum-capacity with NS-assistance is  bounded above and below by the min-rank and triangle number of the connectivity matrix, respectively, leading to matching bounds in many cases, e.g., if $\min(B,K)\leq 6$. 
While translations to Gaussian settings are demonstrated, for simplicity most of our results are presented under noise-free, finite-field $(\mathbb{F}_q)$ models. Converse proofs for classical DoF are found by adapting  the Aligned Images bounds to the finite field model. Converse bounds for NS-assisted DoF/capacity extend the same-marginals property to the BC with NS-assistance available to all parties.  Beyond the BC setting, even stronger (unbounded) gains in capacity due to NS-assistance are established  for certain `communication with side-information' settings, such as the fading dirty paper channel.
\end{abstract}
{\let\thefootnote\relax\footnote{Presented in part at the IEEE International Symposium on Information Theory (ISIT) 2025.}\addtocounter{footnote}{-1}}
\newpage
\section{Introduction}
Understanding the prospects of new  technologies (e.g., the quantum internet \cite{caleffi_tutorial2}) requires a re-evaluation of long-established capacity limits in information theory, especially when critical underlying assumptions have to be relaxed. Consider \emph{nonlocality} \cite{PRbox}, as represented by the idea of `\emph{non-signaling (NS) assistance}.' NS-assistance in a network of communication channels is essentially a catch-all framework that allows free access to any resource in addition to those channels, provided that the resource \emph{by itself} (without the use of the channels) does not allow any communication in the network \cite{fawzi2024MAC}. NS-assistance includes all shared multipartite quantum entanglements within its scope. NS-assistance also allows \emph{more} than what is possible with quantum physics, i.e., potential super-quantum theories that may emerge in the future, disallowing only that which is strictly forbidden by the theory of special relativity, namely that information cannot be transmitted  instantaneously (faster than the speed of light). The intriguing contrast between its inherently restrictive (a NS resource is useless for communication by itself) and inclusive (contains quantum entanglement as a special case) features prompts the question \cite{PRbox, Notzel}: \emph{how can NS-assistance improve the capacity of communication channels?}

\subsection{NS-Assisted Capacity Improvements in Prior Works}
If instead of \emph{communication}, the goal was distributed \emph{computation}, then it is known that NS-resources are much too strong, e.g., NS-assistance allows all distributed decision problems to be solved with only one bit of communication \cite{van2013implausible}.
However, for \emph{communication} tasks, the utility of NS resources is not well understood. On one hand, there are several capacity metrics, such as zero-error capacity, arbitrarily varying channel capacity, and maximum coding rate under finite  block-length and error probability constraints, by which NS-assistance (and even its quantum restriction) has been shown to be tremendously beneficial \cite{cubitt2011zero,cubitt2010improving,Notzel}. For example, Cubitt et al. \cite{cubitt2011zero,cubitt2010improving} show that the \emph{zero-error capacity} with non-signaling assistance, $C_0^{\NS}$, can be arbitrarily larger than the zero-error capacity without non-signaling assistance, $C_0$, even providing examples where the latter is zero while the former is non-zero. On the other hand,  in terms of the widely studied \emph{Shannon capacity} metric (requiring vanishing error guarantees for asymptotically large blocklengths)  \cite{shannon1948mathematical}, for a \emph{point to point discrete memoryless channel,} it is known \cite{Bennett_Shor_Smolin_Thapliyal_PRL,matthews2012linear,Barman_Fawzi} that NS-assistance (and therefore quantum entanglement) offers \emph{no advantage at all.}

Beyond the point to point setting,  for discrete memoryless communication \emph{networks}, it is known \cite{Quek_Shor, leditzky2020playing,  seshadri2023separation, pereg2024MAC_QEassist} that NS-assistance \emph{can} increase the Shannon\footnote{By Shannon capacity of a \emph{network}, we refer to its \emph{sum}-rate capacity.} capacity.
Quek and Shor in \cite{Quek_Shor} provide examples of interference channels, where the NS-assisted capacity region is strictly larger than the quantum-entanglement assisted capacity region, which in turn is strictly larger than the classical capacity region. Leditzky et al. \cite{leditzky2020playing} provide examples to show that transmitter-side quantum entanglement in a MAC can strictly enlarge the capacity region. The bounds from \cite{leditzky2020playing} are improved  by Seshadri et al. in \cite{seshadri2023separation} to show that the sum rate increases from at most $3.02$ to $3.17$ bits per transmission. 
 While characterizing NS-correlations in communication networks is recognized as a hard problem in general \cite{NS_Gisin}, a key idea for showing such improvements is to map the winning strategy of a multiplayer pseudo-telepathy game,\footnote{In a pseudo-telepathy game, spatially separated players, who cannot communicate once the game starts, each receive an input and must output an answer such that a joint winning condition is satisfied. It has been shown that there exist games for which no classical strategy can achieve perfect success, yet they can be won with certainty if the players have pre-shared quantum entanglement, conduct input-dependent measurement on their respective quantum systems, and provide answers based on the outputs of the measurements.} such as the magic square \cite{Magicsquare} game, into a coding strategy in a communication network.
 A downside is that the resulting channel models can be too artificial, making the new insights difficult to translate to communication networks  that are commonly encountered, e.g., wireless networks. The following questions  (the emphasis is ours) are posed by Fawzi and Ferme in \cite{fawzi2024MAC}.
\begin{enumerate}
\item[Q1.] ``\emph{Can NS correlations lead to \underline{significant} gains in capacity for \underline{natural} [communication networks\footnote{\label{note1}The questions in \cite{fawzi2024MAC} focus on \emph{MACs}, but are equally justified for \emph{any} `natural' communication network.}\addtocounter{footnote}{-1}\addtocounter{Hfootnote}{-1}]?}"
\item[Q2.] ``\emph{Can we find [\ldots] the \underline{capacity region}\footnotemark ~[\ldots] when NS resources between the parties are allowed?}" 
\end{enumerate}
For multiple access channels (MACs),  significant progress made in \cite{fawzi2024MAC,pereg2024MAC_QEassist} shows, e.g., that NS-assistance does increase the Shannon capacity of a binary adder MAC, which is a {\it natural} model reflecting the superposition property of a wireless uplink. Notably,  the Shannon capacity improvements noted thus far have been relatively modest, e.g.,  from 1.5 to 1.5425 bits/transmission in \cite{fawzi2024MAC}, and the \emph{capacity region} remains open even for the NS-assisted binary-adder MAC. Also, it is known that NS-assistance provides no advantage in Shannon capacity in a MAC \cite{fawzi2024MAC} if independent NS resources are shared pairwise between each transmitter and the receiver.  

In \cite{fawzi2024broadcast}, Fawzi and Ferme study NS-assisted Shannon capacity of broadcast channels. Noting that the BC setting is more challenging than the MAC, \cite{fawzi2024broadcast} establishes two negative (impossibility) results -- 1) that NS-assistance provides no capacity advantage in a BC if the NS resource is shared only among the receivers, and 2) that NS-assistance provides no capacity advantage in a deterministic BC. The study in  \cite{fawzi2024broadcast} concludes with open problems that include the following.
\begin{enumerate}
\item[Q3.] Can NS-assistance improve the capacity region of a \underline{semi-deterministic} and/or \underline{degraded} broadcast channel? Reference \cite{fawzi2024broadcast} hints that the answer is likely to be negative, ``\emph{\ldots could be a crucial first step toward showing that the capacity region for those classes [semi-deterministic or degraded BC]  is the \underline{same with or without NS assistance}.}" 
\item[Q4.] Can NS-assistance improve the capacity region of a \underline{general} broadcast channel? Here, \cite{fawzi2024broadcast} suggests that the answer may be affirmative, ``\emph{\ldots full non-signaling assistance between the three parties could improve the capacity region of general broadcast channels, which is left as a \underline{major} \underline{open} \underline{question}.}"
\end{enumerate}
In this work we answer the aforementioned questions in the affirmative. For a broad class of  BC settings (called Coordinated Multipoint (CoMP) \cite{CoMP}) that arise \emph{naturally} in wireless networks, we characterize the exact \emph{capacity region} with NS-assistance allowed among all parties, and  demonstrate \emph{significant} (multiplicative) gains in capacity (as well as  DoF\footnote{DoF stands for \emph{degrees of freedom}. Informally, the DoF value (formally defined in Sections \ref{sec:Fqmodel} and \ref{sec:gaussianmodel}) represents the ratio of a network’s sum-rate capacity to the capacity of a point to point channel, in the limit of large alphabet (large field-size $q$ for $\mathbb{F}_q$ models (Section \ref{sec:Fqmodel}), and large transmit power $P$ for Gaussian models (Section \ref{sec:gaussianmodel})). }), even when the setting corresponds to a \emph{semi-deterministic} BC. In fact, we show that NS-assistance improves the sum-capacity in the strongest way possible, in the following sense --- there exist both semi-deterministic BCs and degraded BCs (with $2$ receivers) where the capacity advantage due to NS-assistance is either equal to or arbitrarily close to a factor of $2$, and there exist general $K$-user BCs where the advantage is arbitrarily close to a factor of $K$.

\subsection{NS-Assistance in a Wireless Network}
\begin{figure}[t]
\center
\begin{tikzpicture}
\node (myfirstpic) at (0,0) {\includegraphics[width=0.6\textwidth]{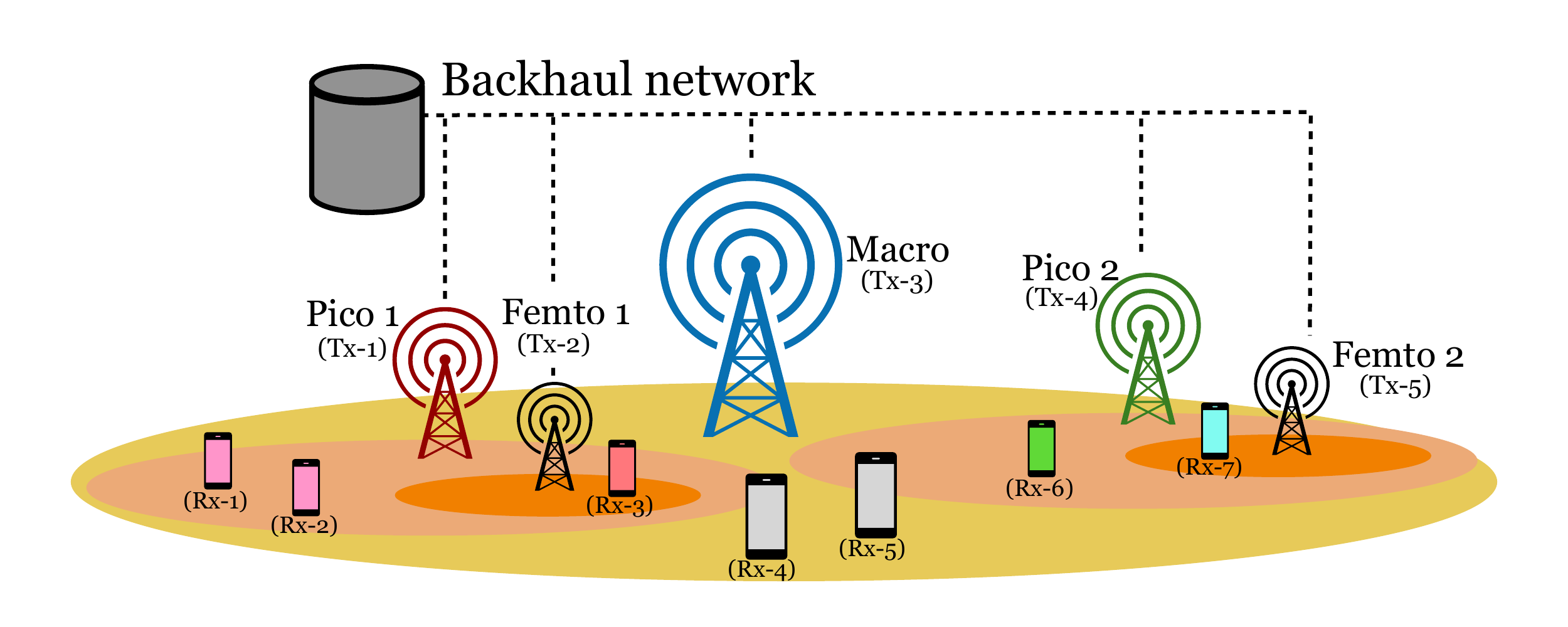}};
\node at (8.1,0){\(
   \kbordermatrix{
    & \mbox{\tiny Tx-$1$} & \mbox{\tiny Tx-$2$} & \mbox{\tiny Tx-$3$} & \mbox{\tiny Tx-$4$}& \mbox{\tiny Tx-$5$} \\
    \mbox{\tiny Rx-$1$} & * & 0 & * & 0 &0 \\
    \mbox{\tiny Rx-$2$} & * & 0 & * & 0 &0 \\
    \mbox{\tiny Rx-$3$} & * & * & * & 0 &0 \\
    \mbox{\tiny Rx-$4$} & 0 & 0 & * & 0  &0\\
    \mbox{\tiny Rx-$5$} & 0 & 0 & * & 0  &0\\
    \mbox{\tiny Rx-$6$} & 0 & 0 & * & *  &0\\
    \mbox{\tiny Rx-$7$} & 0 & 0 & * & *  &*\\
  }={\bf M}
  \)};
\end{tikzpicture}
\caption{A coordinated multipoint broadcast (CoMP BC) setting. A hierarchy of nested macro, pico and femto cells is apparent. Each transmit antenna (represented as a tower) can be heard everywhere within the boundaries of the  colored elliptical area (cell) in which it stands. Receivers are shown as smartphones.  Receivers in the figure are labeled from left to right as Rx-$1$, $\cdots$, Rx-$7$, while transmit antennas, also from left to right as shown, are labeled as Tx-$1$, $\cdots$, Tx-$5$. The $7\times 5$ connectivity matrix ${\bf M}$ has `$*$' entries for connected channels, and $0$'s otherwise.}
\label{fig:CoMP}
\end{figure}
A Coordinated Multipoint \cite{CoMP} Broadcast (CoMP BC) setting is illustrated in Figure \ref{fig:CoMP}. The network is composed of $B$ single-antenna base-station transmitters, Tx-$1$, $\cdots$, Tx-$B$, connected via a high-speed backhaul  network that allows them to share data, and thus act as a single $B$-antenna transmitter, in order to send independent messages to $K$ receivers (users), Rx-$1$, $\cdots$, Rx-$K$. A salient feature of such a network is its connectivity pattern. Because wireless connectivity is distance and transmit power dependent, in this CoMP BC, each transmit antenna can only be heard by receivers in its geographical vicinity, upto a range that depends on its transmit power. The connectivity pattern is fixed (globally  known), and specifies for each transmit antenna, the subset of receivers that are connected to (have a non-zero channel coefficient to) that antenna. In this work, we study such networks under the assumption that there is no channel state information at the transmitters (no CSIT) besides the fixed connectivity pattern.  Channel knowledge at receivers is assumed perfect (perfect CSIR). We explore the capacity and degrees of freedom (DoF) of such CoMP BC networks,  with and without NS-assistance, in order to gauge the advantages provided by the NS resources. As in the general BC framework, we assume that a non-signaling resource (such as shared randomness, entangled quantum systems, or any hypothetical  resource that creates post-quantum NS correlation, e.g., a PR box \cite{PRbox}) is pre-shared between the base-station transmitters and the receivers prior to the start of the communications.

Our results show (Theorem \ref{thm:fullconnect}) that NS-assistance has no DoF advantage  over classical schemes in a fully connected CoMP BC, where every transmit antenna can be heard by every receiver. However, for certain $K$-level hierarchical connectivity patterns (e.g., a femtocell within a picocell within a microcell within a macrocell is a $4$ level hierarchy (Figure \ref{fig:bestcaseexample})), the advantage of NS-assistance is found to be remarkably strong (Theorem \ref{thm:bestcase}), improving DoF by a multiplicative factor of $K$ relative to classical coding schemes. Surprisingly (in light of Q2), this is the case even for a semi-deterministic and/or degraded BC. Recall that Fawzi and Ferme have shown that NS-assistance provides no capacity gain in a deterministic BC \cite{fawzi2024broadcast}. We fully characterize (Theorem \ref{thm:tree_network}) the DoF region for \emph{tree networks} (Definition \ref{def:treenet}) --- a class of CoMP BC connectivity patterns associated with tree graphs (networks shown in Figure \ref{fig:CoMP} and \ref{fig:bestcaseexample} correspond to tree graphs), for which the classical sum-DoF value is shown (Corollary \ref{cor:sumDoF}) to be the number of leaf nodes, while the NS-assisted sum-DoF value is the total number of all (non-root) nodes, the extreme case being a path graph that has only one leaf node, representing the aforementioned vertical hierarchy. 
Sum-DoF (and sum-capacity) of CoMP BC with NS-assistance for arbitrary connectivity patterns are shown to be bounded above (Theorem \ref{thm:Sato_min_rank}) and below (Theorem \ref{thm:triangle_achi})   by the the min-rank (Definition \ref{def:minrk}) and triangle number  (Definition \ref{def:tnumber}) of the connectivity matrix ${\bf M}$, respectively, leading to matching bounds in many cases, e.g., if $\min(B,K)\leq 6$ (Corollary \ref{cor:upto6}). 
While translations to Gaussian settings are included (Theorem \ref{thm:tree_network}), for simplicity most of our results are presented via noise-free, finite-field $(\mathbb{F}_q)$ models. Converse bounds for classical DoF are found (Appendix \ref{sec:AIS}) by adapting  the Aligned Images argument \cite{davoodi2016aligned} to the finite field model. Converse bounds for NS-assisted DoF/capacity extend the same-marginals property to the BC with NS-assistance available to all parties (Theorem \ref{thm:same_marginal}). Beyond the CoMP BC, even stronger (unbounded) gains in capacity due to NS-assistance are shown to be possible for certain communication with side-information settings, such as the fading dirty paper channel (Theorem \ref{thm:fadingdirt}).

 \subsection{Notation}
$\mathbb{R}_+$ is the set of non-negative reals. $\mathbb{N}$ is the set of positive integers. For $n \in \mathbb{N}$, $[n] \triangleq \{1,2,\cdots, n\}$. $A^{[n]}$ is the compact notation for $[A^{(1)},A^{(2)},\cdots, A^{(n)}]$. $A_{[n]}$ is the compact notation for $[A_1,A_2,\cdots, A_n]$. The notation $\diag([a_1,a_2,\cdots, a_K])$ denotes the $K\times K$ diagonal matrix with elements $a_1,a_2,\cdots, a_K$ on the main diagonal.  $\mathbb{F}_q$ is the finite field with order $q$ being a power of a prime. $\mathbb{F}_q^\times$ is defined as $\mathbb{F}_q\setminus\{0\}$, i.e., the set of all non-zero elements of $\mathbb{F}_q$. We write $g(n) = o_n(f(n))$ if $\lim_{n\to \infty} \frac{g(n)}{f(n)}$ $  = 0$ and write $g(n) = O_n(f(n))$ if $\limsup_{n\to \infty} \frac{|g(n)|}{|f(n)|}  < \infty$. We write `Tx' for `Transmitter,' and `Rx-$k$' for `Receiver $k$.' We write Tx-$k$ for transmit antenna $k$. The terms `user' and `receiver' are used interchangeably. For random variables $A,B$, by $A\indep B$ we mean that $A$ is independent of $B$.

\section{Problem Formulation} \label{sec:BC}
We begin with a discrete memoryless broadcast setting, formalize the classical and NS-assisted coding frameworks, and then specialize to the wireless settings that are our main focus.
 \subsection{Discrete Memoryless $K$-user BC}\label{sec:dmcbc}
A discrete memoryless $K$-user broadcast channel is specified by a tuple $(\mathcal{X},(\mathcal{Y}_k)_{k\in[K]},\mathcal{N}_{Y_1\cdots Y_K\mid X})$ where $\mathcal{X}$ is the input alphabet, $\mathcal{Y}_k$ is the output alphabet for Rx-$k$, and the conditional probability of observing any $y_k\in\mathcal{Y}_k$ at each Rx-$k$, $k\in[K]$, for any input symbol $x\in\mathcal{X}$ sent from the Tx, is given by the channel distribution $\mathcal{N}_{Y_1\cdots Y_K \mid X}(y_1,\cdots, y_K\mid x)$. The \emph{marginal distribution} of the channel from the Tx to Rx-$k$, $k\in [K]$, is defined as,
\begin{align}
	\mathcal{N}_{Y_k\mid X}(y_k \mid x) \triangleq \sum_{l\in [K]\setminus\{k\}}\sum_{y_l \in \mathcal{Y}_l} \mathcal{N}_{Y_1\cdots Y_K\mid X}(y_1,\cdots, y_K\mid x).
\end{align}
There are $K$ independent messages, $W_1,\cdots, W_K$, which originate at the Tx, such that $W_k$ is the desired message for Rx-$k$, $k\in [K]$. Let $\mathcal{N}^{\otimes n}$, $n\in\mathbb{N}$, denote $n$   uses of the channel $\mathcal{N}$. Specifically, for the $\tau^{th}$ use, let the input to the channel be denoted as $X^{(\tau)}$ and outputs be denoted as $Y_k^{(\tau)}$ for $k\in [K]$. Given any input $X^{[n]} = x^{[n]}\in\mathcal{X}^n$, the discrete memoryless property of the channel defines the conditional distribution of the outputs $Y_k^{[n]} = y_k^{[n]}\in\mathcal{Y}_k^n$, $\forall k\in[K]$  as,
\begin{align} \label{eq:channel_extension}
	& \mathcal{N}^{\otimes n}(y_1^{[n]},\cdots, y_K^{[n]}\mid x^{[n]}) = \prod_{\tau=1}^n \mathcal{N}(y_1^{(\tau)},\cdots, y_K^{(\tau)}\mid x^{(\tau)}).
\end{align}

\subsection{Classical coding schemes}
A classical coding scheme operates over $n\in\mathbb{N}$ channel uses to transmit the $K$ messages, $W_1,W_2,\cdots,$ $ W_K$, that are distributed uniformly over the  non-empty sets $\mathcal{M}_1,\cdots, \mathcal{M}_K$, respectively. The scheme specifies an encoder $\phi \colon \mathcal{M}_1 \times \cdots \times \mathcal{M}_K \to \mathcal{X}^n$ that is a map  (stochastic in general) from the messages to channel inputs, i.e., $X^{[n]} = \phi(W_1,\cdots, W_K)$, and $K$ decoders,  $\psi_k  \colon \mathcal{Y}_k^n  \to \mathcal{M}_k, \forall k\in [K]$, (also stochastic maps in general) such that 
	$\widehat{W}_k \triangleq \psi_k(Y_k^{[n]})$ is the message decoded by Rx-$k$. The required independence of the stochastic encoding and decoding maps and the channel is specified by the form of the joint distribution of $W_1,\cdots, W_K, X^{[n]}, Y_1^{[n]}, \cdots, Y_K^{[n]}, \widehat{W}_1,\cdots, \widehat{W}_K$, which is expressed as,
\begin{align}
	& \Pr\left(W_{[K]}=w_{[K]},  X^{[n]}=x^{[n]}, Y_{[K]}^{[n]} = y_{[K]}^{[n]}, \widehat{W}_{[K]} = \widehat{w}_{[K]}\right) \notag \\
	& =\frac{1}{\prod_{k=1}^K |\mathcal{M}_k|}  \Pr\left(X^{[n]}=x^{[n]}\mid W_{[K]}=w_{[K]}\right)   ~  \mathcal{N}^{\otimes n}(y_{[K]}^{[n]}\mid x^{[n]})  \Pr(\widehat{W}_{[K]} = \hat{w}_{[K]} \mid Y_{[K]} = y_{[K]}^{[n]}). 
\end{align}
The probability of decoding error for Rx-$k$, and the overall probability of error for a coding scheme, are defined as, respectively,
\begin{align} \label{eq:def_error_probability_general}
	P_{e,k} \triangleq \Pr(\widehat{W}_k \not= W_k)
	, && P_e  \triangleq \max_{k\in [K]} P_{e,k}.
\end{align}
A rate tuple $(R_1,\cdots, R_K)\in \mathbb{R}_+^K$ is said to be \emph{achievable} by classical coding schemes if and only if there exists a sequence (indexed by $n$, shown as superscript $\cdot^{(n)}$) of classical coding schemes such that,
\begin{align}
	\lim_{n\to \infty} P_e^{(n)} = 0,  \label{eq:criteria1}\\
	~\mbox{and }\forall k\in [K],~~\lim_{n\to \infty} \log_2 \frac{|\mathcal{M}_k^{(n)}|}{n} \geq R_k. \label{eq:criteria2}
\end{align}
Note that condition \eqref{eq:criteria2} also implies that all $K$ limits on the LHS of \eqref{eq:criteria2} must exist. 
\begin{definition}[Classical capacity] \label{def:capacity_classical}
	The classical capacity region $\mathcal{C}(\mathcal{N})$ is defined as the closure of the set of all rate tuples achievable by classical coding schemes. In particular, the classical sum-capacity is defined as $C_\Sigma(\mathcal{N}) \triangleq \max_{(R_1,\cdots, R_K) \in \mathcal{C}(\mathcal{N})} (R_1+\cdots+R_K)$.
\end{definition}

\subsection{Non-Signaling Assistance: The NS box}
A $\kappa$-partite NS box $\mathcal{Z}$ with input $A_i \in\mathcal{A}_i$  and  output $B_i \in\mathcal{B}_i$ corresponding to the $i^{th}$ party,  $\forall i \in[\kappa]$, is specified by a conditional p.m.f. $\mathcal{Z} \colon  \mathcal{A}_1\times \cdots \times \mathcal{A}_\kappa \times \mathcal{B}_1 \times \cdots \times \mathcal{B}_\kappa \to [0,1]$,
\begin{align}
	&\mathcal{Z}(b_1,\cdots, b_\kappa \mid a_1,\cdots, a_\kappa)\notag\\
	&\triangleq \Pr(B_1=b_1,\cdots, B_\kappa =b_\kappa \mid A_1=a_1,\cdots, A_\kappa =a_\kappa).
\end{align}
We assume that the output alphabet sets $\mathcal{B}_1, \cdots, \mathcal{B}_\kappa$ have finite cardinality. The non-signaling condition \cite{barrett2005nonlocal, masanes2006general} requires that for all  $\{i_1,\cdots, i_m\} \subseteq [\kappa]$,
\begin{align} \label{eq:def_NS_condition}
	&\Pr(B_{i_1}=b_{i_1},\cdots, B_{i_m} = b_{i_m} \mid A_1=a_1,\cdots, A_\kappa=a_\kappa) \notag \\
	& =\Pr(B_{i_1}=b_{i_1},\cdots, B_{i_m} = b_{i_m} \mid A_{i_1}=a_{i_1},\cdots, A_{i_m} = a_{i_m}),
\end{align}
with the values of the variables chosen from their corresponding alphabets. In words, the condition says that the marginal distribution of the outputs of any subset of parties only depends on the inputs of those parties.\footnote{Intuitively, if this was not the case, then a subset of parties, by observing their own inputs and outputs, would be able to obtain some information about the inputs of the remaining parties, creating an opportunity for communication from the use of the NS box alone, in violation of the non-signaling principle.}
\begin{remark}Note that the NS box has classical inputs and outputs corresponding to each party, and those inputs and outputs are connected via the NS box. Any non-signaling resource, e.g., quantum entanglement that may be shared between the parties, must remain inside the box. This is the desired abstraction to study coding over classical channels. For \emph{quantum} channels (not considered in this work), one may indeed need a different definition of non-signaling assistance, such as  quantum NS assistance that is considered in \cite{QuantumNS}.
\end{remark}

\subsection{NS-assisted coding schemes} \label{sec:NSscheme}
An NS-assisted coding scheme operating over $n\in \mathbb{N}$ channel uses, utilizes a $\kappa =K+1$ partite NS box $\mathcal{Z}$ defined over 
$\mathcal{S} \times \mathcal{T}_1\times \cdots \times \mathcal{T}_K \times \mathcal{U} \times \mathcal{V}_1\times \cdots \times \mathcal{V}_K$, where 
$\mathcal{S} = \mathcal{M}_1\times \cdots \times \mathcal{M}_K$, 
$\mathcal{T}_k = \mathcal{Y}_k^n$ for all $k\in[K]$, 
$\mathcal{U} = \mathcal{X}^n$, and
$\mathcal{V}_k = \mathcal{M}_k$ for all $k\in [K]$.
The Tx is regarded as the $0^{th}$ party, corresponding to input $S\in \mathcal{S}$, and output $U\in \mathcal{U}$ of the NS box, while Rx-$k$, $k\in[K]$, is the $k^{th}$ party, with input $T_k\in \mathcal{T}_k$ and output $V_k\in \mathcal{V}_k$ of the NS box. The Tx sets the input of the NS box $S = (W_1,\cdots, W_K)$, and obtains the output $U$.
The Tx then sets $X^{[n]} = U$ as its transmitted sequence over the $n$ channel uses. Rx-$k$, $k\in[K]$, obtains $Y_k^{[n]}$ as the channel output, sets its input to the NS box as $T_k = Y_k^{[n]}$, and obtains the output $V_k$ from the box, as the decoded message $\widehat{W}_k$.

The joint distribution of $W_1,\cdots, W_K, X^{[n]}, Y_1^{[n]}, \cdots, Y_K^{[n]}, \widehat{W}_1,\cdots, \widehat{W}_K$ is expressed as, 
\begin{align} \label{eq:NS_joint_prob}
	& \Pr\left(W_{[K]}=w_{[K]},  X^{[n]}=x^{[n]}, Y_{[K]}^{[n]} = y_{[K]}^{[n]}, \widehat{W}_{[K]} = \widehat{w}_{[K]}\right) \notag \\
	&= \frac{1}{\prod_{k=1}^K |\mathcal{M}_k|} \mathcal{Z}\left(\left. x^{[n]},\widehat{w}_{[K]}~\right | ~w_{[K]}, y_{[K]}^{[n]}\right) ~ \mathcal{N}^{\otimes n}\left(y_{[K]}^{[n]} \mid x^{[n]}\right).
\end{align}
Probability of error is defined as in \eqref{eq:def_error_probability_general}. A rate tuple $(R_1,\cdots, R_K)\in \mathbb{R}_+^K$ is said to be achievable by NS-assisted coding schemes if and only if there exists a sequence (indexed by $n$) of NS-assisted coding schemes such that \eqref{eq:criteria1},\eqref{eq:criteria2} are satisfied.
\begin{definition}[NS-assisted capacity] \label{def:capacity_NS}
	The NS-assisted capacity region $\mathcal{C}^{\NS}(\mathcal{N})$ is defined as the closure of the set of all rate tuples achievable by NS-assisted coding schemes. In particular, the NS-assisted sum-capacity is defined as $C_\Sigma^{\NS}(\mathcal{N}) \triangleq \max_{(R_1,\cdots, R_K) \in \mathcal{C}^{\NS}(\mathcal{N})} (R_1+\cdots+R_K)$.
\end{definition}

\begin{remark}
There is no loss of generality in the framework presented above, because  all local processing operations carried out by each party can be absorbed into the NS box. This is because any processing done locally by a party is still non-signaling. For example, the framework allows the input to the channel to be a result of joint processing  of $U$ and $W_1,\cdots,$ $ W_K$, i.e., $X^{[n]} = \phi(U, W_1,\cdots, W_K)$ for some mapping $\phi$, and the decoding at Rx-$k$ to be a result of joint processing  of $V_k$ and $Y_k^{[n]}$, i.e., $\widehat{W}_k = \psi_k(V_k, Y_k^{[n]})$ for some mapping $\psi_k$, for $k\in [K]$.  As shown by \cite{allcock2009closed, beigi2015monotone}, the framework also allows  `\emph{wirings}' in which the parties may share multiple NS boxes, and let the input to a box be the output of other boxes in an arbitrary order. 
\end{remark}

It is worth noting that NS-assistance does not improve the capacity of a point to point channel.
\begin{lemma}[\!\!\cite{Bennett_Shor_Smolin_Thapliyal_PRL,matthews2012linear,Barman_Fawzi}\!\! ] \label{lem:p2p_capacity}
	The NS-assisted capacity of  a point to point discrete memoryless channel is equal to its classical capacity. 
\end{lemma}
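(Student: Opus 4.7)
The plan is to establish both directions of $C^{\NS}(\mathcal{N}) = C(\mathcal{N})$. The easy direction, achievability $C^{\NS}(\mathcal{N})\geq C(\mathcal{N})$, is immediate: any classical scheme with (possibly stochastic) encoder $\phi$ and decoder $\psi$ is simulated by the product NS box
\begin{align*}
	\mathcal{Z}(x^{[n]},\hat{w}\mid w,y^{[n]}) \;=\; \Pr(\phi(w)=x^{[n]})\cdot\Pr(\psi(y^{[n]})=\hat{w}),
\end{align*}
which trivially satisfies the non-signaling condition (each marginal depends on only one party's inputs) and, substituted into \eqref{eq:NS_joint_prob}, reproduces the joint distribution of the classical scheme.

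For the converse $C^{\NS}(\mathcal{N})\leq C(\mathcal{N})$, I would follow the hypothesis-testing meta-converse strategy of Polyanskiy--Poor--Verd\'u, as adapted to NS-assisted codes by Matthews \cite{matthews2012linear, Barman_Fawzi}. Given any NS-assisted scheme with $|\mathcal{M}|=M$ and error $\leq\epsilon$, let $P_{X^{[n]}}(x^{[n]})=\tfrac{1}{M}\sum_w\mathcal{Z}_{\Tx}(x^{[n]}\mid w)$ be the induced input distribution, and consider the binary hypothesis test
\begin{align*}
	H_0:~(X^{[n]},Y^{[n]})\sim P_{X^{[n]}}\cdot\mathcal{N}^{\otimes n}, \qquad H_1:~(X^{[n]},Y^{[n]})\sim P_{X^{[n]}}\otimes Q_{Y^{[n]}},
\end{align*}
for an arbitrary auxiliary $Q_{Y^{[n]}}$. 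Running the same NS box and decoder under $H_1$, and marginalizing $X^{[n]}$ out using the NS condition $\sum_{x^{[n]}}\mathcal{Z}(x^{[n]},\hat{w}\mid w,y^{[n]})=\mathcal{Z}_{\mathrm{Rx}}(\hat{w}\mid y^{[n]})$, one finds that $(W,\hat{W})$ are independent with $\hat{W}$ distributed as $g(\hat{w})=\sum_{y^{[n]}}\mathcal{Z}_{\mathrm{Rx}}(\hat{w}\mid y^{[n]})Q_{Y^{[n]}}(y^{[n]})$, hence $\Pr_{H_1}(\hat{W}=W)=1/M$. Combined with $\Pr_{H_0}(\hat{W}=W)\geq 1-\epsilon$, the indicator $\mathbb{1}[\hat{W}=W]$ is a test with type-I success $\geq 1-\epsilon$ and type-II success $\leq 1/M$, yielding by data processing the one-shot bound
\begin{align*}
	\log M \;\leq\; -\log\beta_{1-\epsilon}\bigl(P_{X^{[n]}}\cdot\mathcal{N}^{\otimes n},\; P_{X^{[n]}}\otimes Q_{Y^{[n]}}\bigr).
\end{align*}
Choosing $Q_{Y^{[n]}}$ as the $n$-fold product of a capacity-achieving single-letter output distribution, the standard Stein-type asymptotic analysis of the meta-converse then gives $\limsup_n\tfrac{1}{n}\log M_n\leq\max_{p_X}I(X;Y)=C(\mathcal{N})$.

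The main technical subtlety is that the familiar Markov chain $W\to X^{[n]}\to Y^{[n]}\to\hat{W}$ does \emph{not} hold for NS-assisted schemes --- the joint NS box can correlate $\hat{W}$ with $W$ beyond what $Y^{[n]}$ carries, so naive data-processing and Fano arguments do not apply. What rescues the converse above is exactly the non-signaling marginal condition on the receiver side: after marginalizing $X^{[n]}$, the decoder reduces to a legitimate stochastic function of $y^{[n]}$, which under $H_1$ (where $Y^{[n]}$ is independent of $X^{[n]}$ and hence of $W$) forces $\Pr_{H_1}(\hat{W}=W)=1/M$. Without the Rx-side NS constraint, $\hat{W}$ could depend directly on $W$ even under $H_1$ and this bound would fail. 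The remaining asymptotic step is standard Polyanskiy--Poor--Verd\'u machinery, which I would invoke rather than re-derive in full.
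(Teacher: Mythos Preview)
The paper does not prove Lemma~\ref{lem:p2p_capacity}; it simply cites it from \cite{Bennett_Shor_Smolin_Thapliyal_PRL,matthews2012linear,Barman_Fawzi} as a known result and uses it as a black box throughout. Your sketch is therefore not to be compared against any argument in the paper, but it does faithfully reproduce the meta-converse approach of Matthews~\cite{matthews2012linear} and Barman--Fawzi~\cite{Barman_Fawzi}, and is correct in substance.

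One point worth tightening: when you write ``yielding by data processing the one-shot bound,'' the data-processing step requires that the pair of distributions on $(W,X^{[n]},Y^{[n]},\hat W)$ under $H_0$ and $H_1$ arise from $P_{X^{[n]}}\cdot\mathcal{N}^{\otimes n}$ and $P_{X^{[n]}}\otimes Q_{Y^{[n]}}$ via the \emph{same} stochastic kernel. This is true, and the kernel is
\[
\Pr(W=w,\hat W=\hat w\mid X^{[n]}=x^{[n]},Y^{[n]}=y^{[n]})=\frac{\mathcal{Z}(x^{[n]},\hat w\mid w,y^{[n]})}{M\,P_{X^{[n]}}(x^{[n]})},
\]
which is well-defined on the support of $P_{X^{[n]}}$, sums to $1$ (using the Tx-side NS marginal $\sum_{\hat w}\mathcal{Z}(x^{[n]},\hat w\mid w,y^{[n]})=\mathcal{Z}_{\Tx}(x^{[n]}\mid w)$), and is manifestly independent of whether $Y^{[n]}$ was produced by $\mathcal{N}^{\otimes n}$ or by $Q_{Y^{[n]}}$. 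With this made explicit, the inequality $\beta_{1-\epsilon}(P_{X^{[n]}}\cdot\mathcal{N}^{\otimes n},\,P_{X^{[n]}}\otimes Q_{Y^{[n]}})\le 1/M$ follows, and the asymptotic step is indeed standard.
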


\subsection{Coordinated Multipoint (CoMP) BC: $\mathbb{F}_q$ Model}\label{sec:Fqmodel}
Recall that we are interested in the wireless setting called Coordinated Multipoint Broadcast (CoMP BC).  The CoMP BC involves a transmitter with $B$ antennas (labeled Tx-$1$, $\cdots$, Tx-$B$), $K$ receivers (labeled Rx-$1$,$\cdots$, Rx-$K$), and a connectivity matrix ${\bf M}\in\{0,*\}^{K\times B}$ that specifies a fixed topology of the network. Rx-$i$ is connected to (has a non-zero channel coefficient to) Tx-$j$ if $M_{ij}=*$, and is \emph{not} connected to  (has a zero channel coefficient to) Tx-$j$ if $M_{ij}=0$, for all $(i,j)\in[K]\times[B]$.
For ease of exposition,\footnote{Noiseless finite field models are commonly employed to approximate the capacity of wireless networks \cite{Avestimehr_Diggavi_Tse, Jafar_TIM}, and are understood to be particularly meaningful in the high SNR limit, i.e., for DoF analyses.} we will primarily consider a finite field $(\mathbb{F}_q)$ model. In this model, over the $\tau^{th}$ channel use, Rx-$k$ obtains the channel output,
\begin{align}
\overline{Y}_k^{(\tau)}&=\left({Y}_k^{(\tau)},{\bf G}_k \triangleq (G_{kj}^{(\tau)})_{j\in [B]}\right),~~\forall k\in[K],\label{eq:Ykout}\\
\begin{bmatrix}
Y_1^{(\tau)}\\\vdots\\Y_K^{(\tau)}
\end{bmatrix}
&=
\begin{bmatrix}
		G_{11}^{(\tau)} & G_{12}^{(\tau)} & \cdots & G_{1B}^{(\tau)} \\
		G_{21}^{(\tau)} & G_{22}^{(\tau)} & \cdots & G_{2B}^{(\tau)} \\
		\vdots & \vdots & \ddots & \vdots \\
		G_{K1}^{(\tau)} & G_{K2}^{(\tau)} & \cdots & G_{KB}^{(\tau)} 
	\end{bmatrix}
\begin{bmatrix}
X_1^{(\tau)}\\\vdots\\X_{B}^{(\tau)}
\end{bmatrix}.\label{eq:def_channel_Fq}
\end{align}
Here $X_b^{(\tau)}\in\mathbb{F}_q$ is the signal sent from Tx-$b$, and $G_{ij}^{(\tau)}$ is the channel coefficient from Tx-$j$ to Rx-$i$. Over each channel use, the channel coefficients $G_{ij}^{(\tau)}$ are generated i.i.d. uniform from $\mathbb{F}_q^\times$ if $M_{ij}=*$, and held fixed at $0$ if $M_{ij}=0$. Note that including the channel coefficients in the output at Rx-$k$ as in \eqref{eq:Ykout} is simply a way to model perfect channel state information at the receivers (perfect CSIR). The channel connectivity matrix ${\bf M}$ remains fixed across channel uses and is globally known. Channel state information at the transmitter  is assumed unavailable (no-CSIT) beyond the fixed connectivity matrix ${\bf M}$, i.e., the random realizations of the non-zero coefficients are unknown to the transmitter. 

\begin{definition}[Fully Connected]\label{def:fullconnect}
A fully connected CoMP BC is one where  every Rx is connected to every Tx, i.e.,  $M_{ij}=*$ for all $i\in[K],j\in[B]$.
\end{definition}

\begin{definition}[Tree Network]\label{def:treenet}
A Tree Network is a CoMP BC with channel connectivity corresponding to a   \emph{rooted tree} graph $\mathcal{T}$. Apart from the root node (an imaginary Tx labeled Tx-$0$, assumed connected to every Rx) which serves only to orient the graph, there are $B$ vertices in $\mathcal{T}$, corresponding to Tx-$1$, $\cdots$, Tx-$B$. The defining condition of a tree network is that its connectivity matrix ${\bf M}$ must satisfy the following two properties: 
\begin{enumerate}
\item For every Rx-$k$, $k\in[K]$, the set of all Tx nodes to which it is connected, $\{\mbox{Tx-$j$}: M_{kj}=*\}$, comprise a \emph{path graph}, i.e., they are all on the \emph{same} path from the root-node. 
\item For every Rx-$k$, $k\in[K]$, if Rx-$k$ is connected to Tx-$j$, then Rx-$k$ must also be connected to all ancestors of Tx-$j$.
\end{enumerate}
Define depth(Tx-$j$) as the length of the path from the root node to Tx-$j$. For each Rx-$k$, define its `associated Tx', labeled Tx(Rx-$k$), as the one with the greatest depth among all Tx that are connected to Rx-$k$. Formally, 
\begin{align}
\mbox{Tx(Rx-$k$)}\triangleq \arg\max_{\mbox{\footnotesize Tx-$j$}: M_{kj}=*}\mbox{depth}(\mbox{Tx-$j$}).\label{eq:associate}
\end{align}
To avoid degenerate scenarios, we assume $B>0, K>0$,  that there is at least one Rx associated with each Tx-$b$, $\forall b\in[B]$, and there is at least one  (thus a unique)  Tx associated with each Rx-$k$, $\forall k\in[K]$. The number of leaf nodes of $\mathcal{T}$ is denoted as $\ell(\mathcal{T})$. A tree graph $\mathcal{T}$ is called a `\emph{path graph}' if it has only one leaf node, $\ell(\mathcal{T})=1$.
\end{definition}

Recall the classical and NS-assisted capacity regions in Definition \ref{def:capacity_classical} and Definition \ref{def:capacity_NS}.
Let $\mathcal{C}(q), \mathcal{C}^{\NS}(q)$ denote the classical and NS-assisted capacity regions for the CoMP BC $\mathbb{F}_q$ model, respectively.
The exact $\mathcal{C}(q)$ may be intractable. To obtain a meaningful approximation, we invoke the notion of degrees of freedom (DoF), studied widely in wireless networks, translated to the $\mathbb{F}_q$ model. Intuitively, as the ratio of the sum-capacity of a communication network to the capacity of a single interference-free point-to-point channel, DoF represent the number of  interference-free ($q$-ary) channels that can be created in the network for the desired messages, as $q$ approaches infinity. Formally, a DoF tuple $$(d_1,d_2,\cdots, d_K)\in \mathbb{R}_+^K$$ is said to be achievable by classical/NS-assisted coding schemes if and only if for all $q\geq 2$, $$\exists (R_1(q),R_2(q),\cdots, R_K(q))\in \mathcal{C}^{\#}(q)$$ such that
\begin{align}
	\lim_{q\to \infty} \frac{R_k(q)}{\log_2 q} \geq  d_k, ~~ \forall k\in [K],\label{eq:limq}
\end{align}
where $\#$ is a placeholder that may be replaced with `NS' if NS-assistance is allowed. The limit $q\rightarrow\infty$ in \eqref{eq:limq} is defined over the sequence of \emph{all} feasible $q$ values, i.e., all natural numbers that can be expressed as powers of prime numbers, arranged in ascending order, i.e., $2,3,4,5,7,8,9,11,$ $13, 16,\cdots$.

The classical DoF region $\mathcal{D}$ is defined as the closure of all DoF tuples achievable by classical coding schemes. The NS-assisted DoF region $\mathcal{D}^{\NS}$ is defined as the closure of all DoF tuples achievable by NS-assisted coding schemes. In particular, the classical sum-DoF  $d_{\Sigma}$, the NS-assisted sum-DoF $d_{\Sigma}^{\NS}$ are defined as $d_{\Sigma} \triangleq \max_{(d_1,\cdots, d_K)\in \mathcal{D}}(d_1+\cdots +d_K)$, $d_{\Sigma}^{\NS} \triangleq \max_{(d_1,\cdots, d_K)\in \mathcal{D}^{\NS}}(d_1+\cdots +d_K)$, respectively.

\subsection{CoMP BC: Gaussian model}\label{sec:gaussianmodel}
The Gaussian model is similar to the $\mathbb{F}_q$ model, except the symbols and operations are over $\mathbb{R}$ instead of $\mathbb{F}_q$, there is additive Gaussian noise at the receivers, and the channel inputs are subject to a transmit power constraint. Over the $\tau^{th}$ channel use, Rx-$k$ obtains the channel output,
\begin{align}
\overline{Y}_k^{(\tau)}&=\left({Y}_k^{(\tau)},{\bf G}_k \triangleq (G_{kj}^{(\tau)})_{j\in [K]}\right),~~\forall k\in[K],\label{eq:YkoutG}\\
\begin{bmatrix}\label{eq:def_channel_Gaussian}
Y_1^{(\tau)}\\\vdots\\Y_K^{(\tau)}
\end{bmatrix}
&=
\begin{bmatrix}
		G_{11}^{(\tau)} & G_{12}^{(\tau)} & \cdots & G_{1K}^{(\tau)} \\
		G_{21}^{(\tau)} & G_{22}^{(\tau)} & \cdots & G_{2K}^{(\tau)} \\
		\vdots & \vdots & \ddots & \vdots \\
		G_{K1}^{(\tau)} & G_{K2}^{(\tau)} & \cdots & G_{KK}^{(\tau)} 
\end{bmatrix}
\begin{bmatrix}
X_1^{(\tau)}\\\vdots\\X_{K}^{(\tau)}
\end{bmatrix}
+
\begin{bmatrix}
Z_1^{(\tau)}\\\vdots\\Z_K^{(\tau)}
\end{bmatrix}.
\end{align}
Here $X_k^{(\tau)}\in\mathbb{R}$ is the signal sent from Tx-$k$.   $Z_k^{(\tau)} \sim \mathcal{N}(0,1)$ are i.i.d. Gaussian noise terms with zero mean and unit variance. $G_{ij}^{(\tau)}$ is the channel coefficient from Tx-$j$ to Rx-$i$, and is held fixed at $0$ if $M_{ij}=0$. For each connected link, i.e., $(k,j)$ such that $M_{kj} =\ast$, the channel coefficient values are bounded away from zero and infinity, $1/c\leq |G_{kj}^{(\tau)}| \leq c$ for a positive constant $c$, and are generated i.i.d. according to a probability density function $f(G)$ whose peak value is bounded by some constant, i.e., $\sup f(\cdot)=f_{\max}<\infty$.

For coding schemes spanning $n$ channel uses, the inputs must satisfy the transmit power constraint,
\begin{align} \label{eq:power_constraint}
	\mathbb{E}\Big[ \frac{1}{n} \sum_{\tau=1}^n \Big( |X_1^{(\tau)}|^2 + \cdots + |X_{K}^{(\tau)}|^2 \Big) \Big] \leq P,
\end{align}
i.e., the average transmit power is upper bounded by $P$. 

Let $\mathcal{C}(P), \mathcal{C}^{\NS}(P)$ denote the classical and NS-assisted capacity regions with respect to the average transmit power constraint $P$. A degree of freedom (DoF) tuple $(d_1,d_2,\cdots, d_K)\in \mathbb{R}_+^K$ is  achievable by classical/NS-assisted coding schemes if and only if for all $P>0$, $\exists (R_1(P),\cdots, R_K(P))\in \mathcal{C}^{\#}(P)$ such that 
	$\lim_{P \to \infty} \frac{R_k(P)}{\frac{1}{2}\log_2 P} \geq  d_k, ~~ \forall k\in [K]$,
where $\#$ is a placeholder that may be replaced with `NS' if NS-assistance is allowed. The classical DoF region $\mathcal{D}$ is defined as the closure of all DoF tuples achievable by classical coding schemes. The NS-assisted DoF region $\mathcal{D}^{\NS}$ is defined as the closure of all DoF tuples achievable by NS-assisted coding schemes. In particular, the classical sum-DoF  $d_{\Sigma}$, the NS-assisted sum-DoF $d_{\Sigma}^{\NS}$ are defined as $d_{\Sigma}  \triangleq \max_{(d_1,\cdots, d_K)\in \mathcal{D}}(d_1+\cdots +d_K)$, $d_{\Sigma}^{\NS} \triangleq \max_{(d_1,\cdots, d_K)\in \mathcal{D}^{\NS}}(d_1+\cdots +d_K)$, respectively.

\section{Results}
\subsection{Fully Connected CoMP BC}
Our first result, stated in the following theorem, shows that NS-assistance does not improve the DoF region, or even the capacity region, in the fully-connected (Definition \ref{def:fullconnect}) CoMP BC.
\begin{theorem}[Fully Connected CoMP BC Capacity and DoF Regions]\label{thm:fullconnect} 
For a fully-connected CoMP BC network, the capacity regions with and without NS-assistance are characterized, under the $\mathbb{F}_q$ model as,
\begin{align}
\mathcal{C}^{\NS}(q)=\mathcal{C}(q)= \left\{ (R_1,\cdots, R_K)  \in \mathbb{R}_+^K ~\left|	~~	R_1+R_2+\cdots+R_K\leq C_1(q)\right.\right\},
\end{align}
and under the Gaussian model as,
\begin{align}
\mathcal{C}^{\NS}(P)=\mathcal{C}(P)= \left\{ (R_1,\cdots, R_K)  \in \mathbb{R}_+^K ~\left|	~~	R_1+R_2+\cdots+R_K\leq C_1(P)\right.\right\},
\end{align}
where $C_1(q), C_1(P)$ represent the single-user capacity of Rx-$1$ under the two models. Note that $C_1(q)=\log_2(q)$, and $C_1(P) = \max_{{\bf Q}}\mathbb{E}\big[ \frac{1}{2} \log_2 (1+{\bf G}_1 {\bf Q} {\bf G}_1^\top) \big]$, where the maximization is over all positive semi-definite ${\bf Q}$ with ${\rm Tr}({\bf Q}) = P$ \cite{goldsmith2003capacity}. The corresponding DoF regions, under both the $\mathbb{F}_q$ model and the Gaussian model, are characterized as,
\begin{align}
\mathcal{D}^{\NS}=\mathcal{D}= \left\{ (d_1,d_2,\cdots, d_K)  \in \mathbb{R}_+^K ~\left|	~~	d_1+d_2+\cdots+d_K\leq 1\right.\right\}.
\end{align}
\end{theorem}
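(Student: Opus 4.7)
The plan is to establish the sum-rate bound $\sum_k R_k \leq C_1$ matchingly from above and below, from which the DoF region follows by normalization. For achievability I would use plain single-user transmission combined with time-sharing: fix any $k \in [K]$, have the Tx set $X_1 = W_k$ (encoded into $\mathbb{F}_q$ symbols, or into real symbols satisfying the power constraint in the Gaussian case) and $X_2 = \cdots = X_B = 0$. Since the network is fully connected, $G_{k1}$ is non-zero and is known at Rx-$k$ via CSIR, so the effective channel to Rx-$k$ reduces to a point-to-point channel of capacity $C_1$. This achieves the corner point $(0,\ldots,C_1,\ldots,0)$ for each $k$; time-sharing among the $K$ corners then fills out the full simplex $R_1 + \cdots + R_K \leq C_1$.

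For the classical converse, I would exploit the symmetry of the fully-connected model: the rows $\mathbf{G}_k$ of the channel matrix are i.i.d.\ across $k \in [K]$, so the marginal conditional distribution of $\overline{Y}_k$ given $X$ is the same for every $k$. Since the classical BC capacity region depends only on these marginals, I can couple the receiver outputs to be identically equal, $\overline{Y}_1 = \cdots = \overline{Y}_K$, without changing $\mathcal{C}$. The resulting physically degraded BC forces the entire message tuple $(W_1,\ldots,W_K)$ to be decodable from $\overline{Y}_1^{[n]}$ alone, and Fano's inequality together with the trivial bound $H(Y_1 \mid \mathbf{G}_1) \leq \log_2 q$ (respectively the scalar Gaussian single-user bound) then gives $\sum_k R_k \leq C_1$.

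For the NS-assisted converse, the same coupling argument should work, but the same-marginals step now has to be justified in a setting where an NS box can correlate inputs and outputs of all parties. Theorem \ref{thm:same_marginal} is precisely designed for this purpose, and I would invoke it to make the identical reduction $\overline{Y}_1 = \cdots = \overline{Y}_K$. After coupling, the problem collapses to a point-to-point channel from the Tx to Rx-$1$, and Lemma \ref{lem:p2p_capacity} (NS-assistance offers no point-to-point capacity gain) supplies the matching bound on $\mathcal{C}^{\NS}$. The DoF statement then follows by dividing through by $\log_2 q$ (respectively $\tfrac{1}{2}\log_2 P$) and passing to the limit, using $C_1(q) = \log_2 q$ exactly and $C_1(P) = \tfrac{1}{2}\log_2 P + O(1)$ asymptotically for the Gaussian single-user channel.

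The hard part, and indeed the only non-routine step, will be the invocation of Theorem \ref{thm:same_marginal}: for classical BCs the insensitivity to joint laws is immediate from the structure of independent per-receiver decoders, but under NS-assistance the shared box can in principle exploit joint statistics of $(\overline{Y}_1,\ldots,\overline{Y}_K)$ beyond the marginals, so replacing the true joint law by a coupled one without enlarging $\mathcal{C}^{\NS}$ requires genuine work. Everything else --- the single-user achievability, the classical coupling argument, and the DoF translation --- is essentially textbook once the symmetry of the fully connected model has been recognized.
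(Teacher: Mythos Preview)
Your proposal is correct and follows essentially the same route as the paper: achievability by time-sharing among single-user corner points, converse via the same-marginals property (Theorem~\ref{thm:same_marginal}) to couple all receivers to a common output, then cooperation plus Lemma~\ref{lem:p2p_capacity} for the NS case. One cosmetic point: in the Gaussian model your single-antenna scheme need not achieve the full MISO no-CSIT single-user capacity $C_1(P)$ exactly, so for the capacity (as opposed to DoF) claim you should simply invoke the single-user optimal code rather than fixing $X_2=\cdots=X_B=0$; the paper handles this by saying the corner points are ``trivially achievable'' without committing to a specific antenna allocation.
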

The key to Theorem \ref{thm:fullconnect} is the \emph{statistical-equivalence}, or the \emph{same-marginals} property of the receivers, which makes them indistinguishable from the transmitter's perspective. The \emph{same-marginals} argument \cite{sato1978outer,Dimacs_BC} is a standard line of reasoning in classical literature that makes use of the fact that the probabilities of error experienced by the receivers for an arbitrary (classical) coding scheme  depend only on the marginal channel distribution of each receiver. In our fully connected CoMP BC since the marginal distributions are identical across receivers, the same-marginals property ensures that the capacity and DoF regions remain unchanged if every receiver has exactly the same channel realizations as Rx-$1$, in every channel-use. Once all receivers observe the same channel output, even allowing full cooperation among the receivers cannot change the capacity or DoF regions. Therefore the sum-capacity cannot exceed the single-user capacity, and any allocation of rates across messages that does not exceed the single-user capacity is trivially achievable, implying immediately the classical capacity and DoF regions in Theorem \ref{thm:fullconnect}. Beyond the classical case, in order to show that NS-assistance cannot improve the capacity and DoF regions, two additional facts are needed. 
\begin{enumerate}[align=left]
\item[Fact 1:] For a point to point (single user) channel, NS-assistance cannot improve the capacity. Fortunately, this non-trivial fact is already well-established, as noted in Lemma \ref{lem:p2p_capacity} \cite{Bennett_Shor_Smolin_Thapliyal_PRL,matthews2012linear,Barman_Fawzi}.
\item[Fact 2:] The same-marginals property still holds under NS-assistance, even when NS-assistance is available to all parties.\footnote{The same-marginals property has been shown in \cite{fawzi2024broadcast}  for a BC with NS-assistance available to only the decoders, i.e., receivers, when the metric of interest is the sum of probabilities of error of the receivers.}
\end{enumerate}
With these two facts, the proof of Theorem \ref{thm:fullconnect} is straightforward in the NS-assisted setting (Use Fact 2 to make the channels identical across receivers, reduce to single receiver by allowing cooperation among receivers, then use Fact 1). So it only remains to establish Fact 2. This is done in the following theorem, not just for CoMP BC, but for the general BC setting of Section \ref{sec:dmcbc}.

\begin{theorem}[Same-marginals Property] \label{thm:same_marginal}
	Given two $K$-user BCs $\mathcal{N}_{Y_1\cdots Y_K\mid X}$ and $\widetilde{\mathcal{N}}_{Y_1\cdots Y_K\mid X}$, if their marginal distributions are the same, i.e., $\mathcal{N}_{Y_k\mid X}=\widetilde{\mathcal{N}}_{Y_k\mid X}$ for all $k\in [K]$, then for any NS-assisted (or classical) coding scheme, $P_{e,k} = \widetilde{P}_{e,k}$, where $P_{e,k},\widetilde{P}_{e,k}$ denote the probability of error of the $k^{th}$ message.
\end{theorem}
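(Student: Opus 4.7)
The plan is to show that, under any NS-assisted coding scheme, the joint distribution of the quadruple $(W_{[K]},X^{[n]},Y_k^{[n]},\widehat{W}_k)$ depends on the channel only through the single-user marginal $\mathcal{N}_{Y_k\mid X}$. Since $P_{e,k}=\Pr(\widehat{W}_k\neq W_k)$ is a function of this joint distribution alone, the equality $P_{e,k}=\widetilde{P}_{e,k}$ will follow at once.

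First, I would start from the joint distribution \eqref{eq:NS_joint_prob} prescribed by the NS-assisted framework and marginalize out the symbols $\widehat{W}_{[K]\setminus\{k\}}$. This produces the factor
\begin{align}
\mathcal{Z}_k\!\left(x^{[n]},\widehat{w}_k\mid w_{[K]},y_{[K]}^{[n]}\right)\triangleq\sum_{\widehat{w}_{[K]\setminus\{k\}}} \mathcal{Z}\!\left(x^{[n]},\widehat{w}_{[K]}\mid w_{[K]},y_{[K]}^{[n]}\right),
\end{align}
which is the joint distribution of the NS-box outputs of just two parties, namely the Tx (party $0$, with output $X^{[n]}$) and Rx-$k$ (party $k$, with output $\widehat{W}_k$), conditioned on \emph{all} the inputs to the box. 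This is exactly the marginal to which the non-signaling condition \eqref{eq:def_NS_condition} applies with $\{i_1,i_2\}=\{0,k\}$. Invoking that condition yields
\begin{align}
\mathcal{Z}_k\!\left(x^{[n]},\widehat{w}_k\mid w_{[K]},y_{[K]}^{[n]}\right)=\mathcal{Z}_k^\star\!\left(x^{[n]},\widehat{w}_k\mid w_{[K]},y_k^{[n]}\right),
\end{align}
so that the other receivers' channel outputs $y_{[K]\setminus\{k\}}^{[n]}$ have been purged from the NS-box factor.

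Second, I would marginalize out the channel outputs $Y_{[K]\setminus\{k\}}^{[n]}$. Because of the memoryless product form in \eqref{eq:channel_extension}, the sum $\sum_{y_{[K]\setminus\{k\}}^{[n]}}\mathcal{N}^{\otimes n}(y_{[K]}^{[n]}\mid x^{[n]})$ collapses to $\prod_{\tau=1}^{n}\mathcal{N}_{Y_k\mid X}(y_k^{(\tau)}\mid x^{(\tau)})$, which depends on $\mathcal{N}$ solely through its $k$-th marginal. Combining the two steps,
\begin{align}
\Pr\!\left(W_{[K]}\!=\!w_{[K]},X^{[n]}\!=\!x^{[n]},Y_k^{[n]}\!=\!y_k^{[n]},\widehat{W}_k\!=\!\widehat{w}_k\right)
=\tfrac{1}{\prod_j|\mathcal{M}_j|}\,\mathcal{Z}_k^\star\!\!\left(x^{[n]},\widehat{w}_k\mid w_{[K]},y_k^{[n]}\right)\prod_{\tau=1}^{n}\mathcal{N}_{Y_k\mid X}(y_k^{(\tau)}\mid x^{(\tau)}).
\end{align}
The right-hand side is manifestly independent of the joint structure of $\mathcal{N}$ beyond its $k$-th marginal, so the same expression is produced by $\widetilde{\mathcal{N}}$. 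Marginalizing further to $(W_k,\widehat{W}_k)$ preserves this property, which gives $P_{e,k}=\widetilde{P}_{e,k}$. The classical case follows as a special case in which $\mathcal{Z}$ factorizes across parties.

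The main obstacle, and the conceptually novel step compared to the classical argument, is the handling of the NS box: a priori the decoding rule at Rx-$k$ can depend on all channel outputs $Y_{[K]}^{[n]}$ through shared NS correlations, which threatens the whole premise of same-marginals. The crux is the observation that the particular two-party marginal of $\mathcal{Z}$ relevant to the error event at Rx-$k$ involves only the Tx and Rx-$k$, and is therefore, by the non-signaling property \eqref{eq:def_NS_condition}, functionally independent of the inputs $Y_{[K]\setminus\{k\}}^{[n]}$ supplied by the other receivers. Once this is recognized, the rest is bookkeeping on the joint distribution \eqref{eq:NS_joint_prob}.
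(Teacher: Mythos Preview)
Your proposal is correct and follows essentially the same approach as the paper's proof: both arguments marginalize \eqref{eq:NS_joint_prob} over $\widehat{W}_{[K]\setminus\{k\}}$, invoke the non-signaling condition \eqref{eq:def_NS_condition} on the two-party subset $\{\text{Tx},\text{Rx-}k\}$ to eliminate the dependence on $y_{[K]\setminus\{k\}}^{[n]}$, and then collapse the channel factor to $\mathcal{N}_{Y_k\mid X}^{\otimes n}$ via \eqref{eq:channel_extension}. Your formulation is slightly more abstract in that you exhibit the full joint law of $(W_{[K]},X^{[n]},Y_k^{[n]},\widehat{W}_k)$ as a function of only $\mathcal{N}_{Y_k\mid X}$, whereas the paper carries out the identical computation directly on $1-P_{e,k}$; the substance is the same.
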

The proof of Theorem \ref{thm:same_marginal} is provided in Appendix \ref{sec:proofsamemarginal}. Compared to \cite{fawzi2024broadcast}, note that Theorem \ref{thm:same_marginal} applies to each user's error probability, and more importantly, holds even with NS-assistance to all parties (including the transmitter).  

\subsection{Semi-deterministic BC}
Before proceeding to more general classes of CoMP BCs in the subsequent sections, let us consider a $K=2$ user CoMP BC setting in this section. This simple $K=2$ setting turns out to be  interesting because 1) it answers the open question of \cite{fawzi2024broadcast} by demonstrating a strict sum-capacity advantage due to NS-assistance in a BC, 2) it is a semi-deterministic BC, thus proving that NS-assistance improves capacity even in a semi-deterministic BC (recall that NS-assistance does not improve capacity in a deterministic BC \cite{fawzi2024broadcast}), and 3) it shows that the improvement in sum-capacity  due to NS-assistance can be as high as a factor of $2$, which is also the largest possible value.

Let us first provide the formal definition of a semi-deterministic BC.
\begin{definition}[Semi-deterministic BC]
	A 2-user broadcast channel $(\mathcal{X}, (\mathcal{Y}_k)_{k\in [2]}, \mathcal{N}_{Y_1Y_2\mid X})$ is semi-deterministic (cf. \cite[Sec. 8.3.1]{NIT}) if and only if there exists $k\in \{1,2\}$ such that $Y_k$ is determined by $X$, i.e., the marginal distribution $\mathcal{N}_{Y_k\mid X}(y_k\mid x) \in \{0,1\}$ for all $x\in \mathcal{X}, y_k \in \mathcal{Y}_k$.
\end{definition}

\begin{theorem}\label{thm:semidet}
Let $\mathfrak{N}^{\semidet}$ be the set of all ($2$-user) semi-deterministic broadcast channels, and $C_\Sigma^{\NS}(\mathcal{N})$, $C_\Sigma(\mathcal{N})$ the sum-capacity of $\mathcal{N}\in\mathfrak{N}^{\semidet}$ with and without NS-assistance, respectively. Then,
\begin{align}
\sup_{\mathcal{N}\in\mathfrak{N}^{\semidet}}\frac{C_\Sigma^{\NS}(\mathcal{N})}{C_{\Sigma}(\mathcal{N})}=2.
\end{align}
\end{theorem}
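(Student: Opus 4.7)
The plan is to prove the upper bound $\sup \leq 2$ using a pair of generic inequalities that hold for any two-user BC (not only semi-deterministic), and then match it with a lower bound by exhibiting a sequence of semi-deterministic BCs whose ratio approaches $2$.

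For the upper bound, write $C_k \triangleq C(\mathcal{N}_{Y_k\mid X})$ for the classical point-to-point capacity of the marginal channel from the Tx to Rx-$k$, $k\in\{1,2\}$. Classically, $C_{\Sigma}(\mathcal{N})\geq\max(C_1,C_2)$, since allocating the entire sum-rate to a single receiver (by setting the other message to a constant) reduces the BC to its p2p marginal channel. With NS-assistance, any scheme achieving rate $R_k$ to Rx-$k$ also gives, upon marginalizing the tripartite NS box over the input and output of the idle receiver, a valid bipartite NS-assisted scheme on the marginal channel $\mathcal{N}_{Y_k\mid X}$ at the same rate; Lemma \ref{lem:p2p_capacity} then forces $R_k \leq C_k$, so $C_{\Sigma}^{\NS}(\mathcal{N}) \leq C_1 + C_2$. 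Combining,
\begin{align*}
\frac{C_{\Sigma}^{\NS}(\mathcal{N})}{C_{\Sigma}(\mathcal{N})} \;\leq\; \frac{C_1+C_2}{\max(C_1,C_2)} \;\leq\; 2
\end{align*}
for every two-user BC, and in particular for every $\mathcal{N}\in\mathfrak{N}^{\semidet}$.

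For the matching lower bound, I would instantiate a sequence of $\mathbb{F}_q$ CoMP BCs (indexed by the field size $q$) on the path-graph tree network with $B=K=2$ and connectivity matrix $\mathbf{M}=\bbsmatrix{* & 0 \\ * & *}$. To turn this into a bona fide semi-deterministic BC rather than a fading BC with CSIR, I would freeze the single top-of-hierarchy coefficient at $G_{11}=1$, so that Rx-$1$'s output $Y_1=X_1$ is a deterministic function of $X$, while Rx-$2$'s output $\overline{Y}_2=(G_{21}X_1+G_{22}X_2,\, G_{21},\, G_{22})$ remains genuinely random through the independent uniform $G_{21},G_{22}\in\mathbb{F}_q^{\times}$. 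By Theorem \ref{thm:tree_network}, the path graph with $\ell(\mathcal{T})=1$ leaf and $2$ non-root nodes has classical sum-DoF $1$ and NS-assisted sum-DoF $2$, so $C_{\Sigma}(q)\sim\log_2 q$ and $C_{\Sigma}^{\NS}(q)\sim 2\log_2 q$ as $q\to\infty$ along prime powers, and the ratio tends to $2$.

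The main obstacle is verifying that Theorem \ref{thm:tree_network} still applies after the deterministic perturbation $G_{11}=1$, since that theorem is stated for the pristine model in which every connected coefficient is uniform on $\mathbb{F}_q^{\times}$. The NS-assisted achievability goes through unchanged (deterministic coefficients are easier, not harder, to exploit). The classical converse, however, is proven via an Aligned Images adaptation that does use channel randomness, so one must reinspect that a single frozen coefficient does not inflate the classical sum-DoF beyond $1$; I expect this to be routine, since the structural bottleneck (Rx-$2$ must share Tx-$1$ with Rx-$1$, and the random coefficients $G_{21},G_{22}$ on Rx-$2$'s side already preclude dirty-paper cancellation of $W_1$) is untouched. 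The fallback, if needed, is a direct capacity analysis of this specific semi-deterministic example bypassing Theorem \ref{thm:tree_network}, with the classical converse following from the same-marginals / aligned-images toolkit on the Rx-$2$ side and the NS achievability from an explicit two-symbol NS-assisted scheme tailored to this connectivity.
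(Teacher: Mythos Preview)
Your approach is essentially the paper's. The upper bound is identical in spirit (the paper phrases it as: if $(R_1,R_2)$ is NS-achievable then each of $(R_1,0)$ and $(0,R_2)$ is classically achievable by Lemma~\ref{lem:p2p_capacity}, so the ratio is at most $(R_1+R_2)/\max(R_1,R_2)\le 2$). For the lower bound the paper uses exactly your path-graph example, but with two small differences worth noting.

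First, the paper normalizes \emph{both} diagonal coefficients, setting $G_{11}=G_{22}=1$; this is without loss of generality since each Rx-$k$ can divide by its own $G_{kk}\in\mathbb{F}_q^\times$, an invertible operation. This leaves only $G_{21}$ random and makes the semi-deterministic structure completely explicit. Your ``main obstacle'' therefore evaporates: the AIS argument in Appendix~\ref{sec:AIS} is applied precisely to this normalized channel, and the only randomness it ever exploits is that of the off-diagonal coefficient(s) on Rx-$2$'s side, which you have not frozen.

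Second, rather than invoking Theorem~\ref{thm:tree_network} (which appears later), the paper gives a self-contained treatment of this $K=2$ example: an explicit bipartite OTP-type NS box achieving $(R_1,R_2)=(\log_2 q,\log_2 q)$, a matching cooperative upper bound $C_\Sigma^{\NS}=2\log_2 q$, and the AIS bound for $C_\Sigma\le\log_2 q+o_q(\log_2 q)$. Your ``fallback'' is thus exactly what the paper does, and it is no harder than the general case.
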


\begin{proof}
	A gain by a factor larger than $2$ is impossible for any $2$-user broadcast channel because for any rate tuple $(R_1,R_2)$ that is achievable by an NS-assisted coding scheme, the rate tuples $(R_1,0)$  and $(0, R_2)$ are achievable by classical coding schemes (Lemma \ref{lem:p2p_capacity}). Therefore the gain is upper bounded by $(R_1+R_2)/\max(R_1,R_2) \leq 2$. For the other direction, let us provide a toy example of a semi-deterministic BC $(\mathcal{N}^{\mbox{\tiny toy1}})$, illustrated in Fig. \ref{fig:NS_2user_bipartite}, for which the NS-assisted sum-capacity will be shown to be exactly $C_\Sigma^{\NS}(\mathcal{N}^{\mbox{\tiny toy1}})=2\log_2 q$, while the sum-capacity without NS-assistance will be shown to be $C_\Sigma(\mathcal{N}^{\mbox{\tiny toy1}})=\log_2 q+o_q(\log_2 q)$. Thus, the ratio ${C_\Sigma^{\NS}(\mathcal{N}^{\mbox{\tiny toy1}})}/{C_{\Sigma}(\mathcal{N}^{\mbox{\tiny toy1}})}\rightarrow 2$ as $q\rightarrow\infty$,  proving the other direction.

\begin{definition}[Toy Channel $\mathcal{N}^{\mbox{\tiny toy1}}_{\overline{Y}_1,\overline{Y}_2\mid X_1,X_2}$]\label{def:toy1}
The toy channel $\mathcal{N}^{\mbox{\tiny \rm toy1}}_{\overline{Y}_1,\overline{Y}_2\mid X_1,X_2}$ corresponds to a CoMP BC ($\mathbb{F}_q$ model, $q>2$) with $K=2$ users, as shown in Figure \ref{fig:NS_2user_bipartite}. Note that unlike the CoMP BC model which would have $G_{11},G_{22},G_{21}$ i.i.d. uniform in $\mathbb{F}_q^\times$,  the toy channel fixes channel coefficients $G_{11}=G_{22}=1$. This is without loss of generality, because even if the coefficients were randomly drawn in $\mathbb{F}_q^\times$, each receiver Rx-$k$, $k\in\{1,2\}$,  can normalize its received signal by $1/G_{kk}$. This operation is reversible so it does not change the capacity, but it brings the channel to the form that $G_{11}=G_{22}=1$ while $G_{21}$ remains uniform in $\mathbb{F}_q^\times$. The resulting channel $\mathcal{N}^{\mbox{\tiny \rm toy1}}_{\overline{Y}_1,\overline{Y}_2\mid X_1,X_2}$ is indeed semi-deterministic, as the channel to Rx-$1$ is now deterministic, $\overline{Y}_1=Y_1 = X_1$, and the channel to Rx-$2$ is $\overline{Y}_2 = (Y_2,G)= (GX_1+X_2, G)$.
\end{definition}
\begin{figure}[!h]
\center
\begin{tikzpicture}
\node (myfirstpic) at (0,0) {\includegraphics[width=0.34\textwidth]{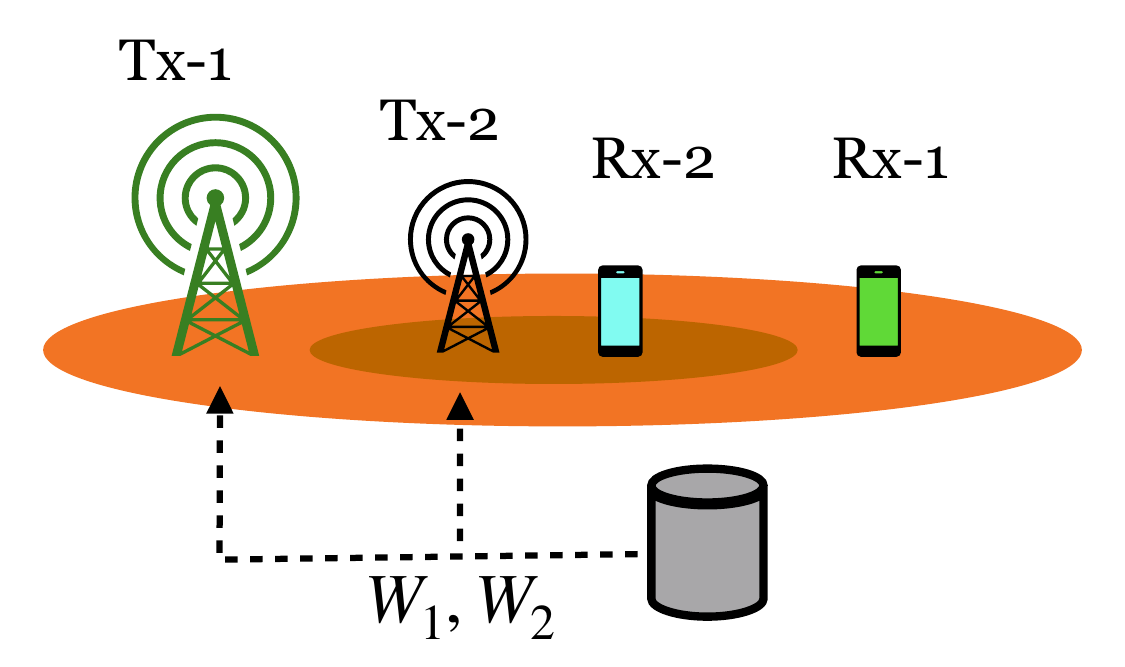}};
\node at (5,0){\(
   \kbordermatrix{
    & \mbox{\tiny Tx-$1$} & \mbox{\tiny Tx-$2$}  \\
    \mbox{\tiny Rx-$1$} & * & 0   \\
    \mbox{\tiny Rx-$2$} & * & *   \\
  }={\bf M}
  \)};
  \end{tikzpicture}\\[0.3cm]
\begin{tikzpicture}
\begin{scope}[shift = {(0,1)}]
\node[rectangle, minimum width = 1cm, minimum height = 3.6cm, fill = gray!20] (Tx) at (-2,0.81) {};
\node (X1) at (-2,1.5) {\small $X_1$};
\node (X2) at (-2,-0.5) {$X_2$};
\node [fill = black!30, rectangle, above=0.1cm of X1, minimum width = 0.85cm] {\small Tx};
\node[rectangle, minimum width = 1cm, minimum height = 1.5cm, fill = gray!20] (Rx1) at (2,1.75) {};
\node (Y1) at (2,1.5) {$Y_1$};
\node [fill = black!30, rectangle, above=0.1cm of Y1, minimum width = 0.85cm] {\small Rx-$1$};
\node[rectangle, minimum width = 1cm, minimum height = 2.4cm, fill = gray!20] (Rx2) at (2,-0.7) {};
\node (Y2) at (2,-0.5) {$Y_2$};
\node [fill = black!30, rectangle, above=0.1cm of Y2, minimum width = 0.85cm] {\small Rx-$2$};
\node (G) at (2,-1.5) {$G$};

\draw [line width = 1] (X1.east) -- (Y1.west);
\draw [line width = 1] (X1.east) -- (Y2.west) node [above = 0cm, pos= 0.55] {$G$};
\draw [line width = 1] (X2.east) -- (Y2.west);

\node[draw, rectangle, thick, minimum width = 5.7cm, minimum height = 6.3cm] (Channel) at (0,-0.2) {};
\node at (0,2.3) [align=center]{Channel\\ $\mathcal{N}^{\mbox{\tiny toy1}}_{\overline{Y}_1,\overline{Y}_2\mid X_1,X_2}$};
\end{scope}

\node [align = left,blue] at (-0.7,-1.2) {
\small \scalebox{0.9}{$G\sim {\rm Unif}(\mathbb{F}_q^{\times})\indep(X_1,X_2)$}\\
\small \scalebox{0.9}{\colorbox{white}{$\overline{Y}_1=Y_1 = X_1$}}\\[0.1cm]
\small \scalebox{0.9}{\colorbox{white}{$\overline{Y}_2=\begin{bmatrix}Y_2\\ G\end{bmatrix} = \begin{bmatrix}GX_1+X_2\\ G\end{bmatrix}$}}
};

\begin{scope}[shift = {(0,-5)}]
\node[rectangle, minimum width = 8cm, minimum height = 2cm, fill = gray!30] at (0,0) {};
\node at (0,1.4){Non-signaling Box $\mathcal{Z}$};
\node at (0,0)[align=center]{$\mathcal{Z}(u, v \mid s,t) =
	\begin{cases}
		1/q, & u+v = s\cdot t\\
		0, & u+v \not= s\cdot t
	\end{cases}$ \\[0.1cm]
	$\forall s,t, u,v\in \mathbb{F}_q$};
\node[rectangle, minimum width = 8cm, minimum height = 3.5cm, draw, thick] (NS) at (0,0) {};
\node (S) at (-3.25,1.35) {$s$};
\node (T) at (3.25,1.35) {$t$};
\node (U) at (-3.25,-1.35) {$u$};
\node (V) at (3.25,-1.35) {$v$};

\node [blue] at (0,-2.2) {$U+V = GW_1$};

\end{scope}

\node [draw, rectangle, thick, left = 1.5cm of X1,fill=red!10,  minimum width = 2cm] (W1) {$W_1 \in \mathbb{F}_q$}; 

\node [draw, rectangle, thick, fill=blue!10, below = 0.2cm of W1, minimum width = 2cm] (W2) {$W_2 \in \mathbb{F}_q$};  

\node [draw, rectangle, thick, left = 1.57cm of X2, minimum width = 1.8cm, minimum height = 0.8cm] (W2U) {$W_2-U$};

\node [draw, rectangle, thick, minimum height = 1.5cm, align=left] (Y2V) at (5.75,0) {$Y_2-V$};
\node (Y1W1) [right = 2.5cm of Y1] {$Y_1 =X_1= \colorbox{red!10}{$W_1$}$};

\node (W2out) [right = 0.5cm of Y2V, fill=blue!10] {$W_2$};

\draw [-latex, thick] (W1.east) -- ($(X1.west)+(-0.15,0)$);
\draw [-latex, thick] (W2.south) -- ($(W2.south)+(0,-0.4)$);
\draw [-latex, thick] (W2U.east) -- ($(X2.west)+(-0.15,0)$);
\draw [-latex, thick] (W1.west) -- ($(W1.west)+(-0.5,0)$) -- ($(W1.west)+(-0.5,-4)$) -- ($(W1.west)+(2.65,-4)$)  -- ($(W1.west)+(2.65,-5.75)$) node [left, pos = 0.5] {$W_1$};
\draw [-latex, thick] ($(Y1.east)+(0.2,0)$) -- (Y1W1.west);
\draw [-latex, thick] ($(G.east)+(0.2,0)$) -- ($(G.east)+(1,0)$) -- ($(G.east)+(1,-2.75)$) node [pos=0.8,right] {$G$};
\draw [-latex, thick] ($(Y2.east)+(0.2,0)$) -- ($(Y2.east)+(2.7,0)$);

\draw [-latex, thick] ($(V.south)+(0,-0.18)$) --  ($(V.south)+(0,-0.7)$) --  ($(V.south)+(1.2,-0.7)$)  -- ($(V.south)+(1.2,6.1)$) node [pos= 0.07, right=0.4cm, rotate=90, blue]{$V=(GW_1-U)\sim\mbox{Unif}(\mathbb{F}_q)$} -- ($(V.south)+(1.78,6.1)$);

\draw [thick] ($(U.south)+(0,-0.18)$) -- ($(U.south)+(0,-0.7)$) -- ($(U.south)+(-1.65,-0.7)$) -- ($(U.south)+(-1.65,1+3.89)$) node [pos= 0.9, left=0.4cm, rotate=90, blue]{$U\sim\mbox{Unif}(\mathbb{F}_q)\indep(W_1,G)$} ;

\draw [-latex, thick] ($(U.south)+(-1.65,1+4.29)$) -- ($(U.south)+(-1.65,1+5.7)$);

\draw [thick] (-4.89,1-2.29) arc (90:270:0.2);

\draw [-latex, thick] (Y2V) -- (W2out);

\end{tikzpicture}
\caption{$K=2$ toy example, with a capacity achieving coding scheme based on a bipartite NS box. A general description of the coding scheme for arbitrary $K$ appears in Appendix \ref{sec:altproof}. Note that in this coding scheme NS-assistance is not utilized by Rx-$1$ or Tx-$1$ (the first transmit antenna). Essentially, the NS advantage in this example can be traced down to the \emph{Fading Dirty Paper} channel (Section \ref{sec:fadingdirt}) between Tx-$2$ and Rx-$2$ where Tx-$2$ has the knowledge of the `\emph{dirt}' ($X_1$) while Rx-$2$ has the knowledge of the `\emph{fading}' ($G$).}
\label{fig:NS_2user_bipartite}
\end{figure}

On this toy channel $\mathcal{N}^{\mbox{\tiny toy1}}_{\overline{Y}_1,\overline{Y}_2\mid X_1,X_2}$, the NS-assisted coding scheme shown in Fig. \ref{fig:NS_2user_bipartite}  achieves $(R_1,R_2) = (\log_2 q, \log_2 q)$ by sending one $q$-ary symbol for each message ($W_1,W_2$) over one channel-use, which can be recovered by the corresponding receivers with vanishing (in fact exactly zero) error probability. The scheme works as follows. The Tx sends $W_1$ to its input of the non-signaling box, and obtains the output $U$. For its inputs of the channel, it sets $X_1=W_1$, $X_2 = W_2-U$.  Rx-$1$ obtains $Y_1=W_1$ directly from its channel. Rx-$2$ obtains $Y_2=GX_1+X_2 = GW_1+W_2-U$ and $G$ from its channel. It sends $G$ to its input  of the non-signaling box, and obtains the output $V$. The non-signaling box ensures the relationship $U+V = GW_1$. Rx-$2$ finally subtracts $V$ from $Y_2$ and obtains $GW_1+W_2-(U+V) = W_2$. Thus, $C_{\Sigma}^{\NS}(\mathcal{N}^{\mbox{\tiny toy1}})\geq 2\log_2 q$. In fact $C_{\Sigma}^{\NS}(\mathcal{N}^{\mbox{\tiny toy1}})= 2\log_2 q$ because NS-assisted capacity is bounded above by the capacity with full cooperation among the two receivers (Lemma \ref{lem:p2p_capacity}), which is itself bounded as $I(X_1,X_2; Y_1,Y_2,G)\leq H(Y_1,Y_2|G)\leq 2\log_2 q$. 

To complete the proof of Theorem \ref{thm:semidet}, it remains to show that the classical capacity is upper bounded as $C_\Sigma(\mathcal{N}^{\mbox{\tiny toy1}})\leq \log_2(q) + o_q(\log_2 q)$. Note that a semi-deterministic BC is one of the few BC settings for which the capacity region is known \cite{semidetbc} \cite[Section 8.3.1]{NIT}. For our example, the region\footnote{Let us note that the auxiliary $U$ in \eqref{eq:toycapregion} is not related to the output of the NS-box labeled $U$ in the NS-assisted coding scheme. The overloading of $U$ should not cause confusion since the two contexts (NS versus classical) do not overlap.} is,
\begin{align}
\mathcal{C}(\mathcal{N}^{\mbox{\tiny toy1}})&=\bigcup_{P_{U,X_1,X_2}}\left\{\begin{array}{ll}(R_1,R_2):\\
R_1\leq H(\overline{Y}_1)\\
R_2\leq I(U;\overline{Y}_2)\\
R_1+R_2\leq H(\overline{Y}_1\mid U)+I(U;\overline{Y}_2)
\end{array}\right\}.\label{eq:toycapregion}
\end{align}
However, this characterization does not reveal the asymptotic value of the sum-capacity in the large alphabet limit  $q\rightarrow\infty$, which is what we need to produce the sum-capacity bound $C_\Sigma(\mathcal{N}^{\mbox{\tiny toy1}})\leq \log_2(q) + o_q(\log_2 q)$ for Theorem \ref{thm:semidet}. The desired bound instead follows from the argument presented in Appendix \ref{sec:AIS} which adapts the Aligned Images bound of \cite{davoodi2016aligned} to our finite field ($\mathbb{F}_q$, large $q$) model.
\end{proof}

Theorem \ref{thm:semidet} shows via a toy example how the strongest possible (factor of $2$) advantage of NS-assistance can be achieved in the large alphabet limit $(q\rightarrow\infty)$. A natural question one might ask is what happens, say in the same toy example, for small alphabet. The NS-assisted scheme shown in Fig. \ref{fig:NS_2user_bipartite} works over any finite field, so we always have $C_{\Sigma}^{\NS}(\mathcal{N}^{\mbox{\tiny toy1}})= 2\log_2 q$. So the question boils down to the classical sum-capacity. Let us consider this question. Since the channel is defined over a finite field, the smallest possible  setting for $\mathcal{N}^{\mbox{\tiny toy1}}_{\overline{Y}_1,\overline{Y}_2\mid X_1,X_2}$ is $\mathbb{F}_2$. But in this case since $\mathbb{F}_q^\times=\{1\}$, we have $G=1$ as a constant, and the knowledge of the channel connectivity matrix constitutes perfect CSIT. It is easily seen that in this case there is no advantage from NS-assistance, i.e., $C_\Sigma(\mathcal{N}^{\mbox{\tiny toy1}})=C_{\Sigma}^{\NS}(\mathcal{N}^{\mbox{\tiny toy1}})= 2\log_2 q$, achieved classically by setting $X_1=W_1, X_2=W_2-W_1$, which produces $Y_1=W_1, Y_2=W_1+W_2-W_1=W_2$. Thus, the smallest non-trivial alphabet corresponds to $q=3$, i.e., the finite field $\mathbb{F}_3$. 
The next theorem shows that NS-assistance still provides a significant sum-capacity advantage in this small alphabet setting, albeit much smaller than the factor of $2$ that was shown for large alphabet.

\begin{theorem}\label{thm:semidetF3}
For the semi-deterministic BC $\mathcal{N}^{\mbox{\tiny \rm toy1}}_{\overline{Y}_1,\overline{Y}_2\mid X_1,X_2}$ with $q=3$, i.e., over the finite field $\mathbb{F}_3$,
\begin{align}
C_\Sigma^{\NS}(\mathcal{N}^{\mbox{\tiny \rm toy1}})=2\log_2(3),&& C_{\Sigma}(\mathcal{N}^{\mbox{\tiny \rm toy1}})=1.5\log_2(3),
\end{align}
and therefore, ${C_\Sigma^{\NS}(\mathcal{N}^{\mbox{\tiny \rm toy1}})}/{C_{\Sigma}(\mathcal{N}^{\mbox{\tiny \rm toy1}})}=4/3$.
\end{theorem}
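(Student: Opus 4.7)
My plan is to treat the NS-assisted and classical sum-capacities separately, since they call for very different tools. For the NS-assisted value $C_\Sigma^{\NS}(\mathcal{N}^{\mbox{\tiny toy1}})=2\log_2 3$, I would simply observe that the achievability scheme drawn in Figure~\ref{fig:NS_2user_bipartite} is field-agnostic and already works verbatim over $\mathbb{F}_3$, giving the lower bound $2\log_2 3$. The matching converse reuses the cooperation argument from the proof of Theorem~\ref{thm:semidet}: if both receivers are allowed to cooperate fully, the resulting NS-assisted point-to-point channel has capacity equal to its classical capacity by Lemma~\ref{lem:p2p_capacity}, which is in turn at most $H(Y_1,Y_2\mid G)\leq 2\log_2 3$.

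For the classical value $C_\Sigma(\mathcal{N}^{\mbox{\tiny toy1}})=\tfrac{3}{2}\log_2 3$, both the achievability and the converse would go through the semi-deterministic BC capacity region~\eqref{eq:toycapregion}. On the achievability side I would take $X_1,X_2$ i.i.d.\ uniform on $\mathbb{F}_3$ and pick the auxiliary $U=X_1+X_2$. Since $X_2$ remains uniform given $X_1+X_2$, we get $H(\overline{Y}_1\mid U)=\log_2 3$; and because $X_1+X_2$ and $2X_1+X_2$ are independent uniform on $\mathbb{F}_3$ (being the two coordinates of an $\mathbb{F}_3$-invertible image of $(X_1,X_2)$), averaging over the uniform $G\in\mathbb{F}_3^\times=\{1,2\}$ yields $I(U;\overline{Y}_2)=I(U;Y_2\mid G)=\tfrac{1}{2}\log_2 3+0=\tfrac{1}{2}\log_2 3$. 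Plugging these into~\eqref{eq:toycapregion} produces the rate pair $(\log_2 3,\tfrac{1}{2}\log_2 3)$, whose sum is $\tfrac{3}{2}\log_2 3$.

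For the converse I would maximize $H(X_1\mid U)+I(U;\overline{Y}_2)$ over all admissible $P_{U,X_1,X_2}$ (which must satisfy $U\indep G$). Averaging over the two equally likely values of $G$ expands
\begin{align*}
I(U;\overline{Y}_2)=I(U;Y_2\mid G)=\tfrac{1}{2}I(U;X_1+X_2)+\tfrac{1}{2}I(U;2X_1+X_2),
\end{align*}
so after bounding each unconditional entropy of an $\mathbb{F}_3$-valued variable by $\log_2 3$, the sum-rate is at most
\begin{align*}
\log_2 3+H(X_1\mid U)-\tfrac{1}{2}\bigl[H(X_1+X_2\mid U)+H(2X_1+X_2\mid U)\bigr].
\end{align*}
Because the linear map $(X_1,X_2)\mapsto(X_1+X_2,2X_1+X_2)$ is $\mathbb{F}_3$-invertible (its matrix has determinant $-1\not\equiv 0\pmod 3$), subadditivity of conditional entropy gives $H(X_1+X_2\mid U)+H(2X_1+X_2\mid U)\geq H(X_1,X_2\mid U)\geq H(X_1\mid U)$, collapsing the bound to $\log_2 3+\tfrac{1}{2}H(X_1\mid U)\leq \tfrac{3}{2}\log_2 3$.

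The hard part will be precisely this classical converse: the large-$q$ Aligned Images bound used in Appendix~\ref{sec:AIS} is inherently asymptotic and cannot deliver the sharp constant $\tfrac{3}{2}$ at $q=3$, so one must work directly with the semi-deterministic region. The crucial structural observations are that $G$ is \emph{uniform on the two nonzero elements of} $\mathbb{F}_3$, which splits $I(U;Y_2\mid G)$ into two symmetric mutual-information terms, and that the two resulting output candidates $X_1+X_2$ and $2X_1+X_2$ together form a \emph{bijective linear image} of $(X_1,X_2)$, which couples their conditional entropies back down to $H(X_1\mid U)$ via subadditivity.
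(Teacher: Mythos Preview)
Your proposal is correct and follows essentially the same route as the paper. The NS part is identical, and for the classical part your achievability choice $U=X_1+X_2$ with $X_1,X_2$ i.i.d.\ uniform is a reparametrization of the paper's $(X_1,X_2,U)=(W_1,W_2-W_1,W_2)$ (so that $U=X_1+X_2=W_2$); your converse is likewise the paper's argument verbatim once one notes that in $\mathbb{F}_3$ the two nonzero values of $G$ are $1$ and $2\equiv-1$, so your pair $(X_1+X_2,\,2X_1+X_2)$ is exactly the paper's $(X_2+X_1,\,X_2-X_1)$, and the subadditivity/invertibility step that recovers $H(X_1,X_2\mid U)$ from the two linear combinations is the same.
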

\proof As noted above, the NS-assisted coding scheme works for any $\mathbb{F}_q$,  so $C_\Sigma^{\NS}(\mathcal{N}^{\mbox{\tiny toy1}})=2\log_2(3)$. To show that $C_{\Sigma}(\mathcal{N}^{\mbox{\tiny toy1}})\geq 1.5\log_2(3)$, let us set in the capacity region expression \eqref{eq:toycapregion}, $(X_1,X_2,U)=(W_1, W_2-W_1,W_2)$ where $W_1,W_2$ are two i.i.d. uniform random variables in $\mathbb{F}_3$. This produces the bounds 
\begin{align}
R_1&\leq H(\overline{Y}_1)=H(W_1)=\log_2(3),\\
R_2&\leq I(U;\overline{Y}_2)=I(U;GX_1+X_2\mid G)\notag\\
&=\Pr(G=1)I(W_2;W_2)+\Pr(G=-1)I(W_2;W_1+W_2)\notag\\
&=0.5\log_2(3)\\
R_1+R_2&\leq H(\overline{Y}_1\mid U)+I(U;\overline{Y}_2)\notag\\
&=H(W_1\mid W_2)+0.5\log_2(3)\notag\\
&=1.5\log_2(3)
\end{align}
This establishes the achievability of the tuple $(R_1,R_2)=(\log_2(3),0.5\log_2(3))$ which satisfies all three bounds. Thus, we have the lower bound $C_{\Sigma}(\mathcal{N}^{\mbox{\tiny toy1}})\geq 1.5\log_2(3)$. Note that we represent $\mathbb{F}_3^\times=\{1,-1\}$, so $\Pr(G=1)=\Pr(G=-1)=0.5$. Next, again starting from the capacity region \eqref{eq:toycapregion} the converse bound $C_{\Sigma}(\mathcal{N}^{\mbox{\tiny toy1}})\leq 1.5\log_2(3)$ is shown as follows.
\begin{align}
&R_1+R_2\notag\\
&\leq H(\overline{Y}_1\mid U)+I(U;\overline{Y}_2) \\
&=H(\overline{Y}_1\mid U)+I(U;GX_1+X_2\mid G)\label{eq:useGprop1} \\
&=H(\overline{Y}_1\mid U)+0.5I(U;X_2+X_1)+0.5I(U;X_2-X_1)\label{eq:useGprop2}\\
&=H(\overline{Y}_1\mid U)+0.5H(X_2+X_1)-0.5H(X_2+X_1\mid U)+0.5H(X_2-X_1)-0.5H(X_2-X_1\mid U) \\
&\leq \log_2(3)+H(\overline{Y}_1\mid U)-0.5H(X_2+X_1\mid U)-0.5H(X_2-X_1\mid U)\\
&\leq \log_2(3)+H(\overline{Y}_1\mid U)-0.5H(X_2+X_1, X_2-X_1\mid U)\\
&= \log_2(3)+H(\overline{Y}_1\mid U)-0.5H(X_1,X_2\mid U)\\
&\leq \log_2(3)+H(\overline{Y}_1\mid U)-0.5H(\overline{Y}_1\mid U)\\
&\leq \log_2(3)+0.5H(\overline{Y}_1\mid U)\\
&\leq 1.5\log_2(3).
\end{align}
Step \eqref{eq:useGprop1} holds because $G\perp\!\!\perp U$. In Step \eqref{eq:useGprop2} we used the facts that $G\sim \mbox{Unif}(\{-1,1\})$, and $G\indep(X_1,X_2,U)$.
\hfill\qed

\subsection{Degraded Broadcast}
Next we consider a $2$-user degraded BC. In light of Theorem \ref{thm:same_marginal} it is not difficult to see that NS-assistance cannot improve the \emph{sum-capacity} of a degraded BC. This is because the same-marginals property (Theorem \ref{thm:same_marginal}) ensures that even with NS-assistance, a degraded BC is equivalent to a physically degraded BC, for which NS-assisted sum-capacity is upper bounded  by the classical capacity of the point to point channel obtained by allowing full cooperation between the two receivers (Lemma \ref{lem:p2p_capacity}). Classical capacity with full cooperation between the two receivers in a physically degraded BC is the same as the point to point channel capacity of the stronger receiver in the original degraded BC, which cannot exceed the classical sum-capacity of the original degraded BC. Thus, the NS-assisted sum-capacity of a degraded BC cannot exceed its classical sum-capacity. 

However, we will show in this section, by providing another toy example, that NS-assistance \emph{can} significantly improve the \emph{capacity region} of a degraded BC. The toy example is defined next.
\begin{definition}[Toy Channel $\mathcal{N}^{\mbox{\tiny toy2}}_{\overline{Y}_1,\overline{Y}_2\mid X_1,X_2}$]\label{def:toy2}
The toy channel $\mathcal{N}^{\mbox{\tiny \rm toy2}}_{\overline{Y}_1,\overline{Y}_2\mid X_1,X_2}$ is identical to $\mathcal{N}^{\mbox{\tiny \rm toy1}}_{\overline{Y}_1,\overline{Y}_2\mid X_1,X_2}$ (Definition \ref{def:toy1}) in every detail, but with one key difference. The output at Rx-$1$ is changed to $\overline{Y}_1=Y_1=(X_1,X_2)$. The resulting channel $\mathcal{N}^{\mbox{\tiny \rm toy2}}_{\overline{Y}_1,\overline{Y}_2\mid X_1,X_2}$ is physically degraded, because $(X_1,X_2)-\overline{Y}_1-\overline{Y}_2$ forms a Markov chain (and still semi-deterministic as well).
\end{definition}
For this  degraded BC, Theorem \ref{thm:degradedFq} establishes the improvement in capacity region due to NS-assistance, by comparing the highest rates achievable by Rx-$2$ with and without NS-assistance, given that Rx-$1$ achieves a rate $\log_2(q)$. Here also NS-assistance provides a factor of $2$ improvement. Moreover, this factor of $2$ improvement is established for all finite fields $\mathbb{F}_q$ except the degenerate\footnote{As mentioned previously (see the discussion preceding Theorem \ref{thm:semidetF3}), the $\mathbb{F}_2$ setting allows perfect CSIT (fixing $G=1$), which allows both receivers to simultaneously achieve the rate $\log_2(q)$, e.g., by setting $(X_1,X_2)=(W_1,W_2-W_1)$ for message symbols $W_1,W_2\in\mathbb{F}_q$.} case of $\mathbb{F}_2$. 
\begin{theorem}\label{thm:degradedFq}
For the degraded (and semi-deterministic) BC $\mathcal{N}^{\mbox{\tiny \rm toy2}}_{\overline{Y}_1,\overline{Y}_2\mid X_1,X_2}$ over any finite field $\mathbb{F}_q$, $q>2$,
\begin{align}
C^{\NS}_{2|R_1=\log_2(q)}&\triangleq \max\Big\{R_2: (\log_2(q),R_2)\in\mathcal{C}^{\NS}(\mathcal{N}^{\mbox{\tiny \rm toy2}})\Big\}=\log_2(q).\\
C_{2|R_1=\log_2(q)}&\triangleq\max\Big\{R_2: (\log_2(q),R_2)\in\mathcal{C}(\mathcal{N}^{\mbox{\tiny \rm toy2}})\Big\}=0.5\log_2(q).
\end{align}
\end{theorem}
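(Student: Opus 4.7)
The plan is to establish the NS-assisted and classical claims separately. For the NS-assisted claim $C_{2|R_1=\log_2(q)}^{\NS}=\log_2(q)$, the achievability simply reuses the NS-assisted scheme of Figure \ref{fig:NS_2user_bipartite} (originally designed for $\mathcal{N}^{\mbox{\tiny toy1}}$), which achieves $(R_1,R_2)=(\log_2 q,\log_2 q)$. Because Rx-$1$ in $\mathcal{N}^{\mbox{\tiny toy2}}$ sees $(X_1,X_2)$ and thus has strictly more information than its $\mathcal{N}^{\mbox{\tiny toy1}}$ counterpart (which sees only $X_1$), the same scheme continues to deliver $W_1$ to Rx-$1$, while Rx-$2$ and the NS box operate identically and decode $W_2$ as before. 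The matching converse $R_2\leq\log_2 q$ follows from Lemma \ref{lem:p2p_capacity}: Rx-$2$'s useful output $Y_2$ lives in $\mathbb{F}_q$, so the single-user NS-assisted capacity from the Tx to Rx-$2$ is at most $\log_2 q$, which upper-bounds $R_2$ in the BC as well.

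For the classical converse $R_2\leq 0.5\log_2 q$, I would exploit that $\mathcal{N}^{\mbox{\tiny toy2}}$ is both semi-deterministic at Rx-$1$ and physically degraded. Its classical capacity region therefore reduces to $R_1\leq H(X_1,X_2\mid U)$ and $R_2\leq I(U;GX_1+X_2\mid G)$, taken as a union over $P_{U,X_1,X_2}$. The $\mathbb{F}_q$-adapted Aligned Images bound of Appendix \ref{sec:AIS} then supplies the key estimate $H(GX_1+X_2\mid U,G)\geq \tfrac{1}{2}H(X_1,X_2\mid U)$ when $G\sim\mbox{Unif}(\mathbb{F}_q^\times)$ is independent of $(U,X_1,X_2)$. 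Combining this with the trivial $H(GX_1+X_2\mid G)\leq\log_2 q$ and the constraint $H(X_1,X_2\mid U)\geq R_1=\log_2 q$ would give $R_2\leq I(U;Y_2\mid G)\leq \log_2 q-\tfrac{1}{2}\log_2 q=0.5\log_2 q$.

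For classical achievability of $(R_1,R_2)=(\log_2 q,0.5\log_2 q)$, I would build on the explicit single-letter construction $(X_1,X_2,U)=(W_1,W_2-W_1,W_2)$ used in the $q=3$ case of Theorem \ref{thm:semidetF3}. For general $q$ that scheme only yields $R_2=\log_2(q)/(q-1)$, so I would instead design $(U,X_1,X_2)$ using the algebraic structure of $\mathbb{F}_q$: writing $q=p^m$ and picking a one-dimensional $\mathbb{F}_p$-subspace $V\subseteq\mathbb{F}_q$ together with a Frobenius-like $\mathbb{F}_p$-linear map $\sigma:\mathbb{F}_q\to\mathbb{F}_q$, I would place $(X_1,X_2)$ uniformly on $\{(x,\sigma(x)+u):x\in\mathbb{F}_q,\,u\in V\}$ and set $U=u$. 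This guarantees $H(X_1,X_2\mid U)=\log_2 q$ by construction, and, because the map $x\mapsto gx+\sigma(x)$ is $\mathbb{F}_p$-linear with image a proper $\mathbb{F}_p$-subspace for generic $g\in\mathbb{F}_q^\times$, the conditional projections $GX_1+X_2\mid U,G$ collapse just enough to realize $I(U;GX_1+X_2\mid G)=0.5\log_2 q$ and meet the converse.

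The main obstacle will be matching the achievability to the converse exactly, uniformly over all $q>2$. The simplest single-letter schemes generically leave an $O(\log_2(q)/q)$ gap to the Aligned Images bound, and closing it calls for either a $q$-specific algebraic construction (as sketched above) or an $n$-letter random-coding argument whose rates converge to $(\log_2 q,0.5\log_2 q)$ within the closed capacity region. Verifying that the $\mathbb{F}_q$ Aligned Images inequality holds without any $o_q(\log_2 q)$ slack — so that exact equality in the theorem is achievable, not merely approximate — is the most delicate piece of the argument.
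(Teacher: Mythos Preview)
Your NS-assisted argument matches the paper's and is correct.

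For the classical part, both your converse and your achievability miss the simple ideas the paper uses, and your proposed substitutes have genuine gaps.

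\textbf{Converse.} You invoke the Aligned Images machinery of Appendix~\ref{sec:AIS} to obtain $H(GX_1+X_2\mid U,G)\geq \tfrac{1}{2}H(X_1,X_2\mid U)$, and you correctly flag that AIS comes with $o_q(\log_2 q)$ slack. The paper avoids AIS entirely. It proves the \emph{exact} inequality $0.5R_1+R_2\leq\log_2 q$ for every $q>2$ by an elementary pairing trick: choose any fixed-point-free bijection $\pi:\mathbb{F}_q^\times\to\mathbb{F}_q^\times$ (e.g., $\pi(g)=\sigma g$ for a generator $\sigma$), pair the terms $H(gX_1+X_2\mid U)$ with $H(\pi(g)X_1+X_2\mid U)$, and use $H(gX_1+X_2\mid U)+H(\pi(g)X_1+X_2\mid U)\geq H(X_1,X_2\mid U)$ since $(gX_1+X_2,\pi(g)X_1+X_2)$ determines $(X_1,X_2)$ whenever $g\neq\pi(g)$. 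Averaging over $g\in\mathbb{F}_q^\times$ gives $\tfrac{1}{q-1}\sum_g H(gX_1+X_2\mid U)\geq\tfrac{1}{2}H(X_1,X_2\mid U)$ with no asymptotic loss. This is the key idea you are missing.

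\textbf{Achievability.} Your Frobenius/subspace construction is both unnecessary and, as sketched, does not hit $R_2=0.5\log_2 q$ for general $q$: a one-dimensional $\mathbb{F}_p$-subspace $V$ gives $H(U)\leq\log_2 p$, which equals $0.5\log_2 q$ only when $q=p^2$. The paper simply observes that $(R_1,R_2)=(2\log_2 q,0)$ and $(R_1,R_2)=(0,\log_2 q)$ are classically achievable (serve one receiver at a time), and time-sharing with equal weights yields $(\log_2 q,\,0.5\log_2 q)$.
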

\proof The lower bound $C^{\NS}_{2|R_1=\log_2(q)}\geq \log_2(q)$ follows because the NS-assisted coding scheme in Figure \ref{fig:NS_2user_bipartite} already achieves $(R_1,R_2)=(\log_2(q),\log_2(q))$, and the upper bound $C^{\NS}_{2|R_1=\log_2(q)}\leq \log_2(q)$ is immediate since $\log_2(q)$ is the single user capacity of Rx-$2$. For the classical case, we recall the capacity region of the 2-user degraded BC \cite[Thm. 5.2]{NIT}, applied to $\mathcal{N}^{\tiny \rm \mbox{toy 2}}$ as 
\begin{align}
	\mathcal{C}(\mathcal{N}^{\tiny \rm \mbox{toy 2}}) =\bigcup_{P_{U,X_1,X_2}}\left\{\begin{array}{ll}(R_1,R_2):\\
R_1\leq I(X_1,X_2; \overline{Y}_1\mid U)\\
R_2\leq I(U;\overline{Y}_2)\\
\end{array}\right\}.
\end{align}
It follows that any $(R_1,R_2)$ achievable by classical coding schemes satisfies
\begin{align}
	&0.5 R_1 + R_2 \notag \\
	&\leq 0.5I(X_1,X_2;\overline{Y}_1\mid U) + I(U;\overline{Y}_2) \\
	&= 0.5H(X_1,X_2\mid U) + I(U;GX_1+X_2\mid G) \label{eq:degraded_conv_use_indep} \\
	&= 0.5 H(X_1,X_2\mid U) + H(GX_1+X_2\mid G) - H(GX_1+X_2 \mid U,G) \\
	&\leq 0.5 H(X_1,X_2\mid U) +  \log_2(q) - \frac{1}{q-1}\sum_{g\in \mathbb{F}_q^{\times}} H(gX_1+X_2 \mid U) \label{eq:degraded_conv_use_indep_uniform} \\
	&= 0.5 H(X_1,X_2\mid U) +  \log_2(q) - \frac{1}{2(q-1)} \Big( \sum_{g\in \mathbb{F}_q^{\times}} H(g X_1+X_2 \mid U) + \sum_{g\in \mathbb{F}_q^{\times}} H(\pi(g)X_1+X_2 \mid U) \Big) \label{eq:degraded_conv_shift} \\
	&\leq 0.5 H(X_1,X_2 \mid U) +  \log_2 (q) -  \frac{1}{2(q-1)} \sum_{g\in \mathbb{F}_q^{\times}}H(gX_1+X_2, \pi(g)X_1+X_2\mid U) \\
	&= 0.5 H(X_1,X_2 \mid U) +  \log_2 (q) - \frac{1}{2(q-1)} \sum_{g\in \mathbb{F}_q^{\times}}H(X_1,X_2\mid U) \label{eq:degraded_conv_recover} \\
	&=0.5 H(X_1,X_2\mid U) +  \log_2 (q) - 0.5 H(X_1,X_2 \mid U) \\
	&= \log_2 (q)
\end{align}
and thus $C_{2\mid R_1=\log_2(q)}\leq 0.5\log_2(q)$. Step \eqref{eq:degraded_conv_use_indep} is because $G\perp\!\!\perp U$. In Step \eqref{eq:degraded_conv_use_indep_uniform} we use the facts that $G\sim {\rm Unif}(\mathbb{F}_q^{\times})$, and $G\perp\!\!\perp (X_1,X_2,U)$. In Step \eqref{eq:degraded_conv_shift}, $\pi$ is an invertible mapping (bijection) from $\mathbb{F}_q^{\times} \to \mathbb{F}_q^{\times}$ such that $\pi(g) \not = g, \forall g\in \mathbb{F}_q^{\times}$, e.g., $\pi(g) =\sigma g$ where $\sigma$ is the generator of $\mathbb{F}_q^{\times}$. Step \eqref{eq:degraded_conv_recover} then follows because there is a bijection between  $(X_1,X_2)$ and $(gX_1+X_2, \pi(g)X_1+X_2)$. Finally, since $(R_1,R_2) = (2\log_2(q),0)$ and $(R_1,R_2) = (0,\log_2 (q))$ are simply classically achievable by serving Rx-$1$ or Rx-$2$ individually, a time-sharing scheme achieves $(R_1,R_2) = (\log_2(q), 0.5\log_2(q))$. With this we conclude that $C_{2\mid R_1=\log_2(q)} =0.5\log_2(q)$.

\hfill\qed

\subsection{CoMP BC: Tree Networks}
Next we consider CoMP BC settings, starting with the tree network (Definition \ref{def:treenet}). See Fig. \ref{fig:CoMP} for an illustration of a generic tree network. Define the compact notation,
\begin{align}
d_\Sigma({\mbox{\footnotesize Tx-$b$}})\triangleq \sum_{k\in[K]: \mbox{\footnotesize Tx}(\mbox{\footnotesize Rx-$k$})=\mbox{\footnotesize Tx-$b$}}d_k,\\
R_\Sigma({\mbox{\footnotesize Tx-$b$}})\triangleq \sum_{k\in[K]: \mbox{\footnotesize Tx}(\mbox{\footnotesize Rx-$k$})=\mbox{\footnotesize Tx-$b$}}R_k.
\end{align}
In words, $d_\Sigma({\mbox{\footnotesize Tx-$b$}})$ is the sum of the DoF values of all receivers that are `associated'  with  Tx-$b$ (recall the definition of the `associated Tx' in \eqref{eq:associate}), and $R_\Sigma({\mbox{\footnotesize Tx-$b$}})$ is the sum of rates of all receivers that are `associated' with Tx-$b$.

\begin{theorem}[Tree Network DoF Region]\label{thm:tree_network}
	For a $K$-user CoMP BC tree network with tree graph $\mathcal{T} =$ $(\{\mbox{Tx-$0$}, \mbox{Tx-$1$},\cdots, \mbox{Tx-$B$}\}, \mathcal{E})$, under both the finite-field $\mathbb{F}_q$ (Section \ref{sec:Fqmodel}) and the real Gaussian (Section \ref{sec:gaussianmodel}) models, the classical (without NS-assistance) DoF region is characterized as,
	\begin{align} \label{eq:tree_region_classical}
		\mathcal{D}  = \left\{ (d_1,d_2,\cdots, d_K) \in \mathbb{R}_+^K \left|
		\begin{array}{c}
		\sum_{i=1}^L d_\Sigma(\mbox{\footnotesize Tx-$b_i$})\leq 1, \\
		\forall L \in \mathbb{N},~~ \forall \mbox{root-to-leaf paths of length $L$:}~(\mbox{\footnotesize Tx-${0}$},\mbox{\footnotesize Tx-${b_1}$}, \cdots, \mbox{\footnotesize Tx-${b_L}$})
		\end{array}
		\right.
		\right\}.
	\end{align}		
		For both the finite-field $\mathbb{F}_q$ (Section \ref{sec:Fqmodel}) and the real Gaussian (Section \ref{sec:gaussianmodel}) models, the NS-assisted DoF region is characterized as,
	\begin{align} \label{eq:tree_region_NS}
		\mathcal{D}^{\NS} = \left\{ (d_1,d_2,\cdots, d_K)  \in \mathbb{R}_+^K ~\left|	~~	d_\Sigma(\mbox{\footnotesize Tx-$b$})\leq 1,~~~\forall b\in[B]\right.
		\right\}.
	\end{align}
	The NS-assisted capacity region under the $\mathbb{F}_q$ model is characterized as,
		\begin{align} \label{eq:tree_cap_region_NS}
		\mathcal{C}^{\NS}(q) = \left\{ (R_1,R_2,\cdots, R_K)  \in \mathbb{R}_+^K ~\left|	~~	R_\Sigma(\mbox{\footnotesize Tx-$b$})\leq \log_2 q,~~~\forall b\in[B]\right.
		\right\}.
	\end{align}
\end{theorem}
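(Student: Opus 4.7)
Four pieces have to be established: (i) the NS capacity converse $R_\Sigma(\mbox{Tx-$b$})\leq \log_2 q$ for every $b\in[B]$; (ii) the classical per-path DoF converse of \eqref{eq:tree_region_classical}; (iii) the NS achievability of \eqref{eq:tree_cap_region_NS}; and (iv) the classical achievability of \eqref{eq:tree_region_classical}. The $\mathbb{F}_q$ capacity result (iii) subsumes the $\mathbb{F}_q$ NS-DoF region and, via standard modulation/quantization, the Gaussian NS-DoF region; the Gaussian classical-DoF converse follows from the same Aligned Images adaptation used in the $\mathbb{F}_q$ case.

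\textbf{Converses.} For the NS converse, fix Tx-$b$, set $S_b=\{k\in[K]:\mbox{Tx}(\mbox{Rx-$k$})=\mbox{Tx-$b$}\}$, and let $A_b$ be the chain of ancestors of Tx-$b$. By Definition \ref{def:treenet}, every Rx-$k\in S_b$ is connected precisely to Tx-$b$ and to $A_b$, so its channel output is a linear combination of $X_b$ and $(X_j)_{j\in A_b}$. Providing $(X_j^{[n]})_{j\in A_b}$ to every Rx-$k\in S_b$ as a genie cannot decrease any rate, and after such an Rx subtracts the now-known ancestor contribution it observes only $G_{kb}^{(\tau)}X_b^{(\tau)}$ together with its channel coefficients. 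The enhanced subnetwork is a fully-connected CoMP BC from single-antenna Tx-$b$ to the receivers in $S_b$, whose NS-assisted sum-capacity is $\log_2 q$ by Theorem \ref{thm:fullconnect} (via Theorem \ref{thm:same_marginal}). Hence $R_\Sigma(\mbox{Tx-$b$})\leq \log_2 q$, which yields the converse halves of both \eqref{eq:tree_region_NS} and \eqref{eq:tree_cap_region_NS}. For the classical converse, restrict attention to any root-to-leaf path $(\mbox{Tx-$b_1$},\ldots,\mbox{Tx-$b_L$})$ and its associated receivers; a receiver associated with Tx-$b_i$ observes a random linear combination of $X_{b_1},\ldots,X_{b_i}$ whose coefficients are unknown to the transmitter. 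The finite-field Aligned Images argument of Appendix \ref{sec:AIS} (whose $2$-level instance already appears in the converse of Theorem \ref{thm:semidet}) then gives $\sum_{i=1}^L d_\Sigma(\mbox{Tx-$b_i$})\leq 1$.

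\textbf{Achievability.} For NS achievability of \eqref{eq:tree_cap_region_NS}, split the budget $\log_2 q$ at each Tx-$b$ as non-negative reals $(R_k)_{k\in S_b}$ with $\sum_{k\in S_b}R_k=\log_2 q$, and partition Tx-$b$'s $n$ channel uses into sub-blocks of proportional lengths. In the sub-block assigned to Rx-$k\in S_b$, invoke the Fading Dirty Paper primitive of Section \ref{sec:fadingdirt}/Theorem \ref{thm:fadingdirt}, which is the construction of Fig.~\ref{fig:NS_2user_bipartite}: Tx-$b$ and Rx-$k$ share a bipartite NS box whose Tx-input is the tuple of ancestor signals $(X_j)_{j\in A_b}$, whose Rx-input is $(G_{kj}/G_{kb})_{j\in A_b}$, and whose outputs $(U_k,V_k)$ satisfy $U_k+V_k=\sum_{j\in A_b}(G_{kj}/G_{kb})X_j$. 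Tx-$b$ transmits $X_b=W_k-U_k$; Rx-$k$ computes $Y_k/G_{kb}-V_k=W_k$ with zero error. Because NS boxes at distinct Tx's act on disjoint inputs and outputs, the construction composes level by level along the tree, so every Tx-$b$ attains its per-Tx sum-rate $\log_2 q$ exactly. This proves \eqref{eq:tree_cap_region_NS}, which implies \eqref{eq:tree_region_NS}, and via standard modulation the Gaussian NS-DoF region as well. For the classical region \eqref{eq:tree_region_classical}, a tree-aware time-sharing scheme suffices: for any rate tuple on the polytope, one can select, per leaf, a single receiver on its root-to-leaf path and activate only the corresponding Tx, delivering rate $\log_2 q$ to that receiver interference-free; convex combinations across such selections attain the whole region.

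\textbf{Main obstacle.} The most delicate step is the recursive composition of the NS-assisted scheme. The single-level base case is exactly Fig.~\ref{fig:NS_2user_bipartite}, but I must verify that the NS boxes attached to distinct nodes of the tree can be used in parallel without creating signaling among the parties, and that the ``dirt'' input to each Tx-$b$'s box, namely $(X_j)_{j\in A_b}$, is deterministically computable from the global message set and from the outputs of the ancestors' NS boxes (all of which are available at the backhaul-connected super-Tx). Establishing this by induction on tree depth, with careful bookkeeping of the shared NS resources, is the main technical content; once it is in place, the matching converses close both regions.
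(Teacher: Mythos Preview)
Your overall architecture matches the paper: AIS for the classical converse, successive encoding with bipartite OTP boxes for NS achievability (this is exactly the paper's ``alternative proof'' in Appendix~\ref{sec:altproof}), and TDMA for classical achievability. Two pieces need repair.

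\textbf{NS converse.} The genie that hands $(X_j^{[n]})_{j\in A_b}$ to each Rx-$k\in S_b$ does not leave you with a fully-connected CoMP BC to which Theorem~\ref{thm:fullconnect} applies. The transmitted symbols $X_j$ are functions of \emph{all} messages (and the Tx-side NS output), so the genie may carry information about $W_{S_b}$ itself; the resulting single-user bound is $\max I(X;\,G_{kb}X_b,(X_j)_{j\in A_b},{\bf G}_k)$, which can be as large as $(|A_b|+1)\log_2 q$. The fix is simpler than your route: drop the genie. All receivers in $S_b$ already have identical marginals in the \emph{original} channel (each is connected exactly to $\{b\}\cup A_b$ with i.i.d.\ nonzero coefficients), so Theorem~\ref{thm:same_marginal} lets you collapse $S_b$ to one super-receiver, and Lemma~\ref{lem:p2p_capacity} bounds its rate by $\max I(X;Y_k\mid {\bf G}_k)\leq H(Y_k\mid {\bf G}_k)\leq\log_2 q$. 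This is precisely the paper's argument.

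\textbf{Classical achievability.} ``Select, per leaf, one transmitter on its root-to-leaf path'' does not generally give an interference-free configuration: if leaf $L_1$ selects an ancestor Tx-$a$ and a second leaf selects a descendant Tx-$c$ of Tx-$a$, the receiver associated with Tx-$c$ hears both $X_a$ and $X_c$. What is required is that the simultaneously active transmitters form an \emph{antichain} in $\mathcal{T}$; your description does not enforce this. One can rescue the vertex-enumeration approach by observing that the path-constraint matrix has the consecutive-ones property (order leaves by DFS), hence is totally unimodular, so the polytope \eqref{eq:tree_region_classical} is integral and its vertices are exactly the antichain indicators. The paper instead avoids all of this with a direct ``graph-burning'' TDMA schedule: node Tx-$k$ is active during a window of length $d_k$ that begins when its parent's window ends, so at every instant the active nodes form an antichain and any point of the polytope is realized in one pass.

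Your NS achievability and the obstacle you flag (that $(X_j)_{j\in A_b}$ must be computable at the super-Tx before Tx-$b$ queries its box) are correct and are handled exactly as you suggest, by processing nodes in order of increasing depth; the paper phrases this as DFS ordering.
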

The achievability and converse proofs of Theorem \ref{thm:tree_network} are provided for NS-assisted coding in Appendix \ref{proof:tree_NS}, and for classical coding in Appendix \ref{proof:tree_classical}. The following observations are in order.
\begin{enumerate}
\item The proof of achievability for classical coding schemes in Section \ref{sec:tdma} is based on a simple \emph{weighted tree graph-burning} argument, and shows that an orthogonal scheduling strategy, namely time-division multiple access (TDMA), is optimal for all tree networks. Connections to graph-burning literature \cite{Bonato} for classical coding schemes in CoMP BC settings beyond tree-networks may be worth exploring.
\item For the proof of converse for classical coding schemes, presented in Section \ref{sec:AIS}, statistical equivalence, i.e., same-marginals arguments turn out to be  too weak/loose. Consider for example a $K\times K$ connectivity matrix ${\bf M}$ that has only zeros above the main diagonal, and only *'s on the main diagonal and below. For such a CoMP BC, same-marginals arguments only produce trivial bounds, e.g., that the sum-DoF cannot be more than $K$, when in fact the sum-DoF cannot be more than $1$. There are other bounds, e.g., based on compound BC arguments as in \cite{Weingarten_Shamai_Kramer, Gou_Jafar_Wang}, but those bounds are also not tight enough in general to establish a sum-capacity advantage greater than a factor of $2$ due to NS-assistance, when in fact this advantage in a $K$ user BC can be as large as a factor of $K$. 
To show the tighter converse (which allows us to establish the factor of $K$ gain from NS), we adapt the Aligned Images approach\footnote{The AIS bound \cite{davoodi2016aligned} is the only existing bound to our knowledge that is capable of establishing the factor of $K$ advantage in such a setting.} of \cite{davoodi2016aligned} to the $\mathbb{F}_q$ model. This adaptation may be of independent interest, as the finite field model provides a potentially more tractable setting to explore further strengthening of the AIS bounds.
\item A proof of achievability for NS-assisted coding schemes is first presented in Section \ref{sec:oneproof} based on a $K+1$ partite NS box. An alternative achievability proof is also presented  in Section \ref{sec:altproof}, based on a \emph{successive encoding} strategy that employs only bipartite NS boxes of the OTP (one-time pad) type identified in \cite{OTPmodel}. The successive encoding strategy is essentially a `wiring' \cite{allcock2009closed, beigi2015monotone} of bipartite boxes. The insights from this bipartite NS box based approach are extended to other settings, e.g., fading dirty paper channel in Section \ref{sec:fadingdirt}. The  $K+1$ partite NS box construction in Section \ref{sec:oneproof} is included because it has the potential to generalize further, to (non tree-network) CoMP settings where the capacity remains open, such as \eqref{eq:fanomatrix}. The necessity of $N$ partite NS boxes with $N>2$ to achieve capacity for general CoMP settings remains an interesting open question.
\end{enumerate}

Most importantly,  in sharp contrast with the fully-connected CoMP BC which does not benefit from NS-assistance, Theorem \ref{thm:tree_network} reveals surprisingly significant advantages of NS-assistance in tree networks. To explore this aspect further, consider the sum-DoF metric. From Theorem \ref{thm:tree_network}, direct characterizations of the sum-DoF values with and without NS-assistance are obtained immediately as the following corollary.

\begin{corollary}[Tree Network Sum-DoF] \label{cor:sumDoF}
For a $K$-user CoMP BC tree network with tree graph $\mathcal{T}$, under both the finite-field and  real Gaussian models, the sum-DoF with and without NS-assistance, respectively, are characterized as, 
\begin{align}
	d_{\Sigma} &= \ell(\mathcal{T})=\mbox{Number of leaf nodes in $\mathcal{T}$},\\
	d_{\Sigma}^{\NS} &= B = \mbox{Number of (non-root) nodes in $\mathcal{T}$}.
\end{align}
\end{corollary}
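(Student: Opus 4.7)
The plan is to deduce the corollary directly from the DoF region characterizations in Theorem~\ref{thm:tree_network}, since computing $d_\Sigma$ and $d_\Sigma^{\NS}$ amounts to solving linear programs over the regions \eqref{eq:tree_region_classical} and \eqref{eq:tree_region_NS}. The key simplification is that, by Definition~\ref{def:treenet}, every receiver Rx-$k$ is associated with a \emph{unique} transmitter Tx(Rx-$k$), so the sum-DoF decomposes cleanly as $\sum_{k=1}^K d_k = \sum_{b=1}^B d_\Sigma(\mbox{Tx-}b)$, and both LPs become problems purely in the tree-indexed variables $\{d_\Sigma(\mbox{Tx-}b)\}_{b\in[B]}$.

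For the NS-assisted case, the region \eqref{eq:tree_region_NS} is simply the product of $B$ unit intervals in these variables, so $d_\Sigma^{\NS}=B$ is immediate. Feasibility of the maximizer is ensured by the non-degeneracy assumption in Definition~\ref{def:treenet} that guarantees at least one associated receiver per transmitter; concentrating unit DoF at one such receiver per Tx realizes the value $B$.

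For the classical case, achievability comes from setting $d_\Sigma(\mbox{Tx-}b)=1$ for every leaf Tx-$b$ of $\mathcal{T}$ and zero otherwise: since each root-to-leaf path in $\mathcal{T}$ contains exactly one leaf, every constraint in \eqref{eq:tree_region_classical} is met with equality, giving sum-DoF $\ell(\mathcal{T})$. The converse, which I expect to be the main step, is a short double-counting argument: summing the path constraint over all $\ell(\mathcal{T})$ root-to-leaf paths yields
\begin{align*}
\ell(\mathcal{T}) \;\geq\; \sum_{\mbox{\scriptsize leaves }\ell}\sum_{v \in P_\ell} d_\Sigma(v) \;=\; \sum_{b=1}^B n_b \, d_\Sigma(\mbox{Tx-}b),
\end{align*}
where $n_b$ is the number of leaf-descendants of Tx-$b$ in $\mathcal{T}$. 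Since any finite rooted tree satisfies $n_b \geq 1$ for every non-root node, and $d_\Sigma(\mbox{Tx-}b)\geq 0$, this collapses to $\sum_b d_\Sigma(\mbox{Tx-}b)\leq \ell(\mathcal{T})$. There is no real technical obstacle beyond this LP manipulation; the only detail worth checking is that feasibility at the aggregated Tx-level lifts to feasibility in $(d_1,\ldots,d_K)$, which is immediate because the mass $d_\Sigma(\mbox{Tx-}b)$ can be handed entirely to any one associated receiver of Tx-$b$.
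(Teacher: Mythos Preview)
Your proposal is correct and follows essentially the same approach as the paper: both deduce the corollary from Theorem~\ref{thm:tree_network} by noting that $\sum_k d_k = \sum_b d_\Sigma(\mbox{Tx-}b)$, taking the same achievability points (all Tx's at unit DoF for the NS case, leaf Tx's at unit DoF for the classical case), and obtaining the classical converse by summing the $\ell(\mathcal{T})$ root-to-leaf constraints and using non-negativity. Your double-counting with the leaf-descendant counts $n_b$ just makes explicit what the paper compresses into the phrase ``adding these bounds (along with the non-negativity of DoF).''
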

\begin{proof}
	According to Theorem \ref{thm:tree_network}, for NS-assisted coding schemes, $d_{\Sigma}^{\NS} \geq B$, because $d_\Sigma(\mbox{Tx-$1$})=d_\Sigma(\mbox{Tx-$2$})=\cdots=d_\Sigma(\mbox{Tx-$B$})=1$ satisfies the constraint \eqref{eq:tree_region_NS} and each Rx is associated with at most one Tx. On the other hand, $d_{\Sigma}^{\NS} \leq B$ because every Rx is associated with at least one Tx, and all Txs are accounted for. For classical coding schemes, $d_{\Sigma} \geq \ell(\mathcal{T})$ because to satisfy \eqref{eq:tree_region_classical} one can set $d_\Sigma(\mbox{Tx-$i$}) = 1$ for all $i$ such that \mbox{Tx-$i$} is a leaf node of $\mathcal{T}$, and $d_\Sigma(\mbox{Tx-$j$}) = 0$ for all $j$ such that \mbox{Tx-$j$} is not a leaf node of $\mathcal{T}$. On the other hand, $d_{\Sigma}\leq \ell(\mathcal{T})$ as there are $\ell(\mathcal{T})$ bounds in the RHS of \eqref{eq:tree_region_classical}, and adding these bounds (along with the non-negativity of DoF) implies that $d_{\Sigma} \leq \ell(\mathcal{T})$. 
\end{proof}
 
Corollary \ref{cor:sumDoF} shows that in terms of sum-DoF of tree networks, the benefits of NS-assistance are most significant for those tree graphs where most nodes are not leaf-nodes. The extreme case thus becomes apparent as a path graph, which has $K$ (non-root) nodes, but only one leaf node, e.g., the setting in Figure \ref{fig:bestcaseexample}. The transmit antennas form a tree graph shown in the left-hand side of Fig. \ref{fig:tree_structures}. On the other hand, there is no DoF benefit of NS-assistance for a star graph (the right-hand side of Fig. \ref{fig:tree_structures}), for which all non-root nodes are leaf nodes.

\begin{figure}[h]
\begin{tikzpicture}
\node (myfirstpic) at (0,0) {\includegraphics[width=0.65\textwidth]{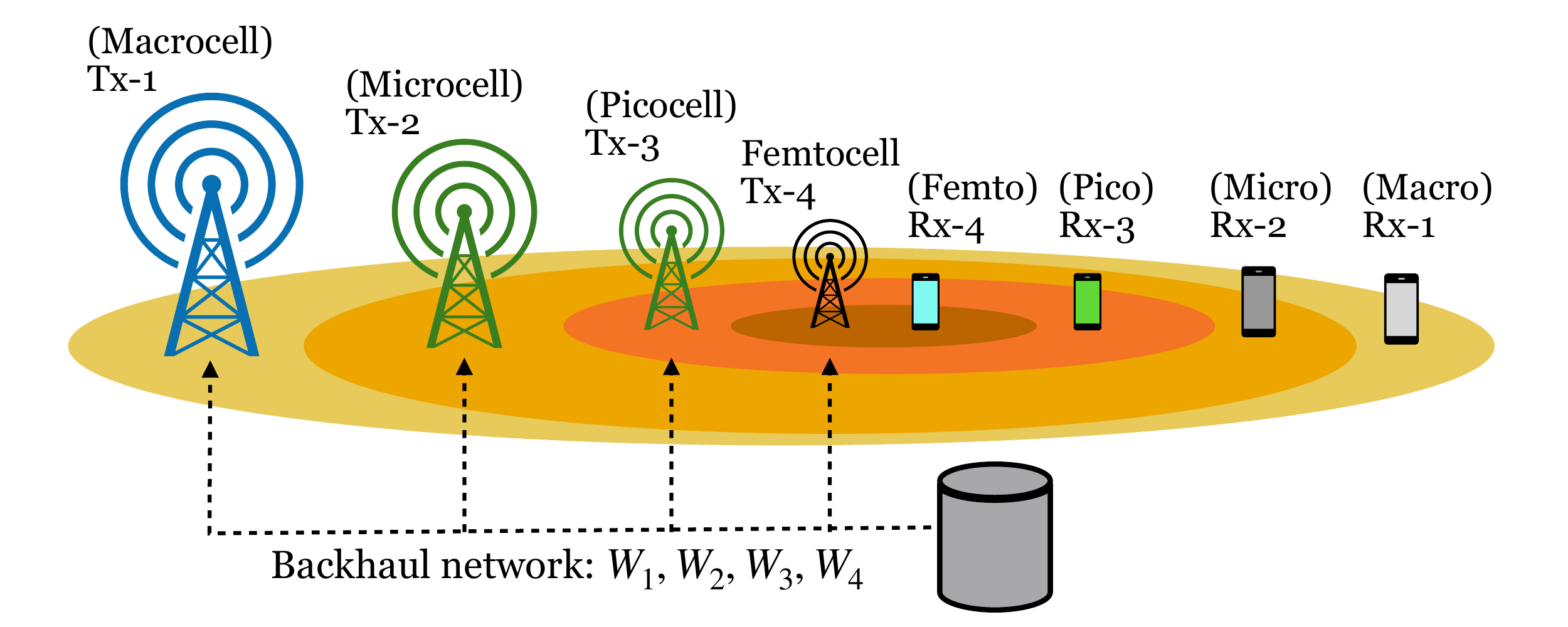}};
\node at (8.1,0){\(
   \kbordermatrix{
    & \mbox{\tiny Tx-$1$} & \mbox{\tiny Tx-$2$} & \mbox{\tiny Tx-$3$} & \mbox{\tiny Tx-$4$} \\
    \mbox{\tiny Rx-$1$} & * & 0 & 0 & 0  \\
    \mbox{\tiny Rx-$2$} & * & * & 0 & 0  \\
    \mbox{\tiny Rx-$3$} & * & * & * & 0  \\
    \mbox{\tiny Rx-$4$} & * & * & * & *  \\
  }={\bf M}
  \)};
  \end{tikzpicture}
\caption{A CoMP BC with a $K=4$ level vertical hierarchical structure where NS-assistance improves DoF by a factor of $K=4$ compared to classical DoF. }
\label{fig:bestcaseexample}
\end{figure}

\begin{figure}[h]
\center
\begin{tikzpicture}
\draw[very thin, rounded corners=2pt] (-1.5,-1.5) rectangle (5.5,0.5);
\node (Tx0) at (0,0) {{\footnotesize $(\mbox{Root Tx-}0)$}};
\node (Tx1) at (0,-1) {{\footnotesize $\mbox{Tx-}1$}};
\node (Tx2) at (1.5,-1) {{\footnotesize $\mbox{Tx-}2$}};
\node (Tx3) at (3,-1) {{\footnotesize $\mbox{Tx-}3$}};
\node (Tx4) at (4.5,-1) {{\footnotesize $\mbox{Tx-}4$}};

\draw[thick] (Tx0) -- (Tx1);
\draw[thick] (Tx1) -- (Tx2);
\draw[thick] (Tx2) -- (Tx3);
\draw[thick] (Tx3) -- (Tx4);

\node[rectangle, draw] at (4,0) {\footnotesize $d_{\Sigma}^{\NS}/d_{\Sigma} = 4$};
 
\begin{scope}[shift= {(8,0)}]
\draw[very thin, rounded corners=2pt] (-1.5,-1.5) rectangle (5.5,0.5);
\node (Tx0) at (0,0) {{\footnotesize $(\mbox{Root Tx-}0)$}};
\node (Tx1) at (0,-1) {{\footnotesize $\mbox{Tx-}1$}};
\node (Tx2) at (1.5,-1) {{\footnotesize $\mbox{Tx-}2$}};
\node (Tx3) at (3,-1) {{\footnotesize $\mbox{Tx-}3$}};
\node (Tx4) at (4.5,-1) {{\footnotesize $\mbox{Tx-}4$}};

\draw[thick] (Tx0) -- (Tx1);
\draw[thick] (Tx0) -- (Tx2);
\draw[thick] (Tx0) -- (Tx3);
\draw[thick] (Tx0) -- (Tx4);

\node[rectangle, draw] at (4,0) {\footnotesize $d_{\Sigma}^{\NS}/d_{\Sigma} = 1$};
\end{scope}
\end{tikzpicture}
\caption{Two extremal tree structures of transmit antennas: The left-hand side is a path graph, which has $4$ (in general $K$) non-root nodes but only one leaf node, being Tx-$4$ (in general Tx-$K$). The right-hand side is a star graph, which contains as many leaf nodes as non-root nodes.} 
\label{fig:tree_structures}
\end{figure}

\begin{theorem}[Extremal gain from NS-assistance in a $K$ user BC] \label{thm:bestcase}
Let $\mathfrak{N}_K$ be the set of all $K$-user broadcast channels, and $C_\Sigma^{\NS}(\mathcal{N}), C_\Sigma(\mathcal{N})$ the sum-capacity of $\mathcal{N}\in\mathfrak{N}_K$ with and without NS-assistance, respectively. Then,
\begin{align}
\sup_{\mathcal{N}\in\mathfrak{N}_K}\frac{C_\Sigma^{\NS}(\mathcal{N})}{C_{\Sigma}(\mathcal{N})}=K.
\end{align}
\end{theorem}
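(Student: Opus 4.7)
\medskip

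\noindent\textbf{Proof Plan.} I will prove both directions of the supremum identity separately, showing first that no $K$-user broadcast channel can enjoy a sum-capacity improvement strictly exceeding a factor of $K$ from NS-assistance, and then exhibiting an explicit family of CoMP BCs (the $K$-level path graph) whose ratio approaches $K$ in the large-alphabet limit. Both halves can be assembled from results already established in the paper; no new machinery is required.

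\medskip

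\noindent\textbf{Upper bound.} For any $\mathcal{N}\in\mathfrak{N}_K$, take any rate tuple $(R_1,\ldots,R_K)\in\mathcal{C}^{\NS}(\mathcal{N})$. For each $k\in[K]$, the tuple $(0,\ldots,0,R_k,0,\ldots,0)$ (with $R_k$ in the $k$-th coordinate) is also NS-achievable, since ignoring the other $K-1$ messages is a valid operation. Viewed as a single-user problem, the induced point-to-point channel from the Tx to Rx-$k$ has an NS-assisted capacity equal to its classical capacity by Lemma~\ref{lem:p2p_capacity}, and this single-user classical rate is certainly no more than the full classical sum-capacity $C_\Sigma(\mathcal{N})$. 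Hence $R_k\le C_\Sigma(\mathcal{N})$ for every $k$, so summing gives $R_1+\cdots+R_K\le K\,C_\Sigma(\mathcal{N})$. Taking the supremum over $(R_1,\ldots,R_K)\in\mathcal{C}^{\NS}(\mathcal{N})$ yields $C_\Sigma^{\NS}(\mathcal{N})/C_\Sigma(\mathcal{N})\le K$ for every $\mathcal{N}\in\mathfrak{N}_K$.

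\medskip

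\noindent\textbf{Matching lower bound via the $K$-level path graph.} The plan is to specialize Theorem~\ref{thm:tree_network} (equivalently Corollary~\ref{cor:sumDoF}) to the CoMP BC whose connectivity matrix is the $K\times K$ lower-triangular pattern shown in Figure~\ref{fig:bestcaseexample} (extended to arbitrary $K$). The underlying rooted tree is a path graph with $B=K$ non-root nodes and exactly one leaf, and each Rx-$k$ is associated with Tx-$k$. By Theorem~\ref{thm:tree_network}, the NS-assisted $\mathbb{F}_q$ capacity region for this channel is exactly $\{(R_1,\ldots,R_K):R_k\le\log_2 q\;\forall k\}$, so $C_\Sigma^{\NS}(q)=K\log_2 q$. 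In the classical direction, the only root-to-leaf path in $\mathcal{T}$ traverses every Tx, so \eqref{eq:tree_region_classical} reduces to $\sum_{k=1}^K d_k\le 1$, giving $d_\Sigma=1$ and consequently $C_\Sigma(q)=\log_2 q+o_q(\log_2 q)$. Hence
\begin{align}
\lim_{q\to\infty}\frac{C_\Sigma^{\NS}(q)}{C_\Sigma(q)}=\lim_{q\to\infty}\frac{K\log_2 q}{\log_2 q+o_q(\log_2 q)}=K,
\end{align}
and this path-graph CoMP BC (viewed as an element of $\mathfrak{N}_K$, with the appropriate $q$) witnesses that the supremum is at least $K$.

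\medskip

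\noindent\textbf{Main obstacle and closing remark.} Both directions are essentially immediate consequences of Lemma~\ref{lem:p2p_capacity} together with Theorem~\ref{thm:tree_network}, so the real difficulty has already been shouldered by those prior results: the converse side of Theorem~\ref{thm:tree_network} (the AIS-based finite-field argument in Appendix~\ref{sec:AIS}) is what certifies $d_\Sigma=1$ on the path graph, which is the crucial input for making the denominator as small as possible, and without it one could only get a factor-of-$2$-style bound from same-marginals/compound arguments. The only remaining care needed in the write-up is (i) to ensure that the path-graph instance lies in $\mathfrak{N}_K$ under whatever (discrete) formalization of $K$-user BC is intended by the statement — the $\mathbb{F}_q$ CoMP BC with fixed connectivity and random but CSIR-known coefficients fits the discrete memoryless BC template of Section~\ref{sec:dmcbc} once the channel coefficients are absorbed into the output alphabet — and (ii) to note that the supremum is attained in the limiting sense $q\to\infty$ rather than at any single $\mathcal{N}$, consistent with the use of $\sup$ rather than $\max$ in the statement.
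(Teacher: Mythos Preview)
Your proposal is correct and follows essentially the same approach as the paper: the upper bound via Lemma~\ref{lem:p2p_capacity} applied to each coordinate of an NS-achievable tuple, and the lower bound via the $K$-level path-graph CoMP BC together with Corollary~\ref{cor:sumDoF} (equivalently Theorem~\ref{thm:tree_network}) in the $q\to\infty$ limit. The paper phrases the upper bound as ``serve the user with the largest single-user capacity, which must be at least $C_\Sigma^{\NS}/K$,'' whereas you bound each $R_k\le C_\Sigma$ and sum; these are equivalent one-line arguments.
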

\begin{proof}
Asymptotic ($q\rightarrow\infty$) achievability of a factor of $K$ gain in Shannon capacity due to NS-assistance in a CoMP BC corresponding to a path graph  is implied directly by Corollary \ref{cor:sumDoF}. That a gain by a factor larger than $K$ is impossible in \emph{any} $K$-user BC follows from the observation that in the NS-assisted coding scheme, by serving only the one user that has the highest single-user capacity of all $K$ users, the rate achieved is at least $C^{\NS}_\Sigma(\mathcal{N})/K$, but since this is a rate achievable with a single receiver (i.e., a point to point channel), it is also achievable without NS-assistance (Lemma \ref{lem:p2p_capacity}), i.e., $C_\Sigma(\mathcal{N})\geq C^{\NS}_\Sigma(\mathcal{N})/K$, which completes the proof. 
\end{proof}

\subsection{CoMP BC: General Connectivity}
We focus only on the $\mathbb{F}_q$ model, sum-capacity and sum-DoF with NS-assistance in this section. We need the following definitions.

\begin{definition}[Min-rank]\label{def:minrk}
Given a channel connectivity matrix ${\bf M}\in \{0,\ast\}^{K\times B}$, define $$\minrk({\bf M}) \triangleq \min_{\overline{\bf G} \in \mathcal{G}({\bf M})}\rank(\overline{\bf G}),$$ where $\mathcal{G}({\bf M}) \triangleq \{\overline{\bf G} \in \mathbb{F}_q^{K\times B}\colon \overline{\bf G} ~\mbox{fits}~ {\bf M}\}$. We say that $\overline{\bf G}$ fits ${\bf M}$ if $[M_{ij}=0]\iff [\overline{G}_{ij}=0]$ for all $(i,j)\in[K]\times[B]$.
\end{definition}

\begin{theorem}[Min-rank converse] \label{thm:Sato_min_rank}
	Given a CoMP BC over $\mathbb{F}_q$ (Section \ref{sec:Fqmodel}) with channel connectivity matrix ${\bf M}\in\{0,*\}^{K\times B}$, the NS-assisted sum-capacity (for any finite field $\mathbb{F}_q$) and sum-DoF values are bounded from above as, 
	\begin{align} 
	\max\left(\frac{C_{\Sigma}^{\NS}(q)}{\log_2q},~d_{\Sigma}^{\NS}\right)  &\leq \minrk({\bf M}).
	\end{align}
\end{theorem}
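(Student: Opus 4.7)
The plan is to exploit the same-marginals property (Theorem \ref{thm:same_marginal}) together with Lemma \ref{lem:p2p_capacity}. Letting $r=\minrk(\mathbf{M})$, I would fix a rank-$r$ matrix $\overline{\mathbf{G}}\in \mathcal{G}(\mathbf{M})$ and replace the original CoMP BC $\mathcal{N}$ by a modified BC $\widetilde{\mathcal{N}}$ whose channel matrix has rank exactly $r$ in every channel use, while preserving the marginal distribution seen by each receiver. Once this replacement is justified, a standard full-cooperation bound applied to $\widetilde{\mathcal{N}}$ delivers the $r\log_2 q$ ceiling.

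The construction I have in mind is $\widetilde{\mathbf{G}}^{(\tau)}=\mathbf{D}^{(\tau)}\,\overline{\mathbf{G}}\,\mathbf{E}^{(\tau)}$, where $\mathbf{D}^{(\tau)}=\diag(\alpha_1^{(\tau)},\ldots,\alpha_K^{(\tau)})$ and $\mathbf{E}^{(\tau)}=\diag(\beta_1^{(\tau)},\ldots,\beta_B^{(\tau)})$ have entries drawn i.i.d.\ uniformly from $\mathbb{F}_q^\times$ (independent across channel uses). Both diagonal factors are invertible, so $\rank(\widetilde{\mathbf{G}}^{(\tau)})=r$ almost surely. Entry $(k,j)$ of $\widetilde{\mathbf{G}}^{(\tau)}$ is $\alpha_k^{(\tau)}\overline{G}_{kj}\beta_j^{(\tau)}$, which vanishes exactly when $M_{kj}=0$ and lies in $\mathbb{F}_q^\times$ exactly when $M_{kj}=*$, so $\widetilde{\mathbf{G}}^{(\tau)}$ always fits $\mathbf{M}$.

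The crucial step, and the one I expect to be the main subtlety, is checking that within each row $k$ the nonzero entries of $\widetilde{\mathbf{G}}^{(\tau)}$ are \emph{mutually independent and uniform} on $\mathbb{F}_q^\times$, so that the per-receiver marginal distribution truly matches that of the original channel. Conditional on $\alpha_k^{(\tau)}$ this is immediate by independence of the $\beta_j^{(\tau)}$'s, but unconditionally the entries all share the common factor $\alpha_k^{(\tau)}$. The reason independence survives is the group structure of $\mathbb{F}_q^\times$: if $\beta_1,\ldots,\beta_m$ are i.i.d.\ uniform on the multiplicative group $\mathbb{F}_q^\times$ and $\alpha$ is independent of them (also on this group), then translation invariance of the uniform measure gives that $(\alpha\beta_1,\ldots,\alpha\beta_m)$ is uniform on $(\mathbb{F}_q^\times)^m$. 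Multiplying each coordinate by the fixed nonzero constants $\overline{G}_{kj}$ preserves this joint uniformity. Since the modified channel is also memoryless with this single-use marginal per receiver, Theorem \ref{thm:same_marginal} yields $\mathcal{C}^{\NS}(\mathcal{N})=\mathcal{C}^{\NS}(\widetilde{\mathcal{N}})$, and in particular the same sum-capacity.

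To finish, I would upper bound $C_\Sigma^{\NS}(\widetilde{\mathcal{N}})$ by allowing full cooperation among the $K$ receivers; this reduces the problem to a point-to-point channel with input $X_{[B]}$ and output $(\widetilde{Y}_{[K]},\widetilde{\mathbf{G}})$, whose NS-assisted capacity equals its classical capacity by Lemma \ref{lem:p2p_capacity}. Since $X_{[B]}\indep \widetilde{\mathbf{G}}$ and $\widetilde{Y}_{[K]}$ is a deterministic function of $(X_{[B]},\widetilde{\mathbf{G}})$, one has $I(X_{[B]};\widetilde{Y}_{[K]},\widetilde{\mathbf{G}})=H(\widetilde{Y}_{[K]}\mid \widetilde{\mathbf{G}})$, and for any realization $\widetilde{\mathbf{g}}$ the output $\widetilde{\mathbf{g}}\,X_{[B]}$ lies in a subspace of $\mathbb{F}_q^K$ of dimension $\rank(\widetilde{\mathbf{g}})=r$, giving the bound $r\log_2 q$. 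This yields $C_\Sigma^{\NS}(q)\leq \minrk(\mathbf{M})\log_2 q$, and dividing by $\log_2 q$ before passing to $q\to\infty$ gives $d_\Sigma^{\NS}\leq \minrk(\mathbf{M})$.
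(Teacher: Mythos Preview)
Your proposal is correct and follows essentially the same route as the paper: replace the CoMP BC by one whose channel matrix is a randomly scaled version of a fixed $\overline{\mathbf{G}}\in\mathcal{G}(\mathbf{M})$ with $\rank(\overline{\mathbf{G}})=\minrk(\mathbf{M})$, invoke the same-marginals property (Theorem~\ref{thm:same_marginal}) to transfer the capacity, then allow full receiver cooperation and apply Lemma~\ref{lem:p2p_capacity} to get the $r\log_2 q$ rank bound.

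The only difference is that the paper uses \emph{column scaling alone}, $\widetilde{\mathbf{G}}=\overline{\mathbf{G}}\,\diag(\lambda_1,\ldots,\lambda_B)$, whereas you additionally apply row scaling by $\mathbf{D}^{(\tau)}$. Your extra factor $\alpha_k$ is harmless but unnecessary: with column scaling only, the nonzero entries in row $k$ are $\overline{G}_{kj}\lambda_j$, which are automatically independent and uniform on $\mathbb{F}_q^\times$ because distinct columns use distinct, independent $\lambda_j$'s. This sidesteps entirely the ``shared $\alpha_k$'' subtlety you identify and then resolve via the group-translation argument. So the paper's version is slightly cleaner, but your argument is sound.
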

\begin{proof}
Consider any $\overline{\bf G} \in \mathcal{G}({\bf M})$, and let $\lambda_1,\lambda_2,\cdots,\lambda_B$ be $B$ random variables chosen independently uniformly from $\mathbb{F}_q^\times$. Then 
\begin{align}
	\widetilde{\bf G} \triangleq \overline{\bf G} \times \diag([\lambda_1,\lambda_2,\cdots, \lambda_B])
\end{align}
represents a non-zero scaling of the $b^{th}$ column of $\overline{\bf G}$ with $\lambda_b$, for $b\in [B]$.  Denote by $\widetilde{\bf G}_k$ the $k^{th}$ row of $\widetilde{\bf G}$. Then $\widetilde{\bf G}_k$ has the same marginal distribution as ${\bf G}_k$, as defined in \eqref{eq:Ykout}, for all $k\in [K]$. By Theorem \ref{thm:same_marginal}, the NS-assisted capacity of the CoMP BC with channels ${\bf G}$ is the same as the NS-assisted capacity of the CoMP BC with channels $\widetilde{\bf G}$. Now for the CoMP BC with channels $\widetilde{\bf G}$, let all the receivers collaborate, resulting in a point-to-point communication problem over a MIMO channel with $B$ transmit antennas and $K$ receive antennas, for which NS-assisted capacity is equal to the classical capacity (Lemma \ref{lem:p2p_capacity}). It is known that the for a MIMO channel, the sum-capacity value is equal to the rank of the channel matrix in $q$-ary units, i.e., $\log_2q$ times the rank of the channel matrix in binary units.  Thus, $\rank(\widetilde{\bf G})\log_2q$  serves as an upper bound on the NS-assisted sum-capacity of the CoMP BC, because receiver collaboration cannot make the sum-capacity smaller. Finally, since non-zero scaling of columns does not change the rank, $\rank(\widetilde{\bf G}) = \rank(\overline{\bf G})$. The bound on sum-DoF follows from a normalization of sum-capacity by $\log_2q$.
\end{proof}

\begin{definition}[$D$-triangular matrix]
	Given a channel connectivity matrix ${\bf M}$, we say that ${\bf M}$ contains a $D$-triangular matrix if there exist permutations of the rows and columns of ${\bf M}$ that yield as a submatrix of ${\bf M}$, a $D\times D$ lower triangular matrix with only $\ast$’s on the main diagonal.
\end{definition}
 
\begin{definition}[Triangle number]\label{def:tnumber}
	Given a channel connectivity matrix ${\bf M}$, let the triangle number be defined as $$\tri({\bf M}) \triangleq  \max\{D\in \mathbb{N} \colon {\bf M}  ~\mbox{contains a}~D\mbox{-triangular matrix}\}.$$
\end{definition}

\begin{theorem}[Triangle achievability] \label{thm:triangle_achi}
	Given a CoMP BC over $\mathbb{F}_q$ with channel connectivity matrix ${\bf M}$, the NS-assisted sum-capacity (for any finite field $\mathbb{F}_q$) and sum-DoF values  are bounded from below as, 
	\begin{align}
	\min\left(\frac{C_{\Sigma}^{\NS}(q)}{\log_2q},~d_{\Sigma}^{\NS}\right) & \geq \tri({\bf M}).
	\end{align}
\end{theorem}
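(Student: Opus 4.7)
\textbf{Plan of proof for Theorem \ref{thm:triangle_achi}.}
The plan is to exhibit an explicit NS-assisted scheme that uses only the $D = \tri({\bf M})$ antennas and receivers corresponding to a $D$-triangular submatrix of $\bf M$, and achieves sum-rate $D\log_2 q$ with zero error in a single channel use. By relabeling rows and columns I may assume that the witnessing submatrix involves Tx-$1,\ldots,$Tx-$D$ and Rx-$1,\ldots,$Rx-$D$, with $M_{ii}=*$ for all $i\in[D]$ and $M_{ij}=0$ whenever $i,j\in[D]$ and $j>i$. Antennas Tx-$b$ with $b>D$ will be silenced ($X_b\equiv 0$) and receivers Rx-$k$ with $k>D$ will be assigned rate zero, so only the chosen $D\times D$ block matters. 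Independent uniform messages $W_1,\ldots,W_D\in\mathbb{F}_q$ will be delivered to Rx-$1,\ldots,$Rx-$D$ respectively.

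Next I would construct the NS resource as a wiring of $D-1$ bipartite OTP-type boxes, exactly in the style used in the alternative achievability proof of Theorem \ref{thm:tree_network} in Section \ref{sec:altproof}. For each $i\in\{2,\ldots,D\}$, let $\mathcal{Z}_i$ be a bipartite box shared between the Tx and Rx-$i$ with input alphabet $\mathbb{F}_q^{i-1}$ on each side and output alphabet $\mathbb{F}_q$ on each side, defined by
\begin{align}
\mathcal{Z}_i(u,v\mid s,t) \;=\; \begin{cases} 1/q, & u+v = s\cdot t,\\ 0, & \text{otherwise,}\end{cases}
\end{align}
where $s\cdot t$ denotes the inner product in $\mathbb{F}_q$. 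Both output marginals are uniform and independent of the inputs, so the non-signaling condition \eqref{eq:def_NS_condition} is satisfied. The encoder operates sequentially: set $X_1=W_1$; then for $i=2,\ldots,D$, feed $s_i=(X_1,\ldots,X_{i-1})$ into the Tx side of $\mathcal{Z}_i$ to obtain $u_i$, and set $X_i=W_i-u_i$. Each Rx-$i$ normalizes $\tilde Y_i=Y_i/G_{ii}=X_i+\sum_{j<i,\,M_{ij}=*}\tilde G_{ij}X_j$ with $\tilde G_{ij}=G_{ij}/G_{ii}$ (well-defined since $G_{ii}\in\mathbb{F}_q^{\times}$), feeds $t_i=(\tilde G_{i1},\ldots,\tilde G_{i,i-1})$ (padded with zeros where $M_{ij}=0$) into its side of $\mathcal{Z}_i$ to obtain $v_i$, and outputs $\widehat W_i=\tilde Y_i-v_i$. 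The defining relation $u_i+v_i=s_i\cdot t_i=\sum_{j<i,\,M_{ij}=*}\tilde G_{ij}X_j$ yields $\widehat W_i=X_i+u_i=W_i$ deterministically, so $R_i=\log_2 q$ is achieved with zero error for every chosen $i$.

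The main obstacle is not algebraic but structural: I must argue that the sequential use of $\mathcal{Z}_2,\ldots,\mathcal{Z}_D$, in which the Tx's inputs to $\mathcal{Z}_i$ depend on outputs $u_2,\ldots,u_{i-1}$ of earlier boxes, fits within the single-query $(K+1)$-partite framework of Section \ref{sec:NSscheme}. This is precisely the kind of `wiring' allowed by the remark following Definition \ref{def:capacity_NS} (via \cite{allcock2009closed,beigi2015monotone}), and one can collapse the collection $\{\mathcal{Z}_i\}$ together with the local processing into a single $(K+1)$-partite NS box as required; this is the only step that needs a careful check but is standard. Summing the per-user rates gives $C_\Sigma^{\NS}(q)\geq D\log_2 q$ for every $\mathbb{F}_q$, and dividing by $\log_2 q$ and taking $q\to\infty$ (or noting the bound is $q$-uniform) gives $d_\Sigma^{\NS}\geq D=\tri({\bf M})$, completing the proof.
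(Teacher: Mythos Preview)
Your proposal is correct and follows essentially the same approach as the paper: select the $D\times D$ triangular submatrix, silence the other antennas, and apply the successive-encoding scheme with bipartite OTP boxes from Section~\ref{sec:altproof}, exactly as the paper does via Remark~\ref{rem:triangle_number}. Your explicit handling of zero entries by padding $t_i$ with zeros where $M_{ij}=0$ is precisely what Remark~\ref{rem:triangle_number} means when it notes that ``the scheme works even if $G_{kj}=0$ for some $k\in[K],\,j\in[k-1]$,'' and your remark about wirings collapsing into a single $(K{+}1)$-partite box matches the paper's justification.
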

Say $\tri({\bf M}) = D$. By the definition of triangle number, there exists a submatrix of ${\bf M}$ which is a $D\times D$ lower triangular matrix with only $\ast$'s on the main diagonal. Now suppose  only those transmit antennas are active that correspond to the columns of the submatrix  and that only those receivers are served that correspond to the rows of the submatrix. Then it remains to show the achievability of $d_{\Sigma}^{\NS} \geq D$ for the sub-network. This is proved by Remark \ref{rem:triangle_number} of Section \ref{proof:tree_NS}.

\begin{lemma}[Lemma 2.1 of \cite{johnson2008extent}] \label{lem:r_nonzeros}
	For ${\bf M}\in \{0,\ast\}^{K\times B}$, if each column (or each row) of ${\bf M}$ contains at least $r$ occurrences of `$\ast$,'    then $\minrk({\bf M})\leq K+1-r$.
\end{lemma}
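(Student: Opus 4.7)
The plan is to construct an explicit matrix $\overline{\bf G}\in\mathcal{G}({\bf M})$ with $\rank(\overline{\bf G})\le k:=K+1-r$, which directly yields $\minrk({\bf M})\le K+1-r$. By transposing if necessary (both $\minrk$ and the hypothesis ``each column/row has $\ge r$ $\ast$'s'' swap consistently under transposition), we assume without loss of generality that each column of ${\bf M}$ has at least $r$ occurrences of $\ast$; let $S_j\subseteq[K]$ denote the support of column $j$, so $|S_j|\ge r$ for all $j\in[B]$.

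The construction is a Reed--Solomon / Vandermonde-style rank-$k$ factorization. Assuming $q\ge K$ so that $\mathbb{F}_q$ contains $K$ distinct elements $\alpha_1,\ldots,\alpha_K$, for each $j\in[B]$ we define the polynomial
\begin{align*}
p_j(x) \;=\; \prod_{i'\in[K]\setminus S_j}(x-\alpha_{i'}) \;\in\;\mathbb{F}_q[x],
\end{align*}
which has degree $K-|S_j|\le k-1$, and set $\overline{G}_{ij}=p_j(\alpha_i)$. Distinctness of the $\alpha_i$'s forces $\overline{G}_{ij}=0$ exactly when $\alpha_i$ is a root of $p_j$, i.e.\ when $i\notin S_j$; thus $\overline{\bf G}$ fits ${\bf M}$. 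Moreover, writing $\overline{G}_{ij}=\sum_{\ell=0}^{k-1}(V_j)_\ell\,\alpha_i^\ell$, where $(V_j)_\ell$ denotes the coefficient of $x^\ell$ in $p_j$, gives the factorization $\overline{\bf G}=UV^{\!\top}$ with $U\in\mathbb{F}_q^{K\times k}$ having rows $u_i=(1,\alpha_i,\ldots,\alpha_i^{k-1})$ and $V\in\mathbb{F}_q^{B\times k}$ storing the coefficients of the $p_j$'s (zero-padded beyond $\deg p_j$). The inner dimension is $k$, so $\rank(\overline{\bf G})\le k$ as required.

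The main (mild) obstacle is bookkeeping: simultaneously verifying that (i) the zero constraints $\overline{G}_{ij}=0$ for $i\notin S_j$ hold, (ii) the nonzero constraints $\overline{G}_{ij}\neq 0$ for $i\in S_j$ hold, and (iii) the rank bound does not exceed $k$. The crucial observation is that the choice $p_j(x)=\prod_{i'\notin S_j}(x-\alpha_{i'})$ kills exactly the desired evaluation points (by construction), is nonzero on the remaining $\alpha_i$'s (by distinctness), and has degree small enough to fit in $k$ Vandermonde coordinates (precisely because $|S_j|\ge r$); all three conditions drop out together. The one caveat is the requirement $q\ge K$ for the existence of $K$ distinct $\alpha_i$'s in $\mathbb{F}_q$; since the lemma is invoked in the DoF-asymptotic regime $q\to\infty$ (e.g.\ in Corollary~\ref{cor:upto6}), this is harmless.
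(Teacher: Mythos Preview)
The paper does not give its own proof of this lemma; it is simply cited as Lemma~2.1 of \cite{johnson2008extent}. So there is nothing in the paper to compare against. Your Vandermonde/Reed--Solomon construction is correct for the column case and is essentially the standard argument: taking $p_j(x)=\prod_{i'\notin S_j}(x-\alpha_{i'})$ of degree $K-|S_j|\le K-r$ and setting $\overline{G}_{ij}=p_j(\alpha_i)$ forces the columns of $\overline{\bf G}$ into the span of the first $K+1-r$ Vandermonde columns while hitting the zero pattern exactly.

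Two comments. First, the transposition reduction does not give what you claim. If each \emph{row} of ${\bf M}\in\{0,\ast\}^{K\times B}$ has at least $r$ stars, transposing and applying your column argument yields $\minrk({\bf M})=\minrk({\bf M}^\top)\le B+1-r$, not $K+1-r$; and indeed the bound $K+1-r$ can fail in the row case (e.g.\ $K=2$, $B$ large, two rows with disjoint supports of size $r=2$: min-rank is $2$, not $1$). The ``or each row'' clause in the paper's statement should be read with the bound expressed in the appropriate dimension, which is how the cited source states it. Your proof of the column case is what is actually needed for the corollary that follows. Second, you correctly flag the requirement $q\ge K$ for the existence of $K$ distinct evaluation points; this is genuine, but since every downstream use in the paper is either a DoF statement ($q\to\infty$) or a sum-capacity bound applied alongside the triangle lower bound, the restriction is harmless for the paper's purposes.
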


\begin{corollary}
	For $r\in [K]$, if  every transmit antenna is connected to at least $r$ receivers, then for any finite field $\mathbb{F}_q$, $$\max\left(\frac{C_{\Sigma}^{\NS}(q)}{\log_2q},~d_{\Sigma}^{\NS}\right)\leq K+1-r.$$ 
	Similarly, if every receiver is connected to at least $r$ transmit antennas, then for any finite field $\mathbb{F}_q$,$$\max\left(\frac{C_{\Sigma}^{\NS}(q)}{\log_2q},~d_{\Sigma}^{\NS}\right)\leq K+1-r.$$
\end{corollary}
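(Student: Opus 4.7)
The plan is to chain Theorem \ref{thm:Sato_min_rank} with Lemma \ref{lem:r_nonzeros}, since the corollary is essentially an immediate consequence of applying these two results back-to-back.

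First, I would translate the connectivity hypothesis into a statement about the connectivity matrix ${\bf M}\in\{0,\ast\}^{K\times B}$. Since $M_{ij}=\ast$ indicates that Rx-$i$ is connected to Tx-$j$, the assumption that every transmit antenna is connected to at least $r$ receivers is equivalent to saying that every column of ${\bf M}$ contains at least $r$ occurrences of $\ast$. Symmetrically, the assumption that every receiver is connected to at least $r$ transmit antennas is equivalent to saying that every row of ${\bf M}$ contains at least $r$ occurrences of $\ast$.

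Next, I would invoke Lemma \ref{lem:r_nonzeros}, which states that under either of these two conditions, $\minrk({\bf M})\leq K+1-r$. Finally, I would apply Theorem \ref{thm:Sato_min_rank}, which provides the min-rank upper bound
\begin{align}
\max\left(\frac{C_{\Sigma}^{\NS}(q)}{\log_2q},~d_{\Sigma}^{\NS}\right)\leq \minrk({\bf M}).
\end{align}
Composing the two bounds yields $\max\bigl(C_{\Sigma}^{\NS}(q)/\log_2q,~d_{\Sigma}^{\NS}\bigr)\leq K+1-r$, which is the desired inequality, valid for any finite field $\mathbb{F}_q$ since Theorem \ref{thm:Sato_min_rank} holds uniformly in $q$.

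There is no real obstacle here, because all the heavy lifting (the \emph{same-marginals} scaling argument, receiver cooperation, and the MIMO rank-equals-capacity fact) was already done in the proof of Theorem \ref{thm:Sato_min_rank}, and the combinatorial bound on min-rank is Lemma \ref{lem:r_nonzeros}, cited from \cite{johnson2008extent}. The only thing worth being careful about is the direction of the hypothesis in Lemma \ref{lem:r_nonzeros}: it gives the same bound $K+1-r$ whether we have $r$ nonzeros per column or per row, so both halves of the corollary follow from exactly the same two-line argument, with the only change being which axis of ${\bf M}$ the hypothesis is imposed on.
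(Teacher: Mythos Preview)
Your proposal is correct and matches the paper's own proof, which simply states that the corollary is directly implied by Theorem \ref{thm:Sato_min_rank} and Lemma \ref{lem:r_nonzeros}. You have just spelled out the translation of the connectivity hypotheses into the column/row conditions on ${\bf M}$ a bit more explicitly than the paper does.
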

\begin{proof}
	The corollary is directly implied by Theorem \ref{thm:Sato_min_rank} and Lemma \ref{lem:r_nonzeros}. 
\end{proof}

\begin{lemma}[Prop. 4.6 of \cite{minrk}] \label{lem:PropOfMinrk}
	If $\minrk({\bf M}) = \min\{K,B\}$, then $\tri({\bf M})= \minrk({\bf M}) =  \min\{K,B\}$.
\end{lemma}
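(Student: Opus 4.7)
My plan is to first note the easy inequality $\tri({\bf M})\leq \minrk({\bf M})$, which holds in general: if ${\bf M}$ contains a $D$-triangular submatrix, then every $\overline{\bf G}\in\mathcal{G}({\bf M})$ contains a $D\times D$ triangular block whose diagonal entries are all nonzero, so $\rank(\overline{\bf G})\geq D$. Combined with the trivial bound $\tri({\bf M})\leq\min\{K,B\}$, under the hypothesis $\minrk({\bf M})=\min\{K,B\}$ it suffices to prove the matching lower bound $\tri({\bf M})\geq \min\{K,B\}$. I would argue the contrapositive, assuming WLOG $K\leq B$ (rank and the triangle pattern are both invariant under transposition): I will show that $\tri({\bf M})\leq K-1$ forces $\minrk({\bf M})\leq K-1$.

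Let $S_j\triangleq\{i\in[K]:M_{ij}=\ast\}$ denote the support of column $j$. The key dichotomy is whether there exists a nonempty $T\subseteq[K]$ such that $|T\cap S_j|\neq 1$ for every $j\in[B]$. \emph{Case (i): such a $T$ exists.} Then I would construct $\overline{\bf G}\in\mathcal{G}({\bf M})$ of rank at most $K-1$ as follows. Pick nonzero scalars $\alpha_i\in\mathbb{F}_q^\times$ for $i\in T$ and set $\alpha_i=0$ for $i\notin T$, aiming for $\alpha^\top\overline{\bf G}=0$. For each column $j$ with $T\cap S_j\neq\emptyset$, necessarily $|T\cap S_j|\geq 2$; one can then choose the forced-nonzero entries $\overline{G}_{ij}\in\mathbb{F}_q^\times$ for $i\in S_j$ so as to satisfy the single scalar equation $\sum_{i\in S_j}\alpha_i\overline{G}_{ij}=0$, which is a linear constraint in at least two nonzero unknowns and hence solvable over $\mathbb{F}_q^\times$ when $q\geq 3$. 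Columns with $T\cap S_j=\emptyset$ are filled arbitrarily in $\mathbb{F}_q^\times$. This yields a completion with $\alpha^\top\overline{\bf G}=0$ and $\alpha\neq 0$, so $\rank(\overline{\bf G})\leq K-1$, hence $\minrk({\bf M})\leq K-1$.

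\emph{Case (ii): no such $T$ exists.} Then every nonempty $T$ admits a column $j$ with $|T\cap S_j|=1$. I would inductively build a $K$-triangular submatrix: start with $T_1=[K]$ and, given $T_k$, pick a column $j_k$ with $|T_k\cap S_{j_k}|=1$, let $i_k$ be its unique element, and set $T_{k+1}=T_k\setminus\{i_k\}$. By construction $i_1,\ldots,i_K$ are distinct, and $S_{j_k}\subseteq \{i_1,\ldots,i_k\}$ with $i_k\in S_{j_k}$. The columns are also distinct: if $j_k=j_l$ for $k<l$, then by the nesting $T_l\subseteq T_k$ we have $\{i_l\}=T_l\cap S_{j_l}\subseteq T_k\cap S_{j_k}=\{i_k\}$, contradicting distinctness of the rows. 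The resulting $K\times K$ submatrix, with rows ordered $i_1,\ldots,i_K$ and columns $j_1,\ldots,j_K$, has $\ast$ on the diagonal and $0$ strictly below it (since $k>l$ implies $i_k\notin\{i_1,\ldots,i_l\}\supseteq S_{j_l}$); reversing both orderings turns this into the required lower triangular pattern with $\ast$'s on the diagonal. Hence $\tri({\bf M})\geq K$, contradicting the contrapositive assumption.

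Putting the two cases together, $\tri({\bf M})\leq K-1$ places us in case (i) and forces $\minrk({\bf M})\leq K-1$, so under the hypothesis $\minrk({\bf M})=\min\{K,B\}$ we conclude $\tri({\bf M})=\min\{K,B\}=\minrk({\bf M})$. The main obstacle I anticipate is step (i) for very small fields, particularly $\mathbb{F}_2$, where the equation $\sum_{i\in S_j\cap T}\alpha_i\overline{G}_{ij}=0$ with all $\overline{G}_{ij}=1$ and $\alpha_i=1$ reduces to a parity condition on $|T\cap S_j|$; handling this cleanly requires either strengthening the dichotomy to demand $|T\cap S_j|$ even, or passing to an extension field, and this is the technical crux addressed in \cite{minrk}.
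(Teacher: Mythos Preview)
The paper does not prove this lemma at all; it is quoted verbatim as Prop.~4.6 of the reference \cite{minrk}, so there is no ``paper's own proof'' to compare against. Your argument must therefore be judged on its own merits.

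Your dichotomy and the construction in case~(ii) are clean and correct: the inductive peeling via singleton intersections $|T_k\cap S_{j_k}|=1$ does produce a $K$-triangular submatrix, and your distinctness argument for the columns is right. Case~(i) is also correct over $\mathbb{F}_q$ with $q\geq 3$, since with $m=|T\cap S_j|\geq 2$ nonzero coefficients one can always pick $m$ nonzero values summing to zero in such a field. So for $q\geq 3$ your proof is complete and essentially the standard one.

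Your instinct that $\mathbb{F}_2$ is the obstacle is exactly right, but it is \emph{not} merely a technical crux to be patched: the lemma is actually \emph{false} over $\mathbb{F}_2$. Take
\[
{\bf M}=\begin{bmatrix}\ast&\ast&\ast\\\ast&\ast&0\\\ast&0&\ast\end{bmatrix}.
\]
There are only two zeros, so no row/column permutation can produce a $3\times 3$ lower-triangular pattern (which needs three zeros above the diagonal); hence $\tri({\bf M})=2$. But over $\mathbb{F}_2$ the unique completion has determinant $1$, so $\minrk_{\mathbb{F}_2}({\bf M})=3=\min\{K,B\}$. Thus your case~(i) cannot be salvaged for $q=2$; the dichotomy genuinely breaks there because a set $T$ with all $|T\cap S_j|\neq 1$ (here $T=\{1,2,3\}$, giving intersection sizes $3,2,2$) need not yield an $\mathbb{F}_2$-kernel vector. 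The cited reference presumably states the result over $\mathbb{R}$ (or an infinite field), which is what the paper's DoF applications ultimately need; you should make the hypothesis $q\geq 3$ explicit rather than defer it.
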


\begin{corollary}
	For the $K$-user MISO BC with channel connectivity matrix ${\bf M}$, if $\frac{C_{\Sigma}^{\NS}(q)}{\log_2q}= \min\{K,B\}$ or $d_{\Sigma}^{\NS}({\bf M})= \min\{K,B\}$, then $\tri({\bf M}) = \min\{K,B\}$.
\end{corollary}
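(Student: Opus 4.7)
The plan is to chain together Theorem \ref{thm:Sato_min_rank} and Lemma \ref{lem:PropOfMinrk}, with the observation that $\minrk({\bf M})$ is always bounded above by $\min\{K,B\}$. The statement has two hypotheses (one on the NS-assisted sum-capacity, one on the NS-assisted sum-DoF); both are handled uniformly since Theorem \ref{thm:Sato_min_rank} gives the same upper bound on each by $\minrk({\bf M})$.

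First I would observe the trivial upper bound $\minrk({\bf M}) \leq \min\{K,B\}$: the matrix $\overline{\bf G} \in \mathcal{G}({\bf M}) \subseteq \mathbb{F}_q^{K\times B}$ has rank at most $\min\{K,B\}$ by basic linear algebra, so the minimum rank over $\mathcal{G}({\bf M})$ cannot exceed $\min\{K,B\}$ either. Next, I would invoke Theorem \ref{thm:Sato_min_rank} in the reverse direction under each of the two hypotheses: if $\frac{C_\Sigma^{\NS}(q)}{\log_2 q} = \min\{K,B\}$, then
\begin{align}
\min\{K,B\} \;=\; \frac{C_\Sigma^{\NS}(q)}{\log_2 q} \;\leq\; \minrk({\bf M}) \;\leq\; \min\{K,B\},
\end{align}
forcing $\minrk({\bf M}) = \min\{K,B\}$; the identical chain works if instead $d_\Sigma^{\NS} = \min\{K,B\}$.

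Having pinned down $\minrk({\bf M}) = \min\{K,B\}$, the conclusion $\tri({\bf M}) = \min\{K,B\}$ follows immediately by applying Lemma \ref{lem:PropOfMinrk} (Prop.~4.6 of \cite{minrk}), which asserts exactly that saturation of the trivial bound on min-rank forces equality of triangle number with min-rank. There is no real obstacle here: every inequality in the chain is either the definition, a trivial dimension bound, a previously proved theorem, or the cited lemma, and the proof is essentially a one-line deduction. If anything deserves care, it is simply to note that the corollary as stated invokes $d_\Sigma^{\NS}({\bf M})$ (which is the sum-DoF of the CoMP BC with connectivity ${\bf M}$), and to apply the argument separately to each of the two disjunctive hypotheses.
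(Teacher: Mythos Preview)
Your proposal is correct and follows essentially the same argument as the paper: use Theorem \ref{thm:Sato_min_rank} together with the trivial bound $\minrk({\bf M})\leq \min\{K,B\}$ to force $\minrk({\bf M})=\min\{K,B\}$, then invoke Lemma \ref{lem:PropOfMinrk}. There is nothing to add.
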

\begin{proof}
	By Theorem \ref{thm:Sato_min_rank} and the fact that $\minrk({\bf M})\leq \min\{K,B\}$, it follows that $\minrk({\bf M}) = \min\{K,B\}$. The result then follows from Lemma \ref{lem:PropOfMinrk}.
\end{proof}
\noindent In other words, if  the NS-assisted DoF is the largest possible, which is $\min\{K,B\}$, then ${\bf M}$ must contain a $\min\{K,B\}$-triangular matrix.

\begin{lemma}[Table in Sec. 6 of \cite{johnson2008extent}] \label{lem:small_T}
	For ${\bf M} \in \{0,\ast\}^{K\times B}$,  $\minrk({\bf M}) = \tri({\bf M})$ if $\min\{K,B\}\leq 6$.
\end{lemma}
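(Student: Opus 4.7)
The plan begins by establishing the easy direction $\tri({\bf M}) \leq \minrk({\bf M})$, which in fact holds for all ${\bf M} \in \{0,\ast\}^{K\times B}$ without any constraint on $\min\{K,B\}$. If ${\bf M}$ contains a $D \times D$ lower triangular submatrix with $\ast$'s on the diagonal, then for every realization $\overline{\bf G} \in \mathcal{G}({\bf M})$, the corresponding $D \times D$ submatrix is lower triangular with nonzero diagonal entries and therefore has rank $D$. This forces $\rank(\overline{\bf G}) \geq D$, hence $\minrk({\bf M}) \geq \tri({\bf M})$. So the substantive content of Lemma \ref{lem:small_T} is the reverse inequality $\minrk({\bf M}) \leq \tri({\bf M})$ under the assumption $\min\{K,B\} \leq 6$.

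For the hard direction, I would without loss of generality assume $K \leq B$ and $K \leq 6$, and proceed by induction on $K$. The base case $K=1$ is immediate: $\minrk({\bf M})$ and $\tri({\bf M})$ are both $0$ if ${\bf M}$ is the zero row, and both $1$ otherwise. For the inductive step, given ${\bf M}$ with $K$ rows, the idea is to exhibit a realization $\overline{\bf G}$ attaining $\rank(\overline{\bf G}) = \tri({\bf M})$. The natural attack is to identify either a row that is redundant in some `triangle cover' of ${\bf M}$ (so that deleting it drops both $\tri$ and $\minrk$ by $1$), or a column containing sufficiently many $\ast$'s so that Lemma \ref{lem:r_nonzeros} pins down $\minrk({\bf M})$ tightly. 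Over a large finite field $\mathbb{F}_q$ the entries in the $\ast$ positions may be chosen generically, so any rank-lowering linear dependence among the rows is automatically achievable as long as it is combinatorially consistent with the support pattern; this reduces the question to a purely combinatorial one about the bipartite graph associated with ${\bf M}$.

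The main obstacle is that no known short inductive argument covers all patterns with $K \leq 6$ uniformly; the proof ultimately rests on an exhaustive case analysis of zero/nonzero patterns, which is precisely what the table of Section~6 of \cite{johnson2008extent} compiles. The threshold $6$ is sharp: starting at $K=7$ there exist sporadic patterns (related to designs such as the Fano plane and its dual) for which $\minrk({\bf M})$ strictly exceeds $\tri({\bf M})$, so any purely combinatorial reduction must break at that dimension. Accordingly, my proposal is not to redo the enumeration but to invoke \cite{johnson2008extent} for the hard direction and combine it with the direct triangle-rank inequality above to conclude the stated equality on the regime $\min\{K,B\} \leq 6$.
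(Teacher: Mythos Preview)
Your proposal is correct and matches the paper's treatment: the paper does not prove Lemma~\ref{lem:small_T} at all but simply cites it as a known result from \cite{johnson2008extent}, which is exactly what you propose to do for the hard direction. Your explicit verification of the easy inequality $\tri({\bf M}) \leq \minrk({\bf M})$ and your remark on the sharpness of the threshold $6$ (via the Fano plane) are helpful additions that the paper itself only touches on elsewhere.
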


\begin{corollary}\label{cor:upto6}
	Given a $K$-user CoMP BC over $\mathbb{F}_q$ with $B$ transmit antennas and channel connectivity matrix ${\bf M}\in\{0,*\}^{K\times B}$,  if $\min\{K,B\}\leq 6$, then for any finite field $\mathbb{F}_q$,
\begin{align}
\frac{C_{\Sigma}^{\NS}(q)}{\log_2q} = d_{\Sigma}^{\NS}= \minrk({\bf M}) = \tri({\bf M}).
\end{align}
\end{corollary}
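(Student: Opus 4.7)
The plan is to prove Corollary \ref{cor:upto6} by a direct sandwich argument, chaining the bounds already established in Theorems \ref{thm:Sato_min_rank} and \ref{thm:triangle_achi} with the combinatorial identity provided by Lemma \ref{lem:small_T}. No new information-theoretic machinery is needed, the whole corollary is a three-line consequence of what has been collected just above it.

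First I would invoke Theorem \ref{thm:triangle_achi} to obtain the lower bound
\begin{align}
\tri({\bf M}) \leq \min\left(\frac{C_{\Sigma}^{\NS}(q)}{\log_2 q},~d_{\Sigma}^{\NS}\right),
\end{align}
valid for every $\mathbb{F}_q$. Next, I would invoke Theorem \ref{thm:Sato_min_rank} to obtain the upper bound
\begin{align}
\max\left(\frac{C_{\Sigma}^{\NS}(q)}{\log_2 q},~d_{\Sigma}^{\NS}\right) \leq \minrk({\bf M}).
\end{align}
Since $\min\{K,B\}\leq 6$, Lemma \ref{lem:small_T} yields the key combinatorial identity $\minrk({\bf M}) = \tri({\bf M})$, which closes the gap between the two ends of the chain.

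Stringing these together gives
\begin{align}
\tri({\bf M}) \leq \min\left(\frac{C_{\Sigma}^{\NS}(q)}{\log_2 q},~d_{\Sigma}^{\NS}\right) \leq \max\left(\frac{C_{\Sigma}^{\NS}(q)}{\log_2 q},~d_{\Sigma}^{\NS}\right) \leq \minrk({\bf M}) = \tri({\bf M}),
\end{align}
which forces equality throughout and immediately produces the claimed chain $C_{\Sigma}^{\NS}(q)/\log_2 q = d_{\Sigma}^{\NS} = \minrk({\bf M}) = \tri({\bf M})$.

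There is really no obstacle here, every hard step has already been done elsewhere. The converse (min-rank) bound required the nontrivial same-marginals property (Theorem \ref{thm:same_marginal}) together with Lemma \ref{lem:p2p_capacity}; the achievability (triangle) bound required the NS-box wiring construction used for tree networks (Remark \ref{rem:triangle_number}); and the equality $\minrk = \tri$ for $\min\{K,B\}\leq 6$ is an external combinatorial fact from \cite{johnson2008extent}. The only minor care to take is to notice that the bounds in Theorems \ref{thm:Sato_min_rank} and \ref{thm:triangle_achi} hold uniformly in $q$, so the resulting equality holds for every finite field and not merely in the limit $q\to\infty$, which is exactly the form in which the corollary is stated.
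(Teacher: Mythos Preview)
Your proposal is correct and follows exactly the same approach as the paper's own proof, which simply states that the corollary is implied by Theorem \ref{thm:Sato_min_rank}, Theorem \ref{thm:triangle_achi}, and Lemma \ref{lem:small_T}. Your write-up merely makes the sandwich inequality explicit, which is fine.
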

\begin{proof}
	The corollary is implied by Theorem \ref{thm:Sato_min_rank}, Theorem \ref{thm:triangle_achi} and Lemma \ref{lem:small_T}.
\end{proof}
In light of Corollary \ref{cor:upto6}, a $K=B=7$ is the smallest CoMP setting where the sum-capacity/DoF with NS-assistance remains open. The setting is challenging partly because it includes the connectivity matrix corresponding to the Fano projective plane \cite[Example 4.3]{minrk} as shown below.
\begin{align}\label{eq:fanomatrix}
{\bf M}&=\begin{bmatrix}
*&0&0&0&*&*&*\\
0&*&0&*&0&*&*\\
0&0&*&*&*&0&*\\
0&*&*&0&*&*&0\\
*&0&*&*&0&*&0\\
*&*&0&*&*&0&0\\
*&*&*&0&0&0&*
\end{bmatrix}.
\end{align}
For this ${\bf M}$, the triangle-number is strictly smaller than the minrank over any $\mathbb{F}_q$ where $q$ is not a power of $2$. Specifically, the triangle number is field-independent, and is equal to $3$ in this case, but the minrank is field-dependent, equal to $3$ if $q$ is a power of $2$, and equal to $4$ otherwise. Thus, we have the sum-capacity $C_\Sigma(q)=3\log_2q$ if $q$ is a power of $2$, while a gap remains, i.e.,  $3\log_2q\leq C_\Sigma(q)\leq 4\log_2q$ if $q$ is not a power of $2$. Closing this gap is an interesting open problem. On the other hand, the sum-DoF for this connectivity matrix can be shown to be $3$ as follows. The lower bound on sum-DoF is already implied by the triangle-number. The upper bound is implied by the following observation. In order to achieve a sum-DoF value greater than $3$, there must be a sequence of rates $R_{\Sigma}(q)$, indexed by the field size $q$, such that $\lim_{q\to \infty} R_{\Sigma}(q)/\log_2 q > 3$. However, no matter how large $q$ is, there is always a power of $2$ that is larger than $q$. Since we also know that $C_{\Sigma}(q)\leq 3\log_2 q$ whenever $q$ is a power of $2$, this means that there can be no such sequence of $R_{\Sigma}(q)$ satisfying $\lim_{q\to \infty} R_{\Sigma}(q)/\log_2 q > 3$.

\section{Extensions: Communication with Side-information}
Considering that any potential for a $K$-fold capacity/DoF improvement can be quite significant in wireless networks, it is worthwhile to search for other settings where similar gains may be found.  In this section, we identify the problem of \emph{communication with side-information} as such a setting. The definitions of classical and NS-assisted coding schemes, probabilities of error, achievable rates, capacity and DoF regions are adapted to the settings considered in this section in a straightforward manner, we will omit these repetitive details. Also, while noting that the results translate to Gaussian settings as usual, let us consider only $\mathbb{F}_q$ models in this section for simplicity. 

\subsection{Fading Dirty Paper Channel}\label{sec:fadingdirt}
Consider a point-to-point channel where, over the $\tau^{th}$ channel use,  the output at the Rx is,
\begin{align} \label{eq:def_fading_dirt}
	\overline{Y}^{(\tau)}&=(Y^{(\tau)},G^{(\tau)})\\
	Y^{(\tau)} &= X^{(\tau)} + G^{(\tau)} \Theta^{(\tau)},
\end{align}
and the input from the Tx is $X^{(\tau)}$. All symbols and operations are in $\mathbb{F}_q$. The received signal includes additive interference $\Theta^{(\tau)}$, scaled by a random channel fading coefficient $G^{(\tau)}$. Say $\Theta^{(\tau)}$ and $G^{(\tau)}$ are independent and uniformly distributed over $\mathbb{F}_q$. It is assumed that $\Theta^{(\tau)}$ is known in advance to the transmitter (non-causal side-information) but not to the receiver, while $G^{(\tau)}$ is known to the receiver but not to the transmitter. This corresponds to what is known in the literature as the \emph{fading dirty paper channel} \cite{Rini_Shamai_fading_dirt}. The following theorem shows that the multiplicative gain from NS-assistance is unbounded for such a channel.

\begin{theorem}\label{thm:fadingdirt}
For the  fading dirty paper channel defined in this section, the NS-assisted capacity is $C^{\NS}=\log_2(q)$, while the classical capacity is $o_q(\log_2 q)$, i.e., vanishingly small relative to $\log_2 q$ as $q\rightarrow\infty$. Therefore, the multiplicative capacity gain from NS-assistance is unbounded as $q\to \infty$.
\end{theorem}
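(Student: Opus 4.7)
The plan is to prove the two parts of the theorem separately: achievability and converse for $C^{\NS} = \log_2 q$, and then the classical converse $C \leq o_q(\log_2 q)$.

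For the NS-assisted part, I observe that the fading dirty paper channel is structurally identical to the Tx-to-Rx-$2$ channel in the toy example $\mathcal{N}^{\mbox{\tiny toy1}}$ of Definition \ref{def:toy1}, with the side information $\Theta$ playing the role of the ``dirt'' $X_1$ (known to the Tx) and $G$ the role of the random coefficient (known to the Rx). The NS-assisted achievability of $\log_2 q$ therefore carries over directly from the Rx-$2$ portion of the scheme in Figure \ref{fig:NS_2user_bipartite}: use the OTP-type bipartite NS box $\mathcal{Z}(u,v\mid s,t) = q^{-1}\mathbf{1}[u+v = st]$; the Tx inputs $s=\Theta$, obtains $u$, transmits $X = W - u$; the Rx inputs $t = G$, obtains $v$ satisfying $u+v = G\Theta$, and outputs $\hat W = Y - v = W$ with zero error. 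The matching upper bound $C^{\NS}\leq \log_2 q$ is a one-line genie argument: furnishing $\Theta^{[n]}$ to the Rx allows it to subtract the interference $G^{(\tau)}\Theta^{(\tau)}$ from each $Y^{(\tau)}$ and recover $X^{[n]}$, reducing the problem to a noiseless point-to-point $\mathbb{F}_q$ channel of capacity $\log_2 q$, for which NS-assistance provides no advantage by Lemma \ref{lem:p2p_capacity}.

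For the classical converse, I would invoke the Gelfand--Pinsker single-letter capacity formula for channels with state known non-causally at the encoder. Treating $\Theta$ as the encoder state, absorbing the Rx-known fading $G$ into the channel output, and using that $U \indep G$ in any valid auxiliary (since the Tx has no access to $G$), gives
\begin{align*}
C \;=\; \max_{p(u,x\mid\theta)}\bigl[\,I(U;Y\mid G) - I(U;\Theta)\,\bigr].
\end{align*}
I then split the conditional mutual information according to whether $G=0$: $I(U;Y\mid G) = \tfrac{1}{q} I(U;X) + \tfrac{q-1}{q}\,\mathbb{E}_{g\in\mathbb{F}_q^\times}\!\bigl[I(U;X+g\Theta)\bigr]$. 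The $g=0$ piece is bounded trivially by $\tfrac{1}{q}\log_2 q$, and for the $g\neq 0$ piece I write $I(U;X+g\Theta) \leq \log_2 q - H(X+g\Theta\mid U)$, which reduces the problem to lower-bounding $\mathbb{E}_{g\neq 0}[H(X+g\Theta\mid U)]$.

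The crux, and the step I expect to be the main obstacle, is the following combinatorial entropy inequality: for any function $f\colon\mathbb{F}_q\to\mathbb{F}_q$ and any distribution of $\Theta$ on $\mathbb{F}_q$,
\begin{align*}
\frac{1}{q-1}\sum_{g\in\mathbb{F}_q^\times} H\bigl(f(\Theta)+g\Theta\bigr) \;\geq\; H(\Theta) - 1.
\end{align*}
The proof idea is that for each ordered pair $\theta\neq\theta'$ there is exactly one $g\in\mathbb{F}_q$ solving $f(\theta)+g\theta = f(\theta')+g\theta'$, so the total collision count over $g\in\mathbb{F}_q^\times$ (across all ordered pairs of preimages) is at most $q(q-1)$; combining with the elementary bound $\log_2 k \leq k-1$ yields $\tfrac{1}{q-1}\sum_{g\neq 0} H(\Theta\mid f(\Theta)+g\Theta) \leq 1$, and the deterministic-function identity $H(\psi_g(\Theta)) = H(\Theta) - H(\Theta\mid \psi_g(\Theta))$ closes the loop. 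Applied with $f(\theta) = X(u,\theta)$ conditionally on each $U=u$, the lemma gives $\mathbb{E}_{g\neq 0}[H(X+g\Theta\mid U)] \geq H(\Theta\mid U) - 1 = \log_2 q - I(U;\Theta) - 1$. Substituting back and canceling $I(U;\Theta)$ yields $I(U;Y\mid G) - I(U;\Theta) \leq \tfrac{1}{q}\log_2 q + 1$, a bound that is $O(1)$ uniformly in $U$; hence $C(q) = O(1) = o_q(\log_2 q)$, and the unbounded multiplicative NS gain follows.
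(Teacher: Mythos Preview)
Your proof is correct. The NS-assisted achievability is the same OTP-box scheme as the paper's (modulo an irrelevant sign), and your NS upper bound uses a slightly different genie (hand $\Theta^{[n]}$ to the Rx, reducing to a noiseless $\mathbb{F}_q$ channel) whereas the paper lets the Tx design $\Theta^{[n]}$ so that $(X,\Theta)$ becomes the input of a genuine point-to-point channel; both invocations of Lemma~\ref{lem:p2p_capacity} are valid, though the paper's version avoids having to argue that the encoder's residual side-information $\Theta$ can be absorbed into the NS box as local randomness.

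The classical converse is where your route genuinely diverges from the paper's. The paper works multi-letter: Fano's inequality, then $n\log_2 q - H(Y^{[n]}\mid G^{[n]},W) = H(\Theta^{[n]}\mid G^{[n]},W) - H(Y^{[n]}\mid G^{[n]},W)$, fix $W=w^*$, and invoke the Aligned Image Sets machinery of Appendix~\ref{sec:AIS} to bound this difference by $n\cdot o_q(\log_2 q)$. You instead single-letterize first via Gelfand--Pinsker and then prove the combinatorial lemma $\tfrac{1}{q-1}\sum_{g\neq 0}H(\Theta\mid f(\Theta)+g\Theta)\le 1$ directly from the ``one collision per pair'' count together with $\log_2 k\le k-1$. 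Your lemma is valid for arbitrary (not just uniform) $\Theta$, which is what you need after conditioning on $U=u$: bounding $H(\Theta\mid\psi_g(\Theta)=y)\le\log_2|\psi_g^{-1}(y)|\le|\psi_g^{-1}(y)|-1$, weighting by $p(\theta)$, and summing over $g\neq 0$ gives $\sum_{g\neq 0}H(\Theta\mid\psi_g(\Theta))\le\sum_{\theta\neq\theta'}p(\theta)\cdot 1=q-1$. This is essentially a single-letter instantiation of the AIS counting argument, but packaged with Gelfand--Pinsker it is more self-contained for this particular channel and yields the explicit bound $C(q)\le 1+\tfrac{1}{q}\log_2 q$. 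The paper's AIS route, by contrast, is the general-purpose tool reused for the tree-network converses.
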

\begin{proof}
	To show $C^{\NS} \leq \log_2 q$, suppose $\Theta^{[n]}$ can be designed by the transmitter, as this cannot reduce the NS-assisted capacity and thus provides a valid upper bound for it. Now the channel becomes a point-to-point channel with input $(X^{(\tau)}, \Theta^{(\tau)})$ and output $(Y^{(\tau)}, G^{(\tau)})$. From Lemma \ref{lem:p2p_capacity}, the NS-assisted capacity for the point-to-point channel is the same as the classical capacity, which is $$\max_{P_{X^{[n]}\Theta^{[n]}}}I(X^{[n]}, \Theta^{[n]}; Y^{[n]}, G^{[n]}) = \max_{P_{X^{[n]}\Theta^{[n]}}}I(X^{[n]}, \Theta^{[n]}; Y^{[n]} \mid G^{[n]}) \leq \log_2 q,$$ as $G^{[n]}$ is independent of $(X^{[n]},G^{[n]})$ and $H(Y^{[n]})\leq \log_2 q$.
	
	 To show $C^{\NS} \geq \log_2 q$, let the Tx and the Rx share an NS box $\mathcal{Z}$. The inputs for the Tx, Rx are $U\in \mathbb{F}_q,V\in \mathbb{F}_q$ and the outputs for the Tx, Rx are denoted as $S\in \mathbb{F}_q,T\in \mathbb{F}_q$, respectively. The input-output relationship is specified as
	 \begin{align}
\begin{array}{|c|c|c|c|}
		\hline
\mbox{Party}&  \mbox{Tx}  &\mbox{Rx}\\\hline
\mbox{Input}&S  &T\\\hline
\mbox{Output}&U &V=U + ST\\\hline
\end{array}
\end{align}
where $U$ is a random variable uniformly distributed over $\mathbb{F}_q$.  All operations are defined in $\mathbb{F}_q$. The box being NS is verified as any one of the parties can learn nothing about the input of the other party.\footnote{Such a box belongs to the class of NS boxes referred to as the OTP (one-time pad) model in \cite{OTPmodel}. } The coding scheme will use the channel only once, which allows us to omit the channel use index. The message set is  $\mathcal{M} = \mathbb{F}_q$. The Tx inputs $S=\Theta$ to the NS box, and obtains $U$ as its output from the box. The input to the channel is chosen as $X=W+U$. At the Rx, the input to the NS box is $T = G$ (the channel coefficient). From the channel the  Rx obtains $Y = X+G\Theta = W+U + G\Theta = W+V$. The Rx then subtracts $V$ from $Y$ to obtain $W$. The scheme shows that $R = \log_2 q$ is achievable by NS coding schemes. Thus, $C^{\NS} = \log_2 q$.
	 
For classical coding, we only need to show that $C \leq o_q(\log_2 q)$. This requires the use of the AIS bound. Suppose a sequence (indexed by the number of channel uses utilized by the scheme, $n$) of coding schemes (with the $n^{th}$ scheme having message set $\mathcal{M}^{(n)}$) achieves rate $R$. Fano's inequality implies,
\begin{align}
 	&\log_2|\mathcal{M}^{(n)}| - o(n) \notag \\
 	&\leq I(W;Y^{[n]} \mid G^{[n]}) \\
 	&= H(Y^{[n]}\mid  G^{[n]}) - H(Y^{[n]}\mid  G^{[n]},W) \\
 	&\leq n\log_2 q  - H(Y^{[n]} \mid G^{[n]}, W)  \\
 	&=H(\Theta^{[n]} \mid G^{[n]}, W) - H(Y^{[n]} \mid G^{[n]}, W) \label{eq:fading_dirt_theta} ~~~~ \mbox{($\because \Theta^{[n]}$ is independent of $(G^{[n]}, W)$)}\\
 	&\leq H(\Theta^{[n]} \mid G^{[n]}, W=w^*) - H(Y^{[n]} \mid G^{[n]}, W=w^*), ~ ~~~~ \mbox{(there exists such a $w^*$)}\\
 	&\leq \max_{P_{X^{[n]}}} \big( H(\Theta^{[n]} \mid G^{[n]}) - H(Y^{[n]} \mid G^{[n]}) \big) \\
 	&\leq n o_q(\log_2 q) ~~ \mbox{($\because$ AIS bound)} \\\
 	& \hspace{-1cm} \implies R \leq \lim_{n\to \infty}\log_2|\mathcal{M}^{(n)}|/n \leq o_q(\log_2 q)
\end{align}
This completes the proof.
\end{proof}

\subsection{$K$-user MAC with Side-information}
We generalize the point to point fading dirty paper channel into a multiple-access setting, with the additional interesting aspect that our achievability in this case is based on an NS box that does not {seem} to be constructible from bipartite NS boxes. Consider the $K$-user discrete-memoryless MAC with input-output relationship,
\begin{align} \label{eq:def_DMMAC}
	Y^{(\tau)} = X_1^{(\tau)} + X_2^{(\tau)} + \cdots + X_K^{(\tau)} + f\big(\Theta_0^{(\tau)}, \Theta_1^{(\tau)},\cdots, \Theta_K^{(\tau)}\big) + Z^{(\tau)},
\end{align}
for the $\tau^{th}$ channel use. Here, $(\Theta_0^{(\tau)},\Theta_1^{(\tau)},\cdots, \Theta_K^{(\tau)})$ comprise the channel state, with $\Theta_0^{(\tau)}$  only available at the Rx, and $\Theta_k^{(\tau)}$  only available  at Tx-$k$ (as non-causal side-information), for $k\in [K]$. The additive term $f\big(\Theta_0^{(\tau)}, \Theta_1^{(\tau)},\cdots, \Theta_K^{(\tau)}\big)$ is referred to as the interference, and it depends on the channel state. We do not pose constraints on the domain of $f$ but we require that the codomain of $f$ should be $\mathbb{F}_q$. $X_k^{(\tau)}\in \mathbb{F}_q$ is the input to the channel at Tx-$k$ for $k\in [K]$. $Z^{(\tau)}\in \mathbb{F}_q$ is  additive noise, assumed independent of the inputs and the channel states. $Y^{(\tau)}$ is the output seen by the Rx. There are $K$ independent messages $W_1,W_2,\cdots, W_K$ such that $W_k$ originates at Tx-$k$, $k\in [K]$, and the Rx needs to decode all $K$ messages. 

To clarify the notation let us specify that an NS-assisted coding scheme (over $n$ channel uses) utilizes a $K+1$ partite NS box $\mathcal{Z}$, with $S_k$ denoting the input at Tx-$k$, $T$ denoting the input at the Rx, $U_k$ denoting the output at Tx-$k$, and $V$ denoting the output at the Rx. For $k\in [K]$, the input of the NS box at Tx-$k$ is set as $S_k = (W_k, \Theta_k^{[n]})$, and the input to the channel is set as $X_k^{[n]} = U_k$. At the Rx, the input of the box is set as $T = (Y^{[n]}, \Theta_0^{[n]})$, and the output of the box are the decoded messages $(\widehat{W}_1,\cdots, \widehat{W}_K) = V$.

\begin{theorem} \label{thm:DMMAC}
	For the channel defined in \eqref{eq:def_DMMAC}, the NS capacity region $\mathcal{C}^{\NS}$ contains the classical (without NS-assistance) capacity region of the MAC where the interference term is absent, i.e.,
	\begin{align} \label{eq:def_DMMAC_no_interference}
		Y^{(\tau)} = X_1^{(\tau)} + X_2^{(\tau)} + \cdots + X_K^{(\tau)}   + Z^{(\tau)}.
	\end{align}
\end{theorem}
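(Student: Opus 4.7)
The plan is to construct an NS-assisted coding scheme that ``neutralizes'' the interference term $f(\Theta_0^{(\tau)},\Theta_1^{(\tau)},\ldots,\Theta_K^{(\tau)})$, thereby reducing the original MAC \eqref{eq:def_DMMAC} to the interference-free MAC \eqref{eq:def_DMMAC_no_interference}. Once this reduction is in place, any classical code for the interference-free MAC lifts to an NS-assisted code for the original MAC at the same rate and with the same error probability, establishing the claimed containment of regions. This is the natural $K$-transmitter generalization of the bipartite OTP-style construction used in the proof of Theorem \ref{thm:fadingdirt}.

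The first step is to design a $(K+1)$-partite NS box $\mathcal{Z}$ with inputs $S_k\in\mathbb{F}_q$ at Tx-$k$ ($k\in[K]$) and $T\in\mathbb{F}_q$ at Rx, and outputs $U_k\in\mathbb{F}_q$ at Tx-$k$ and $V\in\mathbb{F}_q$ at Rx. The box samples $U_1,\ldots,U_K$ independently and uniformly from $\mathbb{F}_q$, and deterministically sets
\begin{align}
V = U_1 + U_2 + \cdots + U_K + f(T,S_1,S_2,\ldots,S_K),
\end{align}
all operations in $\mathbb{F}_q$. The step I expect to require the most care is verifying that $\mathcal{Z}$ satisfies \eqref{eq:def_NS_condition} for an \emph{arbitrary} $f$: one might worry that the dependence of $V$ on the inputs leaks information to some subset of parties. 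The resolution is that for any proper subset $A\subsetneq\{0,1,\ldots,K\}$ of parties, the marginal distribution of their joint outputs is uniform over $\mathbb{F}_q^{|A|}$ regardless of the inputs. When $0\notin A$ this is immediate from the construction of the $U_k$. When $0\in A$ but at least one Tx (say Tx-$k^\ast$) lies outside $A$, the uniform free variable $U_{k^\ast}$ can be absorbed through a change of variable, making $V$ (conditioned on $\{U_k\}_{k\in A\cap[K]}$) uniform and independent of all inputs.

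The second step assembles the coding scheme. Fix any rate tuple $(R_1,\ldots,R_K)$ in the classical capacity region of the interference-free MAC \eqref{eq:def_DMMAC_no_interference}, realized by encoders $\phi_k:\mathcal{M}_k\to\mathbb{F}_q^n$ and a joint decoder $\psi:\mathbb{F}_q^n\to\prod_k \mathcal{M}_k$. In the NS-assisted scheme, the parties share $n$ independent copies of $\mathcal{Z}$, one per channel use. For each $\tau\in[n]$, Tx-$k$ computes $\tilde{X}_k^{(\tau)}$ from $\phi_k(W_k)$, inputs $S_k=\Theta_k^{(\tau)}$ into the $\tau^{\mathrm{th}}$ box to obtain $U_k^{(\tau)}$, and transmits $X_k^{(\tau)}=\tilde{X}_k^{(\tau)}+U_k^{(\tau)}$. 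The Rx inputs $T=\Theta_0^{(\tau)}$ into the $\tau^{\mathrm{th}}$ box to obtain $V^{(\tau)}$, then forms
\begin{align}
\tilde{Y}^{(\tau)} \triangleq Y^{(\tau)} - V^{(\tau)}
= \sum_{k=1}^K \tilde{X}_k^{(\tau)} + \sum_{k=1}^K U_k^{(\tau)} + f\bigl(\Theta_0^{(\tau)},\ldots,\Theta_K^{(\tau)}\bigr) + Z^{(\tau)} - V^{(\tau)}.
\end{align}
By the defining relation of $\mathcal{Z}$, the interference and the $U_k^{(\tau)}$ terms cancel to leave $\tilde{Y}^{(\tau)}=\sum_k\tilde{X}_k^{(\tau)}+Z^{(\tau)}$, which is exactly one use of the interference-free MAC. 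The Rx then applies $\psi$ to $\tilde{Y}^{[n]}$ to recover $(\widehat{W}_1,\ldots,\widehat{W}_K)$.

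The third step is a brief error analysis, which is essentially automatic: the effective channel from the classical codewords $(\tilde{X}_1^{[n]},\ldots,\tilde{X}_K^{[n]})$ to $\tilde{Y}^{[n]}$ is the $n$-fold product of the interference-free MAC, since the $Z^{(\tau)}$ are i.i.d.\ and independent of everything else, and the $U_k^{(\tau)}$ have been cancelled exactly. Hence the probability of error of the NS-assisted scheme matches that of the underlying classical scheme, so \eqref{eq:criteria1}--\eqref{eq:criteria2} are inherited. Taking closures of achievable rate tuples yields the desired inclusion of the classical capacity region of \eqref{eq:def_DMMAC_no_interference} inside $\mathcal{C}^{\NS}$ of \eqref{eq:def_DMMAC}. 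The only conceptually nontrivial ingredient is the NS verification of the new $(K+1)$-partite box in the first step; the rest is bookkeeping.
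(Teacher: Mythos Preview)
Your proposal is correct and follows essentially the same approach as the paper: the same $(K+1)$-partite OTP-style box with $V=\sum_k U_k+f(T,S_1,\ldots,S_K)$ and $(U_1,\ldots,U_K)$ uniform, the same wiring $S_k=\Theta_k$, $T=\Theta_0$, $X_k=\tilde X_k+U_k$, and the same cancellation $\tilde Y=Y-V$. The only cosmetic slip is declaring $S_k,T\in\mathbb{F}_q$; the paper allows $f$ to have arbitrary domain, so the box's input alphabets should be taken as the alphabets of the $\Theta_k$'s (this changes nothing in your NS verification or the rest of the argument).
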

\begin{proof}
We show how to convert the channel \eqref{eq:def_DMMAC} into the channel \eqref{eq:def_DMMAC_no_interference} using a NS box. Define a $K+1$ partite NS box as,
\begin{align}
\begin{array}{|c|c|c|c|}
		\hline
\mbox{Party}&  \mbox{Tx-$k$}  &\mbox{Rx}\\\hline
\mbox{Input}&S_k &T\\\hline
\mbox{Output}&U_k&V=U_1+\cdots +U_K+f(T,S_1,\cdots, S_K)\\\hline
\end{array}
\end{align}
where $(U_1,U_2,\cdots, U_K)$ is uniformly distributed over $\mathbb{F}_q^K$. It is not difficult to verify that the box is NS, as any $K$ parties collaborating together can learn nothing about the input of the remaining party. Omitting the channel use index, let $S_k = \Theta_k$ for $k\in [K]$, $T=\Theta_0$. Meanwhile, let the input to the channel be $X_k = U_k+\overline{X}_k$ for $k\in [K]$, where $\overline{X}\in \mathbb{F}_q$, the output of the channel is then
\begin{align}
	Y = \overline{X}_1+\cdots + \overline{X}_K + U_1+\cdots +U_K  + f(\Theta_0,\cdots, \Theta_K) + Z.
\end{align}
The Rx subtracts $V$ from $Y$ to obtain $\overline{Y} = Y-V = \overline{X}_1+\cdots + \overline{X}_K +Z$. The resulting channel with inputs $\overline{X}_1,\cdots, \overline{X}_K$ and output $\overline{Y}$ has the form in \eqref{eq:def_DMMAC_no_interference}. Therefore, any classical coding scheme for the channel without interference can be applied in the converted channel to achieve the same rate tuple.
\end{proof}

\section{Conclusion}\label{sec:conclusion}
The discovery of a $K$-fold increase in the high-SNR Shannon capacity (DoF) of a wireless network due to NS-assistance, leads to many follow up questions, such as --  what other wireless network settings can benefit  significantly from NS-assistance? is channel uncertainty a critical requirement for such settings? and how much of this capacity improvement is achievable with quantum resources?  In particular, the $K$-fold improvement via NS-assistance leaves  hope that even if quantum resources are able to recover a \emph{fraction} of this improvement, that could still be quite significant. Unlike NS-assisted capacity which can be formulated and studied through purely classical information theoretic tools, finding the quantum-assisted capacity would require quantum-theoretic analysis, which is left to future works.

\appendix
\section{Proof of Theorem \ref{thm:same_marginal}}\label{sec:proofsamemarginal}
We have two $K$-user broadcast channels $\mathcal{N}_{Y_1\cdots Y_K\mid X}$ and $\widetilde{\mathcal{N}}_{Y_1\cdots Y_K\mid X}$, such that the marginal distributions $\mathcal{N}_{Y_k\mid X} = \widetilde{\mathcal{N}}_{Y_k\mid X}$ for all $k\in [K]$. Suppose we are given $\mathcal{Z}$ as the NS box for a NS coding scheme over $n$ channel uses. 
Recall that for $k\in [K]$, $P_{e,k}$ is the error probability for the $k^{th}$ message. Say $P_{e,k} = \epsilon_k$ if $\mathcal{Z}$ is applied to channel $\mathcal{N}_{Y_1\cdots Y_K\mid X}$, and  $P_{e,k} = \widetilde{\epsilon}_k$ if $\mathcal{Z}$ is applied to channel $\widetilde{\mathcal{N}}_{Y_1\cdots Y_K\mid X}$. In the following we prove that $\epsilon_k = \widetilde{\epsilon}_k, \forall k\in [K]$, thus showing the two BCs have the same NS-assisted capacity region. Without loss of generality, we prove this for $k=1$.

Since $\mathcal{Z}$ is NS, we can define,
\begin{align}
	&\sum_{\hat{w}_2,\cdots, \hat{w}_K} \mathcal{Z}(x^{[n]}, w_{1},\hat{w}_2,\cdots,\hat{w}_{K}  \mid  [w_1,w_2,\cdots,w_K], y_{1}^{[n]},y_{2}^{[n]},\cdots,y_{K}^{[n]}) \notag \\
	& ~~~~~~~~~~\triangleq
	\mathcal{Z}_{01}(x^{[n]}, w_{1}   \mid  [w_1,w_2,\cdots,w_K], y_{1}^{[n]} ),
\end{align}
as the result does not depend on $y_2^{[n]},\cdots, y_K^{[n]}$.
Then according to \eqref{eq:NS_joint_prob}, 
{\small
\begin{align}
	&1-\epsilon_1 = \sum_{w_1\in \mathcal{M}_1} \Pr(W_1=w_1,\hat{W}_1=w_1) \\
	&= \sum_{w_{[K]},\hat{w}_{[K]\setminus\{1\}},x^{[n]}, y_{[K]}^{[n]}}\Pr(W_1 = \widehat{W}_1=w_1, W_{[K]\setminus \{1\}}=w_{[K]\setminus \{1\}}, \widehat{W}_{[K]\setminus \{1\}}=\hat{w}_{[K]\setminus \{1\}}, X^{[n]}= x^{[n]}, Y_{[K]}^{[n]} = y_{[K]}^{[n]} )   \\
	&= \frac{1}{\prod_{k=1}^K|\mathcal{M}_k|} \sum_{w_{[K]},x^{[n]},y^{[n]}_{1}} \sum_{y^{[n]}_{2}, \cdots, y^{[n]}_{K}} \sum_{\hat{w}_2,\cdots, \hat{w}_K} \mathcal{Z}(x^{[n]}, w_{1},\hat{w}_2,\cdots,\hat{w}_{K}  \mid  [w_1,w_2,\cdots,w_K], y_{1}^{[n]},y_{2}^{[n]},\cdots,y_{K}^{[n]}) \notag \\
	&\hspace{8cm} \times \mathcal{N}_{Y_1\cdots Y_K\mid X}^{\otimes n}(y_{1}^{[n]},y_{2}^{[n]},\cdots,y_{K}^{[n]} \mid x^{[n]}) \\
	&= \frac{1}{\prod_{k=1}^K|\mathcal{M}_k|} \sum_{w_{[K]},x^{[n]},y^{[n]}_{1}}   \mathcal{Z}_{01}(x^{[n]}, w_{1}   \mid  [w_1,w_2,\cdots,w_K], y_{1}^{[n]})  \notag \\
	&\hspace{8cm} \times \sum_{y^{[n]}_{2}, \cdots, y^{[n]}_{K}}  \mathcal{N}_{Y_1\cdots Y_K\mid X}^{\otimes n}(y_{1}^{[n]},y_{2}^{[n]},\cdots,y_{K}^{[n]} \mid x^{[n]}) \\
	&= \frac{1}{\prod_{k=1}^K|\mathcal{M}_k|}  \sum_{w_{[K]}, x^{[n]},y^{[n]}_{1} } \mathcal{Z}_{01}(x^{[n]}, w_{1}   \mid  [w_1,w_2,\cdots,w_K], y_{1}^{[n]})  \times \mathcal{N}_{Y_1\mid X}^{\otimes n}(y_{1}^{[n]}  \mid x^{[n]}) \label{eq:same_marginal_last} \\
	&= \frac{1}{\prod_{k=1}^K|\mathcal{M}_k|}  \sum_{w_{[K]}, x^{[n]},y^{[n]}_{1} } \mathcal{Z}_{01}(x^{[n]}, w_{1}   \mid  [w_1,w_2,\cdots,w_K], y_{1}^{[n]})  \times \widetilde{\mathcal{N}}_{Y_1\mid X}^{\otimes n}(y_{1}^{[n]}  \mid x^{[n]}) \label{eq:use_same_marginal_condition} \\
	&=1-\widetilde{\epsilon}_1
\end{align}
}%
where the same marginal condition is used in \eqref{eq:use_same_marginal_condition}. The last step is because the reasoning leading to \eqref{eq:same_marginal_last} also applies starting from $1-\widetilde{\epsilon}_1$ with the channel $\mathcal{N}_{Y_1\cdots Y_K\mid X}$ replaced by $\widetilde{N}_{Y_1\cdots Y_K\mid X}$. This concludes the proof.	

\section{Proof of Theorem \ref{thm:tree_network}: NS-assisted coding} \label{proof:tree_NS}
Let us first argue that without loss of generality we can consider that for Tx-$b$, $b\in [B]$, there is exactly one Rx that is associated with  Tx-$b$. This is argued as follows. Since all Rxs that are associated with a given Tx are statistically equivalent, by the same-marginals property, they can be treated as one (super)-Rx. In other words, if the message for the super-Rx has $d$ DoF, then this $d$ DoF can be arbitrarily allocated to the Rxs corresponding to the super-Rx. Therefore, henceforth let us assume that for $k\in [K]$, Rx-$k$ is associated with Tx-$k$ and that $B=K$. Note that now we have $d_{\Sigma}(\mbox{\footnotesize Tx-$k$}) = d_k$ for $k\in [K]$.

\subsection{Proof of NS converse}
In this section we show that for NS-assisted coding schemes, $d_k\leq 1$ for $k\in [K]$. Let us first prove it for the $\mathbb{F}_q$ model. For $k\in [K]$, consider the channel to Rx-$k$. According to Lemma \ref{lem:p2p_capacity}, the rate for $W_k$ satisfies $$R_k \leq \max_{P_{X_1\cdots X_K}}I(X_1,\cdots, X_K;Y_k, {\bf G})= \max_{P_{X_1\cdots X_K}}I(X_1,\cdots, X_K;Y_k \mid {\bf G}) \leq \log_2 q$$ as $Y_k\in \mathbb{F}_q$. It follows that $d_k \leq 1, \forall k\in [K]$ if $(d_1,\cdots, d_K) \in \mathcal{D}^{\NS}$.
For the real Gaussian model, Lemma \ref{lem:p2p_capacity}, and the classical DoF result \cite{Jafar_FnT} imply that $d_k \leq 1, \forall k\in [K]$ if $(d_1,\cdots, d_K) \in \mathcal{D}^{\NS}$. \hfill \qed

\subsection{Proof of NS achievability: $\mathbb{F}_q$ model}\label{sec:oneproof}
In this section we  show that the rate tuple $(R_1,R_2,\cdots, R_K) = (\log_2 q, \log_2 q,\cdots, \log_2 q)$ is achievable by NS-assisted coding schemes, which also implies the DoF tuple $(d_1,d_2,\cdots, d_K) = (1,1,\cdots, 1)$ is achievable by NS-assisted coding schemes. 
 
Suppose, given the tree graph $\mathcal{T}$, the indices of the Txs, i.e., $0,1,\cdots, K$, are determined by running depth-first-search (DFS) on $\mathcal{T}$. Recall that for $k\in [K]$,  Rx-$k$ is the Rx that is associated with Tx-$k$. This yields a channel connectivity matrix ${\bf M}$, such that all the elements on its main diagonal are $*$, and all the elements above the main diagonal are zeros. To see this, note that $M_{kk} = *$ as Rx-$k$ is connected to Tx-$k$ by definition. Meanwhile, if $j>i$, then Tx-$j$ cannot be an ancestor of Tx-$i$, as ancestor nodes must appear earlier in a DFS. It follows that Rx-$i$ is not connected to Tx-$j$.
 
Now since the diagonal elements of ${\bf M}$ are non-zeros, for $k\in [K]$, Rx-$k$ can normalize its channel coefficient vector by $G_{kk}$ so that the channel matrix ${\bf G}$ has all $1$'s on the main diagonal after this normalization. The remaining non-zero elements of ${\bf G}$ are still independently uniformly distributed over $\mathbb{F}_q^{\times}$. Thus, henceforth in this section we let the channel coefficient matrix be
\begin{align} \label{eq:G_normalized}
{\bf G}^{(\tau)} = 
\begin{bmatrix}
1&0&\cdots&0\\
G_{21}^{(\tau)} &1&0~\cdots&0\\
\vdots&\ddots&\ddots&\vdots\\
G_{K1}^{(\tau)} &\cdots&G_{K,K-1}^{(\tau)} &1
\end{bmatrix},
\end{align}
and we point out that each $G_{kj}^{(\tau)}$ is independently uniformly distributed over $\mathbb{F}_q^{\times}$ if $M_{kj} = *$, and $G_{kj}^{(\tau)} = 0$ if $M_{kj} = 0$ for $k\in \{2,3,\cdots, K\}, j\in [k-1]$.
 
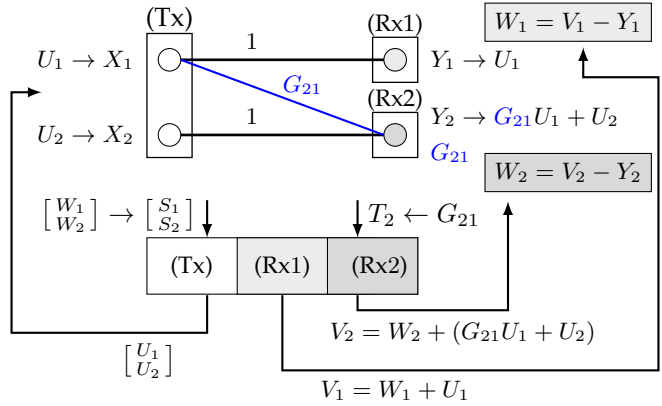
\begin{figure}[!h]
\center
\begin{tikzpicture}
\def \r {0.15}
\def \w {3}
\def \d {1}

\foreach \k in {1,2}{
	\node (T\k) at (0, {-1 * \k * \d}) {};
	\node (R\k) at (\w, {-1 * \k * \d}) {};
}

\foreach \k in {1,2}{
	\draw (T\k) circle (\r) node {};
}

\draw [fill = gray!15] (R1) circle (\r) node {};
\draw [fill = gray!30] (R2) circle (\r) node {};

\draw [line width = 0.5] ({-2*\r},-0.65) rectangle ({2*\r},-2.3);
\foreach \k in {1,2}{
	\draw [line width = 0.5] ({-2*\r + \w}, -1 * \k * \d + 2*\r ) rectangle ({2*\r + \w},-1 * \k * \d - 2*\r);
}
\node [left = 0.2cm of T1] {\footnotesize $U_1 \rightarrow X_1$};
\node [left = 0.2cm of T2] {\footnotesize $U_2\rightarrow X_2$};
\node [right = 0.2cm of R1] {\footnotesize $Y_1 \rightarrow U_1$};
\node [right = 0.2cm of R2, align=left] {\footnotesize $Y_2 \rightarrow  {\color{blue} G_{21}}  U_1+ U_2$\\ \footnotesize ${\color{blue} G_{21}}$};

\node [above=0.1cm of T1] {\small (Tx)};
\node [above=0.05cm of R1] {\footnotesize (Rx1)};
\node [above=0.05cm of R2] {\footnotesize (Rx2)};

\foreach \k in {1,2}{
\draw [line width = 1, black]($(T\k)+(\r,0)$)--($(R\k)-(\r,0)$) node [pos=0.35, above] {\footnotesize $1$};
}
\draw [line width = 0.8, blue]($(T1)+(\r,0)$)--($(R2)-(\r,0)$) node [pos=0.6, above, blue] {\footnotesize $ G_{21}$};

\node [rectangle, minimum height = 0.75cm,  black] (NS) at (1.5, -3.75) {};
\draw[fill=white] ($(NS.north)+(-1.8,0)$) rectangle ($(NS.north)+(1.8,-0.75)$);
\draw[fill=gray!15] ($(NS.north)+(-0.6,0)$) rectangle ($(NS.north)+(1.2,-0.75)$);
\draw[fill=gray!30] ($(NS.north)+(0.6,0)$) rectangle ($(NS.north)+(1.8,-0.75)$);

\node [left=0.65cm of NS] {\footnotesize (Tx)};
\node at (NS) {\footnotesize (Rx$1$)};
\node [right=0.65cm of NS] {\footnotesize (Rx$2$)};

\draw[latex-, thick] ($(NS.north)+(-1,0)$) -- ($(NS.north)+(-1,0.5)$) node [left, pos = 0.6] {\small $\bbsmatrix{W_1\\W_2} \rightarrow \bbsmatrix{S_1\\S_2}$};
\draw[latex-, thick] ($(NS.north)+(1,0)$) -- ($(NS.north)+(1,0.5)$) node [right, pos = 0.6] {\small $T_2 \leftarrow  G_{21} $};

\draw[-latex, black, thick] ($(NS.south)+(-1,0)$) -- ($(NS.south)+(-1,-0.5)$)-- ($(NS.south)+(-3.6,-0.5)$) node [left, pos = 0.3, below] {\small $\bbsmatrix{U_1\\U_2}$} -- ($(NS.south)+(-3.6,2.7)$) -- ($(NS.south)+(-3.2,2.7)$);
\draw[-latex, thick] ($(NS.south)+(0,0)$) -- ($(NS.south)+(0,-1.0)$)--node [below, pos = 0.3]{\footnotesize $V_1=W_1+U_1$} ($(NS.south)+(5,-1)$) -- ($(NS.south)+(5,2.9)$) -- ($(NS.south)+(4,2.9)$)-- ($(NS.south)+(4,3.3)$);

\draw[-latex, thick] ($(NS.south)+(1,0)$) -- ($(NS.south)+(1,-0.2)$)--node [below, pos = 0.7]{\footnotesize $V_2=W_2+(G_{21}U_1+U_2)$}($(NS.south)+(3,-0.2)$) -- ($(NS.south)+(3,1.25)$);

\node at (5.3,-2.5) [draw, rectangle, fill=gray!30]{\footnotesize $W_2 =  V_2-Y_2$};
\node at (5.3,-0.5) [draw, rectangle, fill=gray!15]{\footnotesize $W_1 =  V_1-Y_1$};

\end{tikzpicture}
\caption{$K=2$ case. NS-assisted scheme (shown) achieves capacity $C^{\NS}_\Sigma=2\log_2 q$. In contrast,  the classical capacity $C_\Sigma= \log_2 q+o_q(\log_2(q))$.}
\label{fig:NS_2user}
\end{figure}
The NS coding scheme we present in this section requires only $n=1$ channel-use, allowing us to omit the channel-use indices to simplify notation. The message sets are $\mathcal{M}_k = \mathbb{F}_q$,  $\forall k\in [K]$. 
The scheme requires a $(K+1)$-partite NS box $\mathcal{Z}$, shared across the Tx and the $K$ receivers. Let $S=[S_1,\cdots, S_K]\in \mathbb{F}_q^K$ denote the input at the Tx, and $U = [U_1,\cdots, U_K] \in \mathbb{F}_q^K$ denote the output at the Tx. For $k\in [K]$, let $T_{k} = [T_{k1},\cdots, T_{k,k-1}] \in \mathbb{F}_q^{k-1}$ denote the input to the NS box at Rx-$k$, and $V_k\in \mathbb{F}_q$ denote the output at Rx-$k$. Note that Rx$1$ has only a trivial (constant) input. The box is defined such that, for $S_i =s_i  \in \mathbb{F}_q, \forall i\in [K], T_{j,k} = t_{j,k} \in \mathbb{F}_q, \forall j\in \{2,\cdots, K\}, k\in [j],  U_l = u_l \in \mathbb{F}_q, \forall l\in [K]$ and $V_m =v_m \in \mathbb{F}_q, \forall m\in [K]$,
\begin{align}\label{eq:NSbox_def}
	&\mathcal{Z}\Big([u_1,u_2,\cdots, u_K], ~v_1,~v_2,~\cdots,~v_K \mid \notag \\
	& [s_1,s_2,\cdots, s_K],~ t_{21},~ [t_{31}, t_{32}],~ \cdots,~ [t_{K1},\cdots, t_{K,K-1}]\Big) = \notag \\
	&\begin{cases}
		1/q^K, & \mbox{if}~ \bbsmatrix{v_1\\v_2\\\vdots \\ v_K}
		= \bbsmatrix{s_1\\s_2\\\vdots \\ s_K} +  \bbsmatrix{1&0&\cdots&0\\
	t_{21} &1&0~\cdots&0\\
	\vdots&\ddots&\ddots&\vdots\\
	t_{K1} &\cdots & t_{K,K-1} &1} 
	\bbsmatrix{u_1\\u_2\\\vdots \\u_K}\\
	0 & \mbox{otherwise.}
	\end{cases}.
\end{align}
Before proving that the box is non-signaling, let us first explain how the box is utilized. At the Tx let $[S_1,\cdots, S_K] = [W_1,\cdots, W_K]$. The box outputs $[U_1,\cdots, U_K]$. The Tx sends to the channel $[X_1,\cdots, X_K] = [U_1,\cdots, U_K]$. For $k\in [K]$, each Rx-$k$ sets $[T_{k1}, \cdots, T_{k,k-1}] = [G_{k1}, \cdots, G_{k,k-1}]$, so that the outputs of the NS box at these receivers are (written collectively),
\begin{align} \label{eq:NSbox_output_Rxs}
	\begin{bmatrix}
		V_1\\V_2\\\vdots \\ V_K
	\end{bmatrix}
	&=
	\begin{bmatrix}
		W_1\\W_2\\\vdots \\ W_K
	\end{bmatrix}
	+
	\underbrace{\begin{bmatrix}
	1&0&\cdots&0\\
	G_{21} &1&0~\cdots&0\\
	\vdots&\ddots&\ddots&\vdots\\
	G_{K1} &\cdots & G_{K,K-1} &1
	\end{bmatrix}
	\begin{bmatrix}
		U_1\\U_2\\\vdots \\U_K
	\end{bmatrix}}_{[Y_1,\cdots, Y_K]^\top},
\end{align}
where we made the observation that $V_k=W_k+Y_k, \forall k\in [K]$ with probability $1$, according to \eqref{eq:NSbox_def}.
The decoding at Rx-$k$ is done by subtracting $Y_k$ from $V_k$ since $W_k = V_k - Y_k$ for $k\in [K]$. Therefore, the rate tuple $(\log_2 q, \log_2 q \cdots, \log_2 q)$ is achievable by NS-assisted coding schemes. 

We now show that $\mathcal{Z}$ is non-signaling. According to \cite{barrett2005nonlocal, masanes2006general}, it suffices to verify that $$\sum_{(u_1,\cdots, u_K) \in \mathbb{F}_q^K}\mathcal{Z}(\cdot \mid \cdot)$$ does not depend on $(s_1,\cdots, s_K)$, and that for $k\in [K]$, $$\sum_{v_k \in \mathbb{F}_q}\mathcal{Z}(\cdot \mid \cdot)$$ does not depend on $(t_{k1}, \cdots, t_{k,k-1})$, where $\mathcal{Z}(\cdot \mid \cdot)$ is the shorthand notation for the box distribution \eqref{eq:NSbox_def}. The first condition is verified, because $\sum_{(u_1,\cdots, u_K) \in \mathbb{F}_q^K}\mathcal{Z}(\cdot \mid \cdot) = 1/q^K$ which follows from the following reason: Given $\{s_k\}, \{t_{k,i}\}$ and $\{v_k\}$, since the lower triangular matrix composed of $\{t_{k,i}\}$ in the first condition of \eqref{eq:NSbox_def} always has full rank, there is a unique $[u_1,\cdots, u_K]\in \mathbb{F}_q^K$ for which the first condition of \eqref{eq:NSbox_def} is satisfied. Therefore, only one $[u_1,\cdots, u_K]\in \mathbb{F}_q^K$ yields $\mathcal{Z}(\cdot \mid \cdot) =1/q^K$, and the others terms are equal to $0$.  Next, for $k\in [K]$, given $\{s_k\}, \{t_{k,i}\}$, $\{u_k\}$ and $\{v_{k'}\}_{k'\in [K]\setminus \{k\}}$
\begin{align}
	&\sum_{v_k\in \mathbb{F}_q}\mathcal{Z}(\cdot \mid \cdot)=  \begin{cases}
		1/q^K & ~\mbox{if}~v_{k'} = s_{k'}+\sum_{i=1}^{k'-1}t_{k'i}u_i + u_{k'}, \forall k'\not=k,  \\
		0 &  \mbox{otherwise,}
	\end{cases}\notag
\end{align}
which does not depend on $[t_{k1}, \cdots, t_{k,k-1}]$. This concludes the proof that $\mathcal{Z}$ is non-signaling. \hfill \qed

\subsection{Alternative proof: Successive encoding using bipartite NS boxes}\label{sec:altproof}

The scheme in this subsection requires in total $K-1$ \emph{bipartite} NS boxes, denoted as $\mathcal{Z}_2,\mathcal{Z}_3,\cdots, \mathcal{Z}_K$, with the $k^{th}$ NS box $\mathcal{Z}_k$ shared between the Tx and Rx-$k$, for all $k\in \{2,3,\cdots,K\}$. To serve as visual aids for the following description of the scheme, in addition to the scheme for $K=2$ in Figure \ref{fig:NS_2user_bipartite}, let us provide an explicit solution for $K=3$ in Figure \ref{fig:K3}.

\begin{figure}[!h]
\center
\begin{align*}
&&&\begin{array}{ll}
X_1&=W_1\\
Y_1&=X_1\\
&=W_1
\end{array}\\
\begin{array}{r}\mbox{(Tx, Rx-$2$)}\\
\mbox{NS Box $\mathcal{Z}_2$}
\end{array}  \begin{array}{|c|c|c|}\hline
 \mbox{Party}&\mbox{Tx}&\mbox{Rx-$2$}\\\hline
\mbox{inputs}&X_1&G_{21}\\\hline
\mbox{outputs}&U_2&{\color{blue}G_{21}X_1-U_2}\\\hline
\end{array} &&&
\begin{array}{ll} X_2&=W_2-U_2\\ Y_2&=G_{21}X_1+X_2\\ &={\color{blue}G_{21}X_1-U_2}+W_2 \end{array}\\
\begin{array}{r}\mbox{(Tx, Rx-$3$)}\\
\mbox{NS Box $\mathcal{Z}_3$}
\end{array}  \begin{array}{|c|c|c|}\hline
\mbox{Party}&\mbox{Tx}&\mbox{Rx-$3$}\\\hline
\mbox{inputs}&X_1,X_2&G_{31},G_{32}\\\hline
\mbox{outputs}&U_3&{\color{blue}G_{31}X_1+G_{32}X_2-U_3}\\\hline
\end{array} &&& 
\begin{array}{ll} X_3&=W_3-U_3\\ Y_3&=G_{31}X_1+G_{32}X_2+X_3\\ &={\color{blue}G_{31}X_1+G_{32}X_2-U_3}+W_3\end{array}
\end{align*}
\caption{$K=3$ case, with the use of $K-1=2$ bipartite NS boxes.}\label{fig:K3}
\end{figure}

Let $(S_k, U_k)$ denote the inputs and $(T_k, V_k)$ denote the outputs of $\mathcal{Z}_k$. Note that $(S_k, U_k)_{k\in \{2,\cdots, K\}}$ are with the Tx, whereas $(T_k,V_k)$ is with Rx-$k$ for $k\in [K]$. For vectors $a = (a_1,a_2,\cdots, a_m)\in \mathbb{F}_q^m$ and $b = (b_1,b_2,\cdots, b_m)\in \mathbb{F}_q^m$, we denote $a \cdot b \triangleq \sum_{i=1}^m a_ib_i$ as the ($\mathbb{F}_q$) `inner-product' between $a$ and $b$. For $k\in \{2,3,\cdots, K\}$, $\mathcal{Z}_k$ is defined over input alphabets $\mathcal{S}_{k} = \mathcal{T}_k = \mathbb{F}_q^{k-1}$, output alphabets $\mathcal{U}_{k} = \mathcal{V}_k = \mathbb{F}_q$, and is specified as
\begin{align} \label{eq:inner_product_box}
	&\mathcal{Z}_k(u, v \mid s,t) =
	\begin{cases}
		1/q, & u+v = s\cdot t\\
		0, & u+v \not= s\cdot t
	\end{cases},\\
	& \hspace{3cm} \forall s \in \mathbb{F}_q^{k-1}, t\in \mathbb{F}_q^{k-1}, u\in \mathbb{F}_q, v\in \mathbb{F}_q. \notag
\end{align}
This is also an OTP-box described in \cite{OTPmodel} and therefore it is non-signaling. Note that one can also construct $\mathcal{Z}_k$ by adding the outputs of $k-1$ $\mathbb{F}_q$ PR boxes, similar to the approach used in the van Dam protocol \cite{van2013implausible}. 

Algorithm \ref{alg:finite_field} specifies the inputs to the NS boxes and to the channel.

\begin{algorithm}
\caption{Successive encoding using bipartite NS boxes}\label{alg:finite_field}
\begin{algorithmic}

\State $X_1 \gets W_1$ 
\For{$k \gets 2, 3,\cdots, K$}
	\State $S_k \gets (X_1,X_2,\cdots, X_{k-1})$  \Comment{(Tx obtains $U_k$)}                   
    \State $X_k \gets W_k-U_k$
\EndFor
\algrule
\For{$k \gets 2,3,\cdots, K$}
	\State $T_k \gets (G_{k1}, G_{k2}, \cdots, G_{k,k-1})$ \Comment{(Rx-$k$ obtains $V_k$)}
\EndFor
\end{algorithmic}
\end{algorithm}
The decoding at Rx$1$ is direct as it sees $Y_1=X_1 = W_1$ from the channel. For $k\in \{2,3,\cdots, K\}$, Rx-$k$ subtracts $V_k$ from $Y_k$ and obtains
\begin{align}
	&Y_k-V_k = \underbrace{G_{k1}X_1+G_{k2}X_2+\cdots+ G_{k,k-1}X_{k-1}}_{=S_k\cdot T_k} + X_k - V_k \\
	&= S_k\cdot T_k + W_k - (U_k+V_k) \\
	& = W_k
\end{align}
with certainty, since \eqref{eq:inner_product_box} guarantees $U_k+V_k = S_k \cdot T_k$. Therefore, the rate tuple $(\log_2 q, \log_2 q \cdots, \log_2 q)$ is achievable by NS-assisted coding schemes. 
This proves $C^{\rm NS}\geq K\log_2 q$.

 \hfill \qed

 \begin{remark}[Proof of Theorem \ref{thm:triangle_achi}]\label{rem:triangle_number}
	The NS-assisted coding scheme works as long as ${\bf M}$ has only $\ast$ on the main diagonal. In other words, the scheme works even if $G_{kj}=0$ for some $k\in [K], j\in [k-1]$. Suppose ${\bf M}$ contains a $D\times D$ sub-matrix ${\bf M}'$ which, upon row and column permutations yields a $D\times D$ lower triangle matrix, then for those receivers (say indexed by $\{i_1,i_2,\cdots, i_D\}$) corresponding to the rows of the submatrix, the proof implies that $d_{k} = 1$ is simultaneously achievable for all $k\in \{i_1,i_2,\cdots, i_D\}$ by NS-assisted coding schemes, when considering only the sub-network ${\bf M}'$.
\end{remark}

\subsection{Proof of NS achievability:  Gaussian model}
We again consider the normalized channel matrix ${\bf G}$ as in \eqref{eq:G_normalized}. The  construction is based on the bipartite NS boxes construction for the  $\mathbb{F}_q$ model.
\newcommand{\stP}{\lceil\sqrt{P}\rceil}

Given the power constraint $P$ and for $k\in \{2,3,\cdots, K\}$, define the NS box
\begin{align}
	&\mathcal{Z}_k(u,v\mid s,t) = \begin{cases}
		\frac{1}{\stP},  &u+v = \lfloor  s \cdot t \rfloor \mod \stP \\
		0, & u+v \not= \lfloor  s \cdot t \rfloor \mod \stP
	\end{cases},  \\
	& \hspace{2cm} \forall \notag s\in \mathbb{R}^{k-1}, t\in \mathbb{R}^{k-1}, u\in \{0,1,\cdots, \stP-1\}, v\in \{0,1,\cdots, \stP-1\}.
\end{align}
The box being NS can be verified as $\mathcal{Z}_k(u\mid s,t) = \mathcal{Z}_k(v \mid s,t) = 1/\stP$ for any $u,v, s,t$.

In the following algorithm, for each use of the real Gaussian channel with (input, output) $=$ $(\{X_k\}, \{Y_k\})$, we convert it to another channel with (input, output) denoted as $(\{\overline{X}_k, \overline{Y}_k\})$, where $\overline{X}_k \in \{0,1,\cdots, \stP-1\}$ and $\overline{Y}_k \in [0,\stP)$ for $k\in [K]$. Since the same algorithm works for every channel use, we omit the channel-use indices.

\begin{algorithm}
\caption{Channel conversion}\label{alg:guassian}
\begin{algorithmic}
\State $X_1 \gets \overline{X}_1$ 
\For{$k \gets 2, 3,\cdots, K$}
	\State $S_k \gets (X_1,X_2,\cdots, X_{k-1})$  \Comment{(Tx obtains $U_k$ from $\mathcal{Z}_k$)}                   
    \State $X_k \gets \overline{X}_k-U_k \mod \stP$ 
\EndFor
\algrule
\For{$k \gets 2,3,\cdots, K$}
	\State $T_k \gets (G_{k1}, G_{k2}, \cdots, G_{k,k-1})$ \Comment{(Rx-$k$ obtains $V_k$ from $\mathcal{Z}_k$)}
\EndFor
\algrule
\State $\overline{Y}_1 \gets Y_1 \mod \stP$ 
\For{$k \gets 2,3,\cdots, K$}
	\State $\overline{Y}_k \gets (Y_k - V_k) \mod \stP$
\EndFor \Comment{(Rx-$k$ obtains $\overline{Y}_k, \forall k\in [K]$) }
\end{algorithmic}
\end{algorithm}
Note that the input to the channel, $X_k$, is always in $\{0,1,\cdots, \stP-1\}$, thus the power constraint is satisfied.
Following Algorithm \ref{alg:guassian}, we obtain that
\begin{align}
	\overline{Y}_1 = \overline{X}_1 + Z_1 \mod \stP
\end{align}
and that for $k\in \{2,3,\cdots, K\}$,
\begin{align}
	\overline{Y}_k &= Y_k - V_k \mod \stP \\
	&= G_{k1}X_1+\cdots+G_{k,k-1}X_{k-1} + X_k + Z_k - V_k \mod \stP \\
	&= \overline{X}_k + Z_k + G_{k1}X_1+\cdots+G_{k,k-1}X_{k-1}  - U_k- V_k  \mod \stP \\
	&= \overline{X}_k + Z_k + G_{k1}X_1+\cdots+G_{k,k-1}X_{k-1}  - \lfloor S_k \cdot T_k \rfloor \mod \stP \\
	&= \overline{X}_k + Z_k +  S_k \cdot T_k  - \lfloor S_k \cdot T_k \rfloor \mod \stP \\
	&\triangleq \overline{X}_k + \underbrace{Z_k + \widetilde{Z}_k}_{\overline{Z}_k} \mod \stP
\end{align}
where $\widetilde{Z}_k$ is a random variable distributed over $[0,1)$. 
The variance of $\widetilde{Z}_k$ is upper bounded by $1/4$, 
the variance of $Z_k$ is upper bounded by $1$,
and thus the variance of $\overline{Z}_k$ is upper bounded by $9/4$. Now for the converted channel, for $k\in [K]$, we obtain that if $P_{\overline{X}_k}$ is the uniform distribution over $\{0,1,\cdots, \stP-1\}$,
\begin{align}
	&I(\overline{X}_k; \overline{Y}_k) \notag \\
	&\geq I(\overline{X}_k; \lfloor \overline{Y}_k \rfloor) \\
	&= H(\lfloor \overline{Y}_k \rfloor) - H(\lfloor \overline{Y}_k \rfloor \mid \overline{X}_k) \\
	&= \log_2 \stP- H(\lfloor \overline{Y}_k \rfloor \mid \overline{X}_k)
\end{align}
and that
\begin{align}
	&H(\lfloor \overline{Y}_k \rfloor \mid \overline{X}_k) \notag \\
	&=H(\overline{X}_k + \lfloor \overline{Z}_k \rfloor \mod \stP \mid \overline{X}_k) \label{eq:achi_floor_mod} \\
	&\leq H(\lfloor \overline{Z}_k \rfloor) \\
	&= o_P(\log_2 P) \label{eq:achi_bounded_variance}
\end{align}
where Step \eqref{eq:achi_floor_mod} is because $\lfloor a \mod d\rfloor = \lfloor a \rfloor \mod d$ and $\overline{X}_k$ is an integer. Step \eqref{eq:achi_bounded_variance} is because $\overline{Z}_k$ has bounded variance (which does not depend on $P$)  and for a discrete random variable with bounded variance, the entropy is also upper bounded (by a constant that does not depend on $P$) \cite{agostini2019discrete}. 
Therefore, for $k\in [K]$, we have a converted channel that has input $\overline{X}_k$ and output $\overline{Y}_k$ at Rx-$k$. Note that these $K$ converted channels operate independently. It follows that the DoF for the $k^{th}$ message, $$d_k^{\NS} = \lim_{P\to \infty}\frac{\log_2 \stP - o_P(\log_2 P)}{\frac{1}{2}\log_2 P} = 1,$$ is achievable by NS-assisted coding schemes simultaneously for all $k\in [K]$. This completes the proof. \hfill \qed

\section{Proof of Theorem \ref{thm:tree_network}: Classical coding} \label{proof:tree_classical}
Let us again assume that $B=K$ and that there is a unique Rx-$k$ that is associated with Tx-$k$ for $k\in [K]$, as is argued in Appendix \ref{proof:tree_NS} for NS-assisted coding schemes.
\subsection{Proof of classical converse: AIS bound}\label{sec:AIS}
The high-level idea of the proof is to identify a subset of transmit antennas and receivers with connectivity as shown in Fig. \ref{fig:KuserMISOBC_finite_field}, and to prove that the sum-DoF value for this subset of receivers is upper bounded by $1$, by adapting the Aligned Image Sets bound in \cite{davoodi2016aligned}, which was previously established for a wireless network with the same connectivity. The bound thus obtained corresponds to a bound specifying the region $\mathcal{D}$ in \eqref{eq:tree_region_classical}.
Let $(\mbox{Tx-}0, \mbox{Tx-}b_1, \cdots, \mbox{Tx-}b_{L})$ be any root-to-leaf path of the tree graph that describes the channel connectivity. In this subsection we shall show that $d_{b_1} + d_{b_2} + \cdots + d_{b_L} \leq 1$. 
Without loss of generality, let $b_1 = 1, b_2=2, \cdots, b_{L} = L$.  Consider only the first $L$ receivers, Rx-$1$, Rx-$2$, $\cdots$, Rx-$L$. Since these $L$  receivers  correspond to Txs that  lie on a root-to-leaf path, they  are only connected to  transmit antennas indexed by $(1,2,\cdots, L)$. Specifically, consider the connectivity matrix ${\bf M}'\in \{0,*\}^{L \times L}$ corresponding to the $L$ receivers and the $L$ transmit antennas, which is
\begin{align} \label{eq:triangular_M}
	{\bf M}'  = \begin{bmatrix}
		\ast & 0 & \cdots & 0 \\ \ast & \ast & \cdots & 0 \\ \vdots	& \vdots & \ddots & \vdots \\ \ast & \ast & \cdots & \ast
	\end{bmatrix}^{L \times L}.
\end{align}
Note that these $L$ transmit antennas and the $L$ receivers together form the following reduced CoMP BC network, as illustrated in Figure \ref{fig:KuserMISOBC_finite_field}.
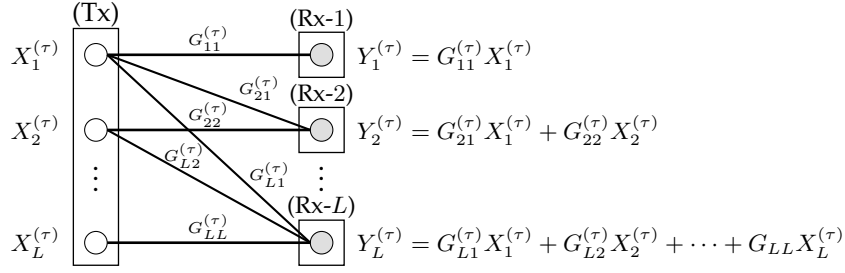
\begin{figure}[!thbp]
\center
\begin{tikzpicture}
\def \r {0.15}
\def \w {3}
\def \d {1}

\foreach \k in {1,2}{
	\node (T\k) at (0, {-1 * \k * \d}) {};
	\node (R\k) at (\w, {-1 * \k * \d}) {};
}
\node (T3) at (0, {-1 * 3.5 * \d}) {};
\node (R3) at (\w, {-1 * 3.5 * \d}) {};
\foreach \k in {1,2}{
	\draw (T\k) circle (\r) node {};
	\draw [fill = gray!25] (R\k) circle (\r) node {};
}
\draw (T3) circle (\r) node {};
\draw [fill = gray!25] (R3) circle (\r) node {};
\draw [line width = 0.5] ({-2*\r},-0.65) rectangle ({2*\r},-3.8);
\foreach \k in {1,2}{
	\draw [line width = 0.5] ({-2*\r + \w}, -1 * \k * \d + 2*\r ) rectangle ({2*\r + \w},-1 * \k * \d - 2*\r);
}
\draw [line width = 0.5] ({-2*\r + \w}, -1 * 3.5 * \d + 2*\r ) rectangle ({2*\r + \w},-1 * 3.5 * \d - 2*\r);
\foreach \k in {1,2}{
	\node [left = 0.2cm of T\k] {\footnotesize $X_\k^{(\tau)}$};
}
\node [left = 0.2cm of T3] {\footnotesize $X_{L}^{(\tau)}$};
\node [right = 0.2cm of R1] {\footnotesize $Y_1^{(\tau)} = G_{11}^{(\tau)}X_1^{(\tau)}$};
\node [right = 0.2cm of R2] {\footnotesize $Y_2^{(\tau)} = {G_{21}^{(\tau)}}X_1^{(\tau)}+G_{22}^{(\tau)}X_2^{(\tau)}$};
\node [right = 0.2cm of R3, align=center] {\footnotesize $Y_{L}^{(\tau)} = { G_{{L}1}^{(\tau)}}X_1^{(\tau)}+{ G_{{L}2}^{(\tau)}}X_2^{(\tau)}+ \cdots + G_{{L}{L}} X_{L}^{(\tau)}$};
\node [below = 0cm of T2] {$\vdots$};
\node [below = 0cm of R2] {$\vdots$};

\node [above=0.1cm of T1] {\small (Tx)};
\node [above=0.05cm of R1] {\footnotesize (Rx-$1$)};
\node [above=0.05cm of R2] {\footnotesize (Rx-$2$)};
\node [above=0.05cm of R3] {\footnotesize (Rx-$L$)};
 
\draw [line width = 1, black]($(T1)+(\r,0)$)--($(R1)-(\r,0)$) node [pos=0.5, above=-0.1] {\tiny $G_{11}^{(\tau)}$};
\draw [line width = 1, black]($(T2)+(\r,0)$)--($(R2)-(\r,0)$) node [pos=0.5, above=-0.1] {\tiny $G_{22}^{(\tau)}$};
\draw [line width = 1, black]($(T3)+(\r,0)$)--($(R3)-(\r,0)$) node [pos=0.5, above=-0.1] {\tiny $G_{L L}^{(\tau)}$};

\draw [line width = 0.8, black]($(T1)+(\r,0)$)--($(R2)-(\r,0)$) node [pos=0.75, above, black] {\tiny $G_{21}^{(\tau)}$};
\draw [line width = 0.8, black]($(T1)+(\r,0)$)--($(R3)-(\r,0)$) node [pos=0.8, above=0.1, black] {\tiny $G_{{L}1}^{(\tau)}$};
\draw [line width = 0.8, black]($(T2)+(\r,0)$)--($(R3)-(\r,0)$) node [pos=0.38, above=-0.08, black] {\tiny $G_{{L}2}^{(\tau)}$};

\end{tikzpicture}
\caption{Reduced CoMP BC network for the  $L$ transmit antennas and receivers.}
\label{fig:KuserMISOBC_finite_field}
\end{figure}

Reference \cite{davoodi2016aligned} shows that for the $L$-user MISO BC channel with the connectivity shown in Figure \ref{fig:KuserMISOBC_finite_field}, the sum-DoF is upper bounded by $1$ for the Gaussian model, thus proving the desired bound $d_1+d_2+\cdots+d_L \leq 1$. In the following we prove the same bound for the $\mathbb{F}_q$ model, by adapting the proof of \cite{davoodi2016aligned} to the $\mathbb{F}_q$ model.

Given any achievable rate tuple $(R_1,\cdots, R_{L})$ with classical coding schemes, there exists a sequence (indexed by $n$) of classical coding schemes satisfying  \eqref{eq:criteria1},\eqref{eq:criteria2}  where the $n^{th}$ scheme has message sets $(\mathcal{M}_1^{(n)}, \cdots, \mathcal{M}_{L}^{(n)})$. Fano's inequality implies,
	$\log_2|\mathcal{M}_k^{(n)}|  \leq I(W_k;\widehat{W}_k) + o_n(n), ~\forall k\in [{L}].$
Therefore, for the desired bound, it suffices to show
\begin{align} \label{eq:lim_n_sum_rate}
	\lim_{n\to \infty} \frac{1}{n}\sum_{k=1}^{L} I(W_k;\widehat{W}_k) \leq \log_2 q + o_q(\log_2 q).
\end{align}
\noindent In the following, $A^{[n]}+B^{[n]} = [A^{(1)}+B^{(1)},\cdots, A^{(n)}+B^{(n)}]$ and $A^{[n]} B^{[n]} = [A^{(1)}B^{(1)},\cdots, A^{(n)}B^{(n)}]$.  We will make frequent use of the property that in any linear combination of entropies, e.g., $\mathcal{L}=H(A|B,V)+H(C|V)-H(D|V)$, all of which include some random variable $V$ in their conditioning, there exist realizations of $V$, say $v_1,v_2$ such that by fixing $V$ at those realizations we obtain, respectively, a lower and an upper bound on $\mathcal{L}$, i.e., 
\begin{align}
\mathcal{L}\geq H(A|B,V=v_1)+H(C|V=v_1)-H(D|V=v_1),\label{eq:geqH}\\
\mathcal{L}\leq  H(A|B,V=v_2)+H(C|V=v_2)-H(D|V=v_2).\label{eq:leqH}
\end{align} 
The  property holds simply because in any average, there must exist an instance that is not smaller than the average, and an instance that is not larger than the average.  

\vspace{0.3cm}

\noindent{\it [Difference of conditional entropies]}:
\begin{align}
	&\sum_{k=1}^{L} I(W_k;\widehat{W}_k) \leq \sum_{k=1}^{L} I(W_k; Y_k^{[n]}, {\bf G}^{[n]}) \label{eq:conv_data_processing} \\
	&  \leq \sum_{k=1}^{L} I(W_k; Y_k^{[n]} \mid {\bf G}^{[n]}, W_{k+1},\cdots, W_{L})   \label{eq:conv_G_ind_W} \\
	&\leq n\log_2 q + \sum_{k=2}^{L} \Big(H(Y_{k-1}^{[n]}\mid {\bf G}^{[n]}, W_k,\cdots, W_{L})  - H(Y_{k}^{[n]}\mid {\bf G}^{[n]}, W_k, \cdots, W_{L})\Big).\label{eq:conv_alphabet_bound}
\end{align}
Step \eqref{eq:conv_data_processing} is by the data processing inequality. Step \eqref{eq:conv_G_ind_W} is because $W_1, W_2, \cdots, W_{L}$ and ${\bf G}^{[n]}$ are mutually independent. \eqref{eq:conv_alphabet_bound} is because $H(Y_{L}^{[n]})\leq n\log_2 q$. Similar to \cite{davoodi2016aligned}, we argue that for each $k\in \{2,3,\cdots,{L}\}$, there exists $(w_k,\cdots, w_{{L}}) \in \mathcal{M}_k\times \cdots \times \mathcal{M}_{L}$ such that
\begin{align}
	&  H(Y_{k-1}^{[n]}\mid {\bf G}^{[n]}, W_k,\cdots, W_{L}) - H(Y_{k}^{[n]}\mid {\bf G}^{[n]}, W_k, \cdots, W_{L})  \notag\\
	&\leq H(Y_{k-1}^{[n]} \mid {\bf G}^{[n]}, W_k=w_k,\cdots, W_{L}=w_{L}) \notag \\
	&\hspace{1cm}- H(Y_k^{[n]} \mid {\bf G}^{[n]}, W_k=w_k,\cdots, W_{L}=w_{L}) \hspace{1cm}\mbox{($\because$ \eqref{eq:leqH})}\\
	&\leq \underbrace{\max  \big( H(Y_{k-1}^{[n]} \mid {\bf G}^{[n]}) - H(Y_{k}^{[n]} \mid {\bf G}^{[n]}) \big)}_{\triangleq \Delta_k} \label{eq:conv_difference}
\end{align}
where for the term $\Delta_k$, the maximum is taken over all distributions $P_{X_1^{[n]}\cdots X_L^{[n]}}$ defined on $(\mathbb{F}_q^n)^L$. 
We will prove that {\bf for each} $k\in \{2,3,\cdots, {L}\}$, we have 
\begin{align} \label{eq:conv_to_prove_Delta_k}
	\Delta_k \leq n o_q (\log_2 q)
\end{align}
so that we conclude $\sum_{k=1}^{L} I(W_k;\widehat{W}_k)/n$ $ \leq  \log_2q +  o_q (\log_2 q)$, and thus prove \eqref{eq:lim_n_sum_rate}.

\vspace{0.3cm}
 
Define $\overline{\bf G}^{[n]} \triangleq ({\bf G}_1^{[n]},\cdots, {\bf G}_{k-1}^{[n]}, {\bf G}_{k+1}^{[n]}, \cdots, {\bf G}_{L}^{[n]})$ as the collection of channel coefficients except for those of Rx-$k$. By \eqref{eq:leqH}, there exists $\overline{\bf g}^{[n]}$ such that, 
\begin{align}
	\eqref{eq:conv_difference} \leq \max_{P_{X_1^{[n]}\cdots X_{L}^{[n]}}} \big( H(Y_{k-1}^{[n]} \mid {\bf G}_k^{[n]}, \overline{\bf G}^{[n]}=\overline{\bf g}^{[n]}) - H(Y_{k}^{[n]} \mid {\bf G}_k^{[n]}, \overline{\bf G}^{[n]} = \overline{\bf g}^{[n]}) \big). \label{eq:conv_difference_alter}
\end{align}
We will proceed conditioned on the event $\overline{\bf G}^{[n]} = \overline{\bf g}^{[n]}$ for the remainder of the proof. Equivalently, for the sake of a compact notation, $\overline{\bf G}^{[n]} = \overline{\bf g}^{[n]}$ is treated as a constant in the remainder of the proof, i.e., the conditioning on the event $\overline{\bf G}^{[n]} = \overline{\bf g}^{[n]}$ will no longer be explicitly specified. Note that this also means that we allow the input distribution $P_{X_1^{[n]}\cdots X_{L}^{[n]}}$  to be optimized for this particular realization $\overline{\bf G}^{[n]} = \overline{\bf g}^{[n]}$. Intuitively, this amounts to giving the Tx the knowledge of the realization of these coefficients, which it can use to optimize its coding scheme. What is crucial is that the channel coefficients associated with Rx-$k$ remain random and unknown to the Tx. 

Since $\overline{\bf G}^{[n]} = \overline{\bf g}^{[n]}$ is determined, note that $Y_{k-1}^{[n]}$ is now a known function of $X_1^{[n]},\cdots, X_{L}^{[n]}$. In general for random variables $A,B$, if $A$ is a known function of $B$, say $A=f(B)$, then there is a one-to-one correspondence between the distribution $P_B$ and  $P_{A,B}=P_AP_{B\mid A}$. Furthermore, by functional representation lemma \cite[Page 626]{NIT}, there exists a function $\phi$ such that the distribution $P_{B\mid A}$ can be simulated as $B=\phi(A,\Xi)$, where $\Xi\sim\mbox{Uniform}(0,1)$ is independent of $A$. Optimizing over the distribution $P_B$ is then equivalent to optimizing over $(P_A,\phi)$. Applying this principle to our setting with $A=Y_{k-1}^{[n]}$ and $B=(X_1^{[n]},\cdots, X_{L}^{[n]})$, there exists a  function $\phi$ such that $(X_1^{[n]},\cdots, X_{L}^{[n]})=\phi(Y_{k-1}^{[n]}, \Xi)$. Equivalently, there exist functions $\phi_l, l\in[L]$ such that $X_l^{[n]} = \phi_l(Y_{k-1}^{[n]}, \Xi)$. Optimizing over $P_{X_1^{[n]}\cdots X_{L}^{[n]}}$ is now equivalent to optimizing over $(P_{Y_{k-1}^{[n]}},\phi_1,\cdots,\phi_L)$. With this representation, we proceed,
\begin{align}
	&\eqref{eq:conv_difference_alter}  \leq \max_{P_{Y_{k-1}^{[n]}},\{\phi_l\}}  \big( H(Y_{k-1}^{[n]}\mid {\bf G}_k^{[n]})  - H(Y_{k}^{[n]}\mid {\bf G}_k^{[n]})\big) \label{eq:conv_deterministic_mappings_start}  \\
	&\leq \max_{P_{Y_{k-1}^{[n]}},\{\phi_l\}} \big( H(Y_{k-1}^{[n]} \mid {\bf G}_k^{[n]} ) -   H( Y_k^{[n]} \mid {\bf G}_k^{[n]}, \Xi )\big) \\
	&\leq \max_{P_{Y_{k-1}^{[n]}},\{\phi_l\}} \big( H(Y_{k-1}^{[n]} \mid {\bf G}_k^{[n]} ) -   H( Y_k^{[n]} \mid {\bf G}_k^{[n]}, \Xi=\xi^* )\big), ~~\exists \xi^* \in [0,1]\hspace{1cm}\mbox{($\because$ \eqref{eq:geqH})}\label{eq:existsxi} \\
	&= \max_{P_{Y_{k-1}^{[n]}},\{\psi_l\}} \big( H(Y_{k-1}^{[n]} \mid {\bf G}_k^{[n]} ) -   H( Y_k^{[n]} \mid {\bf G}_k^{[n]} )\big) \label{eq:conv_deterministic_mappings} \\
	&\leq \max_{P_{Y_{k-1}^{[n]}},\{\psi_l\}} H(Y_{k-1}^{[n]}\mid Y_{k}^{[n]}, {\bf G}_k^{[n]}) \\
	&=\max_{P_{Y_{k-1}^{[n]}},\{\psi_l\}} H\big(Y_{k-1}^{[n]} \mid \underbrace{G_{k1}^{[n]}X_1^{[n]} + \cdots + G_{kk}^{[n]} X_{k}^{[n]}}_{ Y_k^{[n]}\triangleq~\chi(Y_{k-1}^{[n]}, {\bf G}_k^{[n]})}, {\bf G}_k^{[n]}\big) \label{eq:def_Yk_function}
\end{align}
In Step \eqref{eq:conv_deterministic_mappings} we define $X_l^{[n]} = \phi_l(Y_{k-1}^{[n]}, \xi^*)\triangleq \psi_l(Y_{k-1}^{[n]})$ for $l\in [{L}]$. Since all $X_l^{[n]}$ are just functions of $Y_{k-1}^{[n]}$, note that $Y_k^{[n]}$ is now a function of $(Y_{k-1}^{[n]},{\bf G}_k^{[n]})$, as we note explicitly in \eqref{eq:def_Yk_function}.
Note that the function $\chi\colon \mathbb{F}_q^n\times ({\mathbb{F}_q^\times}^n)^k \to \mathbb{F}_q^n$ in \eqref{eq:def_Yk_function} is defined as 
\begin{align}
	\chi(y_{k-1}^{[n]}, {\bf g}_k^{[n]}) = g_{k1}^{[n]}\psi_1(y_{k-1}^{[n]}) + \cdots + g_{kk}^{[n]}\psi_k(y_{k-1}^{[n]}). \label{eq:def_chi_function}
\end{align}
In words, $\chi$ specifies how $Y_{k}^{[n]}$ depends on $Y_{k-1}^{[n]}$ and ${\bf G}_k^{[n]}$.

\vspace{0.3cm}

\noindent {\it [Aligned image sets]}: 
For $y_{k-1}^{[n]}\in \mathbb{F}_q^n$ and ${\bf g}_k^{[n]} = (g_{k1}^{[n]},\cdots, g_{kk}^{[n]}) \in ({\mathbb{F}_q^\times}^n)^{k}$,  define the aligned image set (AIS) \cite{davoodi2016aligned} as
\begin{align} \label{eq:def_AIS}
	&\mathcal{S} (y_{k-1}^{[n]}, {\bf g}_k^{[n]}) \triangleq \big\{\gamma \in \mathbb{F}_q^n \colon \chi(\gamma, {\bf g}_k^{[n]}) = \chi(y_{k-1}^{[n]}, {\bf g}_k^{[n]}) \big\}.
\end{align}

\noindent To continue, let us prove the following lemma on conditional entropy.
\begin{lemma} \label{lem:cond_entropy_on_func}
For a random variable $A$ defined over alphabet $\mathcal{A}$ with distribution $P_A$, we have
	\begin{align}
		H(A\mid f(A)) = \sum_{a\in \mathcal{A}} P_{A}(a) \times H\Big(A~\Big|~ f(A) = f(a)\Big),
	\end{align}
	where $f\colon \mathcal{A}  \to \mathcal{B}$ and $\mathcal{B}$ is a discrete set.
\end{lemma}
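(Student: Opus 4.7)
The plan is to reduce the claimed identity to the standard definition of conditional entropy by partitioning the alphabet $\mathcal{A}$ according to the level sets of $f$. Starting from
\begin{align}
H(A \mid f(A)) = \sum_{b \in \mathcal{B}} P_{f(A)}(b)\, H(A \mid f(A) = b),
\end{align}
I would expand $P_{f(A)}(b) = \sum_{a \in \mathcal{A}: f(a) = b} P_A(a)$, plug this in, and then swap the order of summation. The inner entropy $H(A \mid f(A) = b)$ does not depend on the particular $a$ in the level set $\{a : f(a) = b\}$, so it can be pulled inside the sum over $a$, where we simply replace $b$ by $f(a)$ (since $f(a) = b$ for those $a$'s). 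This gives exactly
\begin{align}
\sum_{b \in \mathcal{B}} \sum_{a : f(a) = b} P_A(a)\, H(A \mid f(A) = f(a)) = \sum_{a \in \mathcal{A}} P_A(a)\, H(A \mid f(A) = f(a)),
\end{align}
where the last equality holds because $\{\{a : f(a) = b\}\}_{b \in \mathcal{B}}$ is a partition of $\mathcal{A}$.

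I would be mildly careful about two small edge cases, though neither poses real difficulty. First, if $P_{f(A)}(b) = 0$ for some $b$, the corresponding term contributes $0$ to the original sum, and correspondingly there are no $a \in \mathcal{A}$ with $P_A(a) > 0$ and $f(a) = b$, so the rewritten sum also contributes nothing. Second, if $\mathcal{B}$ is countably infinite, the interchange of summations is justified by nonnegativity of the summands (Tonelli). Neither issue requires any extra machinery.

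Overall this is essentially a bookkeeping identity: conditional entropy averaged over the image $\mathcal{B}$ is rewritten as the same quantity averaged over the preimage $\mathcal{A}$, weighted by the pushforward relationship $P_A \to P_{f(A)}$. There is no real obstacle; the only thing to state cleanly is that $H(A \mid f(A) = f(a))$ depends on $a$ only through $f(a)$, which is a direct consequence of the definition of conditional entropy.
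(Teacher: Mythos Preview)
Your proposal is correct and follows essentially the same approach as the paper: start from the definition $H(A\mid f(A))=\sum_b P_{f(A)}(b)\,H(A\mid f(A)=b)$, expand $P_{f(A)}(b)$ as $\sum_{a:f(a)=b}P_A(a)$, pull the inner entropy into the sum over $a$ by writing $b=f(a)$, and collapse the double sum using the partition of $\mathcal{A}$ by level sets of $f$. Your added remarks on the zero-probability and countability edge cases are fine but not needed for the paper's purposes.
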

\begin{proof}
	According to definition of conditional entropy, 
	\begin{align}
		H(A\mid f(A)) &= \sum_{b\in \{f(a)\colon a \in \mathcal{A}\}} H(A\mid f(A) = b) \Pr(f(A) = b) \\
		&= \sum_{b\in \{f(a)\colon a \in \mathcal{A}\}} H(A\mid f(A) = b) \sum_{a\colon f(a) = b} P_A(a) \\
		&  = \sum_{b\in \{f(a)\colon a \in \mathcal{A}\}} \sum_{a\colon f(a) = b}P_A(a)\times  H(A\mid f(A) = b)  \\
		&  = \sum_{b\in \{f(a)\colon a \in \mathcal{A}\}} \sum_{a\colon f(a) = b} P_A(a)  \times H(A\mid f(A) = f(a)) \\
		& = \sum_{a \in \mathcal{A}} P_A(a) \times H(A\mid f(A) = f(a))  
	\end{align}
\end{proof}
Next we have
\begin{align}
	&H\big(Y_{k-1}^{[n]} \mid  Y_k^{[n]}, {\bf G}_k^{[n]} \big) \notag \\
	&=\sum_{{\bf g}_k^{[n]} \in ({\mathbb{F}_q^\times}^n)^{k}}P_{{\bf G}_k^{[n]}}({\bf g}_k^{[n]})H\big(Y_{k-1}^{[n]} \mid  Y_k^{[n]}, {\bf G}_k^{[n]} ={\bf g}_k^{[n]}\big)\\
	&= \sum_{{\bf g}_k^{[n]} \in ({\mathbb{F}_q^\times}^n)^{k}}P_{{\bf G}_k^{[n]}}({\bf g}_k^{[n]})H\big(Y_{k-1}^{[n]} \mid  \chi(Y_{k-1}^{[n]}, {\bf g}_k^{[n]}), {\bf G}_k^{[n]} ={\bf g}_k^{[n]}\big) \\
	&=\sum_{{\bf g}_k^{[n]} \in ({\mathbb{F}_q^\times}^n)^{k}}P_{{\bf G}_k^{[n]}}({\bf g}_k^{[n]}) \times \notag \\
	&\hspace{1cm} \sum_{y_{k-1}^{[n]}\in \mathbb{F}_q^n}P_{Y_{k-1}^{[n]}}(y_{k-1}^{[n]}) H\big(Y_{k-1}^{[n]} \mid  \chi(Y_{k-1}^{[n]}, {\bf g}_k^{[n]}) = \chi(y_{k-1}^{[n]}, {\bf g}_k^{[n]}), {\bf G}_k^{[n]} ={\bf g}_k^{[n]}\big) ~~(\because \mbox{Lemma \ref{lem:cond_entropy_on_func}}) \label{eq:conv_use_cond_on_func} \\
	&\leq \sum_{y_{k-1}^{[n]}\in \mathbb{F}_q^n} \sum_{{\bf g}_k^{[n]} \in ({\mathbb{F}_q^\times}^n)^{k}} P_{Y_{k-1}^{[n]}}(y_{k-1}^{[n]})P_{{\bf G}_k^{[n]}}({\bf g}_k^{[n]}) \times \log_2 |\mathcal{S} (y_{k-1}^{[n]}, {\bf g}_k^{[n]})| \label{eq:conv_use_AIS} \\
	& \leq \mathbb{E}\big[ \log_2 |\mathcal{S} (Y_{k-1}^{[n]}, {\bf G}_k^{[n]})| \big]  \\
	& \leq \log_2 \mathbb{E} \big[ |\mathcal{S} (Y_{k-1}^{[n]}, {\bf G}_k^{[n]})| \big] ~~~~\mbox{\it (Jensen's inequality)} \\
	&\leq \log_2 \mathbb{E} \big[ |\mathcal{S} (\overline{y}_{k-1}^{[n]}, {\bf G}_k^{[n]})| \big], ~~~~ \exists \overline{y}_{k-1}^{[n]} \in \mathbb{F}_q^n~~~~(\because \eqref{eq:leqH}) \\
	&=  \log_2 \sum_{y_{k-1}^{[n]}\in \mathbb{F}_q^n} \Pr  \big\{ y_{k-1}^{[n]} \in \mathcal{S}(\overline{y}_{k-1}^{[n]}, {\bf G}_k^{[n]})    \big\} \label{eq:conv_exp_size_random_set} \\
	&= \log_2 \sum_{y_{k-1}^{[n]}\in \mathbb{F}_q^n} \Pr \big\{  {\chi}(y_{k-1}^{[n]}, {\bf G}_k^{[n]}) =  {\chi}(\overline{y}_{k-1}^{[n]}, {\bf G}_k^{[n]})  \big\} \label{eq:conv_probability_to_bound} \\ 
	&= \log_2 \sum_{y_{k-1}^{[n]}\in \mathbb{F}_q^n}\prod_{\tau=1}^n \Pr \big\{  {\chi}(y_{k-1}^{[n]}, {\bf G}_k^{[n]}) =  {\chi}(\overline{y}_{k-1}^{[n]}, {\bf G}_k^{[n]})  \big\} \label{eq:conv_G_time_indep} \\
	&= \log_2 \sum_{y_{k-1}^{[n]}\in \mathbb{F}_q^n} \prod_{\tau=1}^n \Pr \big\{G_{k1}^{(\tau)}(x_1^{(\tau)}-\overline{x}_1^{(\tau)}) +\cdots + G_{kk}^{(\tau)}(x_{k}^{(\tau)}-\overline{x}_{k}^{(\tau)}) =  0 \big\} \label{eq:conv_def_xs} \\
	&\leq \log_2 \sum_{y_{k-1}^{[n]}\in \mathbb{F}_q^n}\prod_{\tau=1}^n  \Big( \frac{1}{q-1}  \mathbb{I}\big( y_{k-1}^{(\tau)} \not= \overline{y}_{k-1}^{(\tau)} \big) + \mathbb{I}\big(y_{k-1}^{(\tau)} = \overline{y}_{k-1}^{(\tau)} \big) \Big) \label{eq:conv_xtoy} \\
	&= \log_2 \prod_{\tau=1}^n \sum_{y_{k-1}^{(\tau)}\in \mathbb{F}_q}  \Big( \frac{1}{q-1}  \mathbb{I}\big( y_{k-1}^{(\tau)} \not= \overline{y}_{k-1}^{(\tau)} \big) + \mathbb{I}\big(y_{k-1}^{(\tau)} = \overline{y}_{k-1}^{(\tau)} \big) \Big) \label{eq:conv_sum_product}  \\
	&\leq \log_2 \prod_{\tau=1}^n 2 \label{eq:conv_2qetaq}  \\
	&=n
\end{align}
Step \eqref{eq:conv_use_cond_on_func} is by applying Lemma \ref{lem:cond_entropy_on_func} by considering $\chi(Y_{k-1}^{[n]}, {\bf g}_k^{[n]})$ as the function $f(Y_{k-1}^{[n]})$. Note that the lemma is applied with the additional condition ${\bf G}_k^{[n]} = {\bf g}_k^{[n]}$. Also note that $Y_{k-1}^{[n]}$ is independent of ${\bf G}_k^{[n]}$ and therefore conditioning on ${\bf G}_k^{[n]} = {\bf g}_k^{[n]}$ does not change the distribution of $Y_{k-1}^{[n]}$, which is always $P_{Y_{k-1}^{[n]}}$.
To see Step \eqref{eq:conv_use_AIS}, first recall the definition of AIS in \eqref{eq:def_AIS}. Then the condition  $\chi(Y_{k-1}^{[n]}, {\bf g}_k^{[n]}) = \chi(y_{k-1}^{[n]}, {\bf g}_k^{[n]})$ implies $Y_{k-1}^{[n]}$ can only take values in the set $\mathcal{S}(y_{k-1}^{[n]}, {\bf g}_k^{[n]})$. The step then follows from the fact that the entropy of any discrete random variable $A\in \mathcal{A}$ must satisfy $H(A) \leq \log_2 |\mathcal{A}|$.
Step \eqref{eq:conv_exp_size_random_set} follows as the expectation of the cardinality of a random set is equal to the sum of the probabilities of each possible element being in the random set. 
Step \eqref{eq:conv_probability_to_bound} is due to the definition of AIS \eqref{eq:def_AIS}, since $\mathcal{S}(\bar{y}_{k-1}^{[n]}, {\bf G}_k^{[n]})$ is precisely the set of those values of $\gamma\in \mathbb{F}_q^n$ that yield $\chi(\gamma, {\bf G}_k^{[n]}) = \chi(\bar{y}_{k-1}^{[n]}, {\bf G}_k^{[n]})$.
Step \eqref{eq:conv_G_time_indep} is because ${\bf G}_k^{(\tau)}$ is independent across $\tau\in [n]$, and thus $Y_k^{(\tau)}$ {(recall that $Y_k^{[n]} \triangleq \chi(y_{k-1}^{[n]}, {\bf G}_k^{[n]})$) is independent across $\tau\in [n]$ for a fixed $y_{k-1}^{[n]}\in \mathbb{F}_q^n$. In Step \eqref{eq:conv_def_xs}, we define $x_l^{[n]} \triangleq \psi_l(y_{k-1}^{[n]})$ and $\overline{x}_l^{[n]} \triangleq \psi_l(\overline{y}_{k-1}^{[n]})$ for $l\in [k]$. In Step \eqref{eq:conv_xtoy}, $\mathbb{I}(X)$ is the indicator function, i.e., it returns $1$ if $X$ is true and $0$ otherwise. 
To see Step \eqref{eq:conv_xtoy}, first note that if $y_{k-1}^{(\tau)} \not= \overline{y}_{k-1}^{(\tau)}$, then $(x_1^{(\tau)},\cdots, x_k^{(\tau)}) \not= (\overline{x}_1^{(\tau)},\cdots, \overline{x}_k^{(\tau)})$. This is because  $(x_1^{(\tau)},\cdots, x_k^{(\tau)}) = (\overline{x}_1^{(\tau)},\cdots, \overline{x}_k^{(\tau)})$ implies $y_{k-1}^{(\tau)} = \overline{y}_{k-1}^{(\tau)}$.
Then note that if $(x_1^{(\tau)},\cdots, x_k^{(\tau)}) \not= (\overline{x}_1^{(\tau)},\cdots, \overline{x}_k^{(\tau)})$, the probability of the event in \eqref{eq:conv_def_xs} is upper bounded by $1/(q-1)$. This is argued as follows. If $\exists i\in [k],x_i^{(\tau)} - \overline{x}_i^{(\tau)} \not=0$, then conditioned on any realization of $(G_{kj}^{(\tau)})_{j\not= i}$, the event in \eqref{eq:conv_def_xs} is a linear equation on $G_{ki}^{(\tau)}$ that has at most one solution (because the coefficient for $G_{ki}^{(\tau)}$ is non-zero), and therefore the probability is upper bounded by $1/(q-1)$, because $G_{ki}^{(\tau)}$ is uniformly distributed over $\mathbb{F}_q^\times$. 
The above argument shows that $\Delta_k \leq n = n o_q(\log_2 q)$. 
This completes the proof of \eqref{eq:conv_to_prove_Delta_k}. \hfill \qed

\begin{remark}
	Note that the AIS bound argument holds if one assumes that the conditional p.m.f. of each  non-zero channel coefficient (given all other channel coefficients) is bounded by $\eta_q$ such that $\lim_{q\to \infty}\frac{\log_2(q\eta_q)}{\log_2 q} = 0$. To see it, in \eqref{eq:conv_xtoy} replace $\frac{1}{q-1}$ with $\eta_q$. Then \eqref{eq:conv_2qetaq} becomes $\log_2 \prod_{\tau=1}^n ((q-1)\eta_q +1)$ which is upper bounded by $\log_2 \prod_{\tau=1}^n (2q\eta_q)$ as $1\leq (q-1)\eta_q$. Then according to the assumption that $\lim_{q\to \infty}\frac{\log_2(q\eta_q)}{\log_2 q} = 0$, we again arrive at $\Delta_k = n o_q (\log_2 q)$.
\end{remark}

\subsection{Proof of classical achievability: TDMA}\label{sec:tdma}
First let us note that the DoF value for any user can be at most $d_k = 1$. This is true for both the $\mathbb{F}_q$ and the real Gaussian model, as is implied by the corresponding point-to-point communication results. In a nutshell, the proof follows from a `graph-burning' argument. Given any DoF tuple $(d_1,\cdots, d_K)$ that satisfies the region in \eqref{eq:tree_region_classical}, consider the progressive burning of the tree graph, starting from the root-node, such that $d_k$ is the amount of time it takes for the vertex Tx-$k$ to burn. Once a vertex is burnt, the fire spreads instantly to all the children of that vertex, whose burning times are determined by their assigned DoF values. The root node takes zero time to burn. The burning pattern yields a TDMA schedule, wherein Tx-$k$ is active only during the time that its corresponding vertex in the tree-graph is burning. By the nature of a tree graph, and how the fire progresses down the tree,  it is easy to see that at any time there can be at most one burning vertex in any path from a leaf node to a root node. This means that of all the transmit antennas that are connected to a receiver, at most one can be active at any time, corresponding to an orthogonal scheduling pattern (TDMA). The time it takes for each path to burn completely is exactly the sum of DoF values of the vertices along that path, which is bounded by $1$ (corresponding to $1$ DoF) according to \eqref{eq:tree_region_classical} for every path. 

Algorithm \ref{alg:TDMA} explicitly specifies for $k\in [K]$ the time interval in which Tx-$k$ is active and is used for serving Rx-$k$ only. The input for the algorithm is any tuple $(d_1,d_2,\cdots, d_K)$ that satisfies $d_{b_1}+d_{b_2}+\cdots+d_{b_L} \leq 1$ for every root-to-leaf path $(\mbox{Tx-}0, \mbox{Tx-}b_1, \cdots, \mbox{Tx-}b_L)$, along with the channel connectivity tree $\mathcal{T} = (\{\mbox{Tx-}0, \mbox{Tx-}1,\cdots, \mbox{Tx-}B\}, \mathcal{E})$. The output is the intervals $I_k \subseteq [0,1]$,  $k\in [K]$, identifying the time interval in which Tx-$k$ is active, and is used for serving only Rx-$k$. An example is illustrated in  Figure \ref{fig:TDMA}.  

\begin{algorithm} 
\caption{TDMA scheduling for the tree network}\label{alg:TDMA}
\renewcommand{\algorithmicrequire}{\textbf{Input:}}
\renewcommand{\algorithmicensure}{\textbf{Output:}}
\begin{algorithmic}
\Require $(d_1,d_2,\cdots, d_K)$, $\mathcal{T}$
\Ensure $(I_1,I_2,\cdots I_K)$
\For{$(\mbox{Tx-}0, \mbox{Tx-}b_1, \cdots, \mbox{Tx-}b_L)$ being a root-to-leaf path}
\State $t \gets 0$
\For{$k \gets b_1, b_2, \cdots, b_L$}
	\State $I_k \gets [t, t+d_k]$
	\State $t \gets t+d_k$
\EndFor
\EndFor
\end{algorithmic}
\end{algorithm}

\begin{figure}[h]
\center
\begin{tikzpicture}
[
    level 1/.style={sibling distance=25mm},
    level 2/.style={sibling distance=25mm},
    level 3/.style={sibling distance=15mm},
]
\node {$(\mbox{Root Tx-}0)$}
		child {
		node {$\mbox{Tx-}1$}
		child {
		node {$\mbox{Tx-}2$}
		child {node {$\mbox{Tx-}3$}} 
		child {node {$\mbox{Tx-}4$}} 
		} 
		child {node {$\mbox{Tx-}5$}
		child {node {$\mbox{Tx-}6$}}
		}
		}
		;
\node at (-3,-2) {$\mathcal{T}=$};
\begin{scope}[shift= {(5,-2.75)}]
	\node [align = left] {\small $I_1 = [0,d_1]$ \\
	\small $I_2 = [d_1, d_1+d_2]$\\
	\small $I_3 = [d_1+d_2, d_1+d_2+d_3]$\\
	\small $I_4 = [d_1+d_2, d_1+d_2+d_4]$\\
	\small $I_5 = [d_1, d_1+d_5]$ \\
	\small $I_6 = [d_1+d_5, d_1+d_5+d_6]$
	};
\end{scope}
\end{tikzpicture}
\caption{An example of a tree network and its TDMA scheduling based on Algorithm \ref{alg:TDMA}.} 
\label{fig:TDMA}
\end{figure}
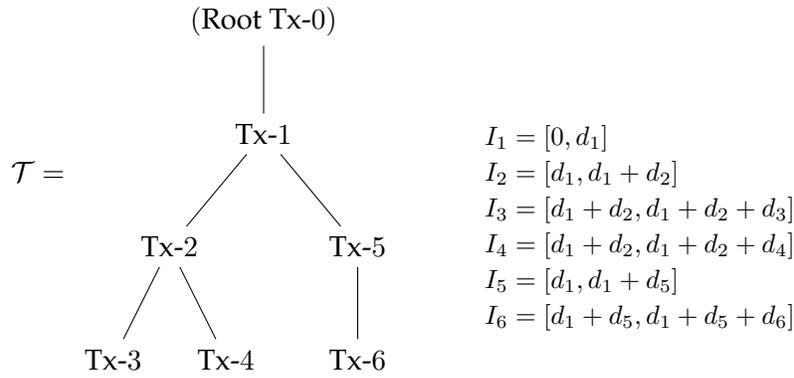

\bibliographystyle{IEEEtran}
\bibliography{../../bib_file/yy.bib}
\end{document}